\newcommand{\tp}[1]{{#1}^{\mathsf T}}
\newcommand{\eps}{\epsilon}
\newcommand{\pa}{\partial}
\renewcommand{\eps}{\varepsilon}
\renewcommand{\epsilon}{\varepsilon}
\renewcommand{\Sigma}{\varSigma}
\newcommand{\Co}{\mathcal{C}}
\newcommand{\Hi}{\mathbb{X}}
\newtheorem{thm}[algorithm]{Theorem}
\newtheorem{cor}[algorithm]{Corollary}
\newtheorem{asump}[algorithm]{Assumption}
\newtheorem{rk}[algorithm]{Remark}
\newcommand{\E}{\mathrm E}
\newcommand{\tr}{\mathrm{tr}}
\DeclareMathAlphabet\mathbfcal{OMS}{cmsy}{b}{n}
\newcommand{\bn}{ {\bf n}}
\newcommand{\bu}{ {\bf u}}
\newcommand{\bZero}{ {\bf 0}}
\newcommand{\bsigma}{ \boldsymbol{\sigma}}
\renewcommand{\div}{{\rm div}}
\begin{document}

\begin{frontmatter}

\title{Geometric MCMC for Infinite-Dimensional Inverse Problems}
%\tnotetext[mytitlenote]{Fully documented templates are available in the elsarticle package on \href{http://www.ctan.org/tex-archive/macros/latex/contrib/elsarticle}{CTAN}.}

%% Group authors per affiliation:
%\author{Elsevier\fnref{myfootnote}}
%\address{Radarweg 29, Amsterdam}
%\fntext[myfootnote]{Since 1880.}

%% or include affiliations in footnotes:

\author[UCL]{Alexandros Beskos}
\ead{a.beskos@ucl.ac.uk}

\author[Uwarwickstats,ATI]{Mark Girolami}
\ead{M.Girolami@warwick.ac.uk}

\author[CIT]{Shiwei Lan\corref{cor}}
\ead{slan@caltech.edu}

\author[Oxford,Simula]{Patrick E. Farrell}
\ead{patrick.farrell@maths.ox.ac.uk}

\author[CIT]{Andrew M. Stuart\corref{cor}}
\ead{A.M.Stuart@warwick.ac.uk}

\cortext[cor]{Corresponding author}

\address[UCL]{Department of Statistical Science, University College London, Gower Street, London, WC1E 6BT, UK}
\address[Uwarwickstats]{Department of Statistics, University of Warwick, Coventry CV4 7AL, UK}
\address[ATI]{The Alan Turing Institute for Data Science, British Library, 96 Euston Road, London, NW1 2DB, UK}
\address[CIT]{Department of Computing + Mathematical Sciences, California Institute of Technology, Pasadena, CA 91125, USA}
\address[Oxford]{Mathematical Institute, University of Oxford, Andrew Wiles Building, Radcliffe Observatory Quarter, Woodstock Road, Oxford, OX2 6GG, UK}
\address[Simula]{Center for Biomedical Computing, Simula Research Laboratory, Martin Linges vei 17, 1364 Fornebu, Norway.}

\begin{abstract}
Bayesian inverse problems often involve sampling posterior distributions on infinite-dimensional function spaces. Traditional Markov chain Monte Carlo (MCMC) algorithms are characterized by deteriorating mixing times upon mesh-refinement, when the finite-dimensional 
approximations become more accurate.
Such methods are typically forced to reduce step-sizes as the discretization gets finer, and thus are expensive as a function 
of dimension. Recently, a new class of MCMC methods with mesh-independent convergence times has emerged.
However, few of them take into account the geometry of the posterior informed by the data. 
At the same time, recently developed geometric MCMC algorithms have been 
found to be powerful in exploring complicated distributions that deviate 
significantly from elliptic Gaussian laws, but are in general computationally
intractable for models defined in infinite dimensions. In this work, we 
combine geometric methods on a finite-dimensional subspace with 
mesh-independent infinite-dimensional approaches. Our objective is to speed 
up MCMC mixing times, without significantly increasing the computational 
cost per step (for instance, in comparison with the vanilla preconditioned Crank-Nicolson 
(pCN) method). This is achieved by using ideas from geometric MCMC to
probe the complex structure of an intrinsic finite-dimensional subspace 
where most data information concentrates, while retaining robust mixing 
times as the dimension grows by using pCN-like methods in the
complementary subspace. The resulting algorithms are demonstrated in
the context of three challenging inverse problems arising in subsurface flow,
heat conduction and incompressible flow control. The algorithms
exhibit up to two orders of magnitude improvement in sampling efficiency 
when compared with the pCN method.
\end{abstract}

\begin{keyword}
Markov Chain Monte Carlo; Local Preconditioning; Infinite Dimensions; Bayesian Inverse Problems; Uncertainty Quantification.
%\MSC[2010] 00-01\sep  99-00
\end{keyword}

\end{frontmatter}

\linenumbers

%%%%%%%%%%%%%%%%%%%%%%%%%%%%%%%%%%%%%%%%%%

\section{Introduction}
In this work we consider Bayesian inverse problems where the objective is to identify an unknown 
function 
parameter $u$ which is an element of a separable Hilbert space $(\Hi, \langle\cdot, \cdot \rangle,\lvert\cdot \rvert)$. 
All probability measures on $\Hi$ in the rest of the paper are assumed to be defined 
on the standard Borel $\sigma$-algebra $\mathcal{B}(\Hi)$.
We are given finite-dimensional observations $y\in \mathbb{Y}=\mathbb{R}^{m}$, for $m\ge 1$, with
$u$ and $y$ being connected via the mapping:
\begin{equation}\label{eq:forward}
y=\mathcal G(u) + \eta\ ,\quad \eta\sim f \ , 
\end{equation}
for some noise distribution $f$, with $u$ representing the unknown parameter 
of a (non-linear) PDE and $\mathcal G:\Hi\mapsto \mathbb{Y}$ the related forward solution operator
for the PDE  mapping $u$ onto the data space~$\mathbb{Y}$.
In a Bayesian setting, a prior measure $\mu_0$ is assigned to $u$. 
With a small abuse of notation, we denote also by $f$ the density (assumed to exist) of the noise distribution
with respect to the Lebesgue measure, thus we define the negative log-likelihood $\Phi:\Hi\times \mathbb{Y}\to \mathbb R$ as:
\begin{equation*}
%\label{eq:lik}
\Phi(u;y)=-\log f\big\{\big(y-\mathcal{G}(u)\big);u\big\} \ , 
\end{equation*}
with $f\{\cdot\,;u\}$ indicating the density function for a given $u$.
Denoting by $\mu^y$ the posterior of $u$, and using Bayes' theorem, 
we get:
\begin{equation*}
%\label{eq:Bayes}
\frac{d\mu^y}{d\mu_0}(u) = \frac{1}{Z}\,\exp(-\Phi(u;y))
\end{equation*}
for a normalising constant $Z=\int_{\Hi} \exp(-\Phi(u;y)) \mu_0(du)$ assumed positive and finite.

In this work we consider a Gaussian prior $\mu_0 = \mathcal N(0,\mathcal C)$ with the covariance $\mathcal C$ being a positive, self-adjoint and trace-class operator on $\Hi$.
Notice that the posterior $\mu^y$ can exhibit strongly non-Gaussian behaviour,
with finite-dimensional projections having complex non-elliptic contours, although the existence of a density with respect to $\mu_0$ does imply near-Gaussianity for appropriate tail components of the target law $\mu^{y}$. 

Sampling from $\mu^y$ in the context of PDE-constrained inverse problems is typically a very challenging 
undertaking due to the high-dimensionality of 
the target, the  non-Gaussianity   of   the posterior and the computational burden of repeated PDE solutions for evaluating the likelihood function at different   parameters. 
It is now well-understood that traditional Metropolis-Hastings algorithms
have deteriorating mixing times upon refinement of the mesh-size used in practice in  
the finite-dimensional projection of parameter $u$.
This has prompted the recent development of a class of `advanced' MCMC methods that avoid this deficiency,
see for instance the line of works in \cite{beskos08, beskos11, cotter13,
law2014, pinski15, rudolf15, cui16}.
The main difference of the new methodology compared to standard Metropolis-Hastings 
is that the algorithms are well-defined on the infinite-dimensional Hilbert space. This yields the important computational benefit of mesh-independent mixing times for the practical finite-dimensional algorithms ran on the computer.

%The main contributions of this work are as follows. 
%i) Algorithmic development. We develop advanced MCMC algorithms, 
%well-defined on the Hilbert space,  
%and tailored to the geometry of the target distribution by employing location-specific 
%step-sizes. We provide high-level conditions so that these methods are well-defined on the Hilbert space. Naive implementation of geometric methods in infinite-dimensions 
%can have prohibitive computational costs \textcolor{red}{(connection to the previous sentences??)}.
%% 
%ii) Algorithmic specification. We focus on a general class of models related to PDE-driven inverse problems,  where the data  mainly inform certains aspects of patameter $u$. 
%Thus, we appropriately design the methods 
%so that they capture critical aspects of the 
%geometry in the posterior without sacrificing 
%too much computations. 
%%In this case, the geometric samplers 
%%can be tuned so that they are computationally efficient.
%iii) Applications. We apply the geometric methods together with other  main MCMC algorithms on a collection of three involved inverse problems and contrast their efficiency.  
%\textcolor{red}{(This paragraph needs to be rewritten.)}

This work makes a number of contributions.
First, we generalize geometric MCMC methods - the simplified Riemannian manifold Metropolis-adjusted Langevin algorithm (MALA) of \cite{girolami11} and a Hamiltonian Monte-Carlo (HMC) extension of it - from finite to infinite dimensions. 
Unlike recent development of geometric methods including Stochastic Newton (SN) MCMC \citep{martin12} and Riemannian manifold Hamiltonian Monte Carlo for large-scale PDE-constrained inverse problems \citep{bui12},
these proposed advanced MCMC algorithms are well-defined on the Hilbert space. They have the capacity to both explore complex probability structures and have robust mixing times in high dimensions.
Our methodology can also be thought of as a generalisation of the operator-weighted proposal of \cite{law2014} or the dimension-independent likelihood 
informed (DILI) MCMC method of \cite{cui16} which exploit the posterior curvature at a fixed point obtained via an optimiser  or through adaptive  averaging over samples; our methodology invokes position dependent
curvatures to allow for more flexible geometric adaptation. We provide high-level conditions and rigorous proofs for the well-posedness of the new methods on infinite-dimensional Hilbert spaces.
Second, we establish connections between MALA- and HMC-type algorithms in the infinite dimensional setting.
HMC algorithms, viewed as multi-step generalizations of their MALA analogues, make big jumps that suppress random-walk behavior
and can provide numerical advantages over MALA 
by substantially  reducing mixing times.
Third, we develop a straightforward dimension reduction methodology which renders the methods highly
effective from a practical viewpoint. 
%Our approach has similarities with the idea of `intrinsic subspace' from DILI \citep{cui16} where a likelihood-informed proposal 
%is used for a finite-dimensional linear subspace
%together with a relatively simple step for its orthogonal complement. 
Our methods aim to adapt to the  local curvature of the target and provide proposals 
which are appropriate for non-linear likelihood-informed subspaces.  
A simpler step is then developed for a complementary subspace obtained 
by truncating the Karhunen-Lo\`eve expansion of the Gaussian prior.
Other such separation methods used in the non-geometric context 
(likelihood informed subspace \citep[LIS][]{cui14} or the active subspace \citep[AS][]{constantine15a}) could potentially be brought into our setting, though this requires further research. 
Lastly, we apply the geometric methods together with other main MCMC algorithms on three challenging inverse problems and contrast their efficiency.
Two elliptic inverse problems, involving a groundwater flow and a thermal fin, aim to infer the coefficients of the elliptic PDEs (representing the permeability of a porous medium and the heat conductivity of a material respectively) from data taken at given locations of the forward solver. The third inverse problem involves an incompressible Navier-Stokes equation, with the objective to infer the inflow velocity given sparse observations from the downstream outlet boundary. To the best of our knowledge, it is the first successful application of geometric MCMC methods to non-linear infinite dimensional
inverse problems and demonstration of their effectiveness in this field.
We should mention here that an important paper in this context is \cite{martin12} which
introduced the Stochastic Newton (SN) method. Although the derivation
of the algorithm was not infinite-dimensional, the authors do show
that on linear Gaussian problems the acceptance probability is one,
an essential ingredient in the definition of an infinite-dimensional
sampler.  We also mention that the paper \cite{petra14} generalizes 
the SN method by considering variants in which the Hessian
is frozen at the maximum a posteriori (MAP) estimator, and low-rank approximations are
employed; the methodology is applied to a non-linear ice sheet inverse problem
with considerable success.
The SN algorithm of  \cite{martin12} can be identified as a special case of our scheme and further details are given in Subsection \ref{sec:mMALA}).

The paper is organized as follows.
Section \ref{sec:MCMC} reviews the recently introduced MCMC methods on infinite-dimensional Hilbert spaces.
Section \ref{sec:geo} develops the new geometric 
MCMC methods and establishes their well-posedness under certain conditions.
Section \ref{sec:numerics} applies the new methodology to a number of complex inverse problems 
and shows that use of information about the underlying geometry can provide 
significant computational improvements in the cost per unit sample.
Section \ref{sec:conclusion} concludes with a summary 
and a suggested path for several future investigations. 
%We also define the joint probability measure $\nu(du,du'):=\mu(du)Q(u,du')$ with $Q(u,du')$ being some Metropolis-Hastings proposal kernel discussed below; we also consider the symmetrisation of $\nu$ defined as $\tp\nu(du,du'):=\nu(du',du)$.

%%%%%%%%%%%%%%%%%%%%%%%%%%%%%%%%%%%%%%%%%%

\section{(Non-Geometric) MCMC on Hilbert Spaces}
\label{sec:MCMC}
We review some of the advanced MCMC methods published in the 
literature, see e.g.\@ \cite{beskos08, beskos11, cotter13} or \cite{cui16} 
for recent contributions.
 For simplicity we drop  $y$ from the various terms involved, so we denote the posterior as $\mu(du)$ and the potential function as $\Phi(u)$.
For target $\mu(du)$ and the various proposal kernels $Q(u,du')$ in the sequel, we define the bivariate 
law: %\textcolor{red}{for the transition kernel $\nu$}:
\begin{equation}
\label{eq:MH1}
\nu(du,du') = \mu(du)\,Q(u,du')\ . 
\end{equation}
Following the theory of Metropolis-Hastings on general spaces \citep{tierney98},
the acceptance probability 
$a(u,u')$ is non-trivial when $\nu(du,du')\simeq\nu^{\top}(du,du')$ 
with $\nu^{\top}$ denoting the symmetrisation of $\nu$, that is 
\begin{equation}
\label{eq:MH2}
\nu^{\top}(du,du'):=\nu(du',du)\  .
\end{equation}
The symbol $(\simeq)$ denotes 
absolute continuity between probability measures.
%\textcolor{red}{that is, both $\nu\ll\nu^{\top}$ and $\nu^{\top}\ll\nu$}.
The acceptance probability is then:
\begin{equation}
\label{eq:MH3} 
a(u,u') = 1\wedge \frac{d\nu^{\top}}{d\nu}(u,u')\ . 
\end{equation}
where $\alpha \wedge \beta$ denotes the minimum of $\alpha, \beta\in \mathbb{R}$.

%Crank-Nicolson method is a finite difference method used for numerically solving PDE \citep{richtmyer94}.
The preconditioned Crank-Nicolson (pCN) method \citep{neal10,beskos08,cotter13} is a modification of the standard random-walk Metropolis (RWM). The method is described in Algorithm \ref{tab:SMC} 
and involves a  free parameter $\rho\in [0,1)$ controlling the size of move from 
the current position.
\begin{algorithm}[!ht]
\begin{flushleft}
\medskip
%\hrule
%\medskip
%\textit {SMC:}
{\itshape
\begin{enumerate}
\item[\textit{1.}] Given current $u$, sample independently $\xi \sim \mathcal N(0, \mathcal C)$ and 
propose: $$u'=\rho\,u+\sqrt{1-\rho^2}\,\xi\ . $$
\item[\textit{2.}]  Accept $u'$ with probability $1\wedge\exp\big\{-\Phi(u')+\Phi(u)
\big\}$, 
otherwise stay at $u$.
%
%\vspace{0.18cm}
%
\end{enumerate} }
\medskip
%\hrule
%\medskip
\end{flushleft}
\vspace{-0.4cm}
\caption{A single Markov step for pCN.}
\label{tab:SMC}
\end{algorithm}
PCN is well-defined on the Hilbert space $\Hi$ with the proposal being prior-preserving, whereas standard RWM 
can only be defined on finite-dimensional discretization and has diminishing 
acceptance probability for fixed step-size and increasing resolution 
\citep{roberts97}. 
Thus, pCN mixes faster than RWM in high-enough dimensions and the disparity in mixing rates becomes greater upon mesh-refinement \citep{cotter13}.
However, pCN in general does not use the data in the proposal and can exhibit strong diffusive behavior when exploring complex posteriors.
We note here that some recent contributions 
\citep{law2014, pinski15, rudolf15} aim to adapt 
the pCN proposal to the covariance structure
of the target.

One approach for developing data-informed methods is to take advantage of gradient information in a steepest-descent setting.
Consider the Langevin SDE on the Hilbert space, preconditioned by some operator $K$:
\begin{equation}\label{eq:Langevin}
\frac{du}{dt} = -\frac12\,K\,\big\{ \Co^{-1}u+ D\Phi(u)\big\} + \sqrt{K}\, \frac{dW}{dt}
\end{equation}
with $D\Phi(u)$ denoting the Fr\'echet derivative of $\Phi$ 
(or the corresponding element of the relevant dual space; we will be more precise when defining our new methods in the section \ref{sec:geo}) and $W$ being the cylindrical Wiener process. 
We consider these dynamics under the setting $K=\Co$, when scales are tuned 
to the prior. 
Formally,  SDE \eqref{eq:Langevin} preserves the posterior $\mu$ and  
can be used as the basis for developing effective MCMC proposals \citep{beskos08,cotter13}.
\cite{beskos08} use the following semi-implicit Euler scheme to discretize the above SDE:
\begin{equation}
\label{eq:semi-implicit}
\frac{u'-u}{h} = -\frac12\,\big\{\frac{u+u'}{2} +  \alpha \,\mathcal{C}D\Phi(u)\big\} + \sqrt{\frac{1}{h}}\,\xi\ ,\quad \xi\sim \mathcal N(0,\mathcal C)\ , 
\end{equation}
for an algorithmic parameter $\alpha\equiv 1$ and some small step-size $h>0$. This can be rewritten as:
\begin{equation}
\label{eq:infMALA}
\begin{aligned}
u'&=\rho\,u + \sqrt{1-\rho^2}\,v\ , \quad v= \xi-\tfrac{\alpha\sqrt{h}}{2}\,\mathcal{C} D\Phi(u) \ , \quad 
\rho = (1-\tfrac{h}{4})/(1+\tfrac{h}{4})\ . 
\end{aligned}
\end{equation}
 %
%
%The following proposition identifies the behavior of quadratic variation of input diffusion path $u\sim\Pi$:
%\begin{prop}[\cite{beskos08}]
%Let $u\sim\Pi$. Consider the proposed path $u'$ by \eqref{infMALA}, then $[u']:=\lim_{N\to\infty}\sum_{i=1}^{2^N}|u'(i2^{-N})-u'((i-1)2^{-N})|^2$ is as follows:
%\begin{align*}
%\textrm{IA}:\quad & [u']=\frac{(1-\theta)^2}{\theta^2}[u]\\
%\textrm{PIA}:\quad & [u']=\left(1+\frac{1-2\theta}{(1+\theta h)^2}h^2\right)[u]
%\end{align*}
%\end{prop}
%
%
Note that the image space $\mathrm{Im}(\Co^{\frac12})$ is comprised of all $u\in \Hi$ such that  $ \mathcal N(u,\Co) \simeq \mathcal N(0,\Co)$,
see e.g.\@ 
\cite{da14}.
Thus, following \cite{beskos08}, under the assumption that $\Co D \Phi(u)\in 
\mathrm{Im}(\Co^{1/2})$, $\mu_0$-a.s.\@ in~$u$, one can use Theorem 2.21 of \cite{da14} on translations 
of Gaussian measures on separable Hilbert spaces, to obtain 
the following Radon-Nikodym derivative (we denote by  $Q(u,du')$ and $Q_0(u,du')$ the proposal
kernels determined by (\ref{eq:infMALA}) for $\alpha=1$ and $\alpha=0$, respectively):
\begin{equation}
\label{eq:refG}
\frac{dQ(u,\cdot)}{dQ_0(u,\cdot)}(u') = \exp\big\{ -\tfrac{h}{8}\,| \Co^{1/2}D\Phi(u)|^2 - 
\tfrac{\sqrt{h}}{2}\langle D\Phi(u), \tfrac{u'-\rho u}{\sqrt{1-\rho^2}} \rangle     \big\}\  .
\end{equation}
The bivariate Gaussian law 
$\nu_0(du,du'):=\mu_0(du)Q_0(u,du')$
is symmetric ($\nu_0=\nu_0^{\top}$), thus 
one can obtain the Metropolis-Hastings ratio
in the accept/reject (\ref{eq:MH3}) as $d\nu^{\top}/d\nu = (d\nu^{\top}/d\nu_0^{\top})/
(d\nu/d\nu_0)$.
The complete method, labeled $\infty$-MALA (infinite-dimensional MALA), is defined in Algorithm \ref{tab:MALA}.
%

%In the following result, $Q(u,du')$ is the transition
%kernel determined by (\ref{eq:infMALA}) for $\alpha\equiv 1$.
%%
%\begin{thm}[\cite{beskos08}]
%Let  $\nu_0(du,du'):=\mu_0(du)\,Q_0(u,du')$ 
%with $Q_0(u,du')$ being the transition kernels determined by \eqref{eq:infMALA} for  $\alpha=0$. 
% Then, $\nu\simeq\nu_0$ with Radon-Nikodym derivative:
%%
%\begin{align*}
%\frac{d\nu}{d\nu_0}(u,u') = \frac{1}{Z}\,\exp\{&-\Phi(u)\}\times
%\exp\big\{   - \tfrac{h}{8}|\mathcal{C}D\Phi(u)|^2 -\tfrac{1}{2}\big\langle D\Phi(u), (u'-u) + \tfrac{h}{4}(u'+u)\big\rangle \, \big\}
%\\ &=: \kappa(u,u')\  .
%\end{align*}
%%
%The Metropolis-Hastings acceptance probability for target distribution 
%$\mu(du)$ and proposal kernel $Q(u,du')$ is equal to: 
%\begin{equation*}
%a(u,u') = 1\wedge \frac{\kappa(u',u)}{\kappa(u,u')}
%\end{equation*}
%where %$\log\rho(u,u')$ is given as follows:
%\begin{itemize}
%\item[]{IA:} $-\Phi(u) - \frac{\alpha^2 h}{8}|D\Phi(u)|^2 -\frac{\alpha}{2}\langle D\Phi(u), u'-u\rangle -\frac{\alpha h}{8} \langle D\Phi(u), C^{-1}(u'+u)\rangle$
%\item[]{PIA:} $-\Phi(u) - \frac{\alpha^2 h}{8}|D\Phi(u)|^2_C -\frac{\alpha}{2}\langle D\Phi(u), u'-u\rangle -\frac{\alpha h}{8} \langle D\Phi(u), u'+u\rangle$
%\end{itemize}
%$$\log\rho(u,u')=-\Phi(u) - \frac{\alpha^2 h}{8}|K^{\frac12}D\Phi(u)|^2 -\frac{\alpha}{2}\langle D\Phi(u), u'-u\rangle -\frac{\alpha h}{8} \langle D\Phi(u), KC^{-1}(u'+u)\rangle$$
%\end{thm}
%
\begin{algorithm}[h]
\begin{flushleft}
\medskip
%\hrule
%\medskip
%\textit {SMC:}
{\itshape
\begin{enumerate}
\item[\textit{1.}] Given current $u$, sample independently $\xi \sim \mathcal N(0, \mathcal C)$ and 
propose: $$u'=\rho\,u + \sqrt{1-\rho^2}\,\big\{ \xi-\tfrac{\sqrt{h}}{2}\,\mathcal{C} D\Phi(u)\}$$
\item[\textit{2.}]  Accept $u'$ with probability  $a(u,u') = 1\wedge \frac{\kappa(u',u)}{\kappa(u,u')}$, where we have set:
\begin{align*}
\kappa(u,u') = \frac{1}{Z}\,\exp\{-\Phi(u)\}\times
 \exp\big\{ -\tfrac{h}{8}\,| \Co^{1/2}D\Phi(u)|^2 - 
\tfrac{\sqrt{h}}{2}\langle D\Phi(u), \tfrac{u'-\rho u}{\sqrt{1-\rho^2}} \rangle     \big\}\  
\end{align*}
%
%The Metropolis-Hastings acceptance probability for target distribution 
%$\mu(du)$ and proposal kernel $Q(u,du')$ is equal to: 
%\begin{equation*}
%a(u,u') = 1\wedge \frac{\kappa(u',u)}{\kappa(u,u')}
%\end{equation*}
%where %$\log\rho(u,u')$ is given as follows:
%$1\wedge\exp(-\Phi(u')+\Phi(u))$, 
otherwise stay at $u$.
%
%\vspace{0.18cm}
%
\end{enumerate} }
\medskip
%\hrule
%\medskip
\end{flushleft}
\vspace{-0.4cm}
\caption{A single Markov step for $\infty$-MALA.}
\label{tab:MALA}
\end{algorithm}
%

%\noindent We label the above MCMC method $\infty$-MALA.

Another likelihood-informed Metropolis-Hastings method involves exploiting Hamiltonian dynamics.
Consider the Hamiltonian differential equation with mass matrix
\footnote{The terminology  `mass matrix' used in 
Hamiltonian dynamical systems should not be confused with the same term used in finite element methods for PDEs.}
equal to $K^{-1}$, that is:
\begin{equation}\label{Hamiltonian}
\frac{d^2u}{dt^2} + 
K \big\{\Co^{-1}u+ D\Phi(u)\big\} = 0
\ .
\end{equation}
These dynamics, considered on the phase-space of
$(u,v)$, for the velocity $v=du/dt$,
preserve the total energy:
\begin{equation*}
H(u,v)=  \Phi(u) + \tfrac{1}{2}\,\langle v, K^{-1} v \rangle\ . 
\end{equation*} 
From a probabilistic point of view, 
when initialized with $v\sim \mathcal N(0, K)$, the 
Hamiltonian dynamics (formally) preserve the target  measure $\mu$ for any integration time, 
and thus they can form the basis for an MCMC method, termed Hybrid (or Hamiltonian) Monte-Carlo (HMC) \citep{duane87,neal10}.
\cite{beskos11} modify the standard HMC algorithm to develop an advanced method that is well-defined  on 
the Hilbert space $\Hi$. We label this algorithm
$\infty$-HMC (infinite-dimensional HMC).
In more detail, setting again $K=\Co$ 
%$$f(u):=-D\Phi(u)\ , $$ 
the dynamics in \eqref{Hamiltonian} can be written in the standard form:
\begin{equation}
\label{HD}
\frac{du}{dt} = v\ , 
\quad \frac{dv}{dt} = -u - \Co D\Phi(u)\  .
\end{equation}
Equation  \eqref{HD} gives rise to a  semigroup 
%$\Psi_t: \Hi\times \Hi\to \Hi\times \Hi$ 
that maps $(u(0),v(0))\mapsto(u(t),v(t))$ and preserves the product measure  $\mu\otimes \mu_0$ under regularity conditions on $\Co$ and $D\Phi(u)$ \citep{beskos11}. Standard HMC 
synthesizes Euler steps on the 
two differential equations in (\ref{HD})
to produce an approximate symplectic integrator.
In contrast, $\infty$-HMC makes use 
of the Strang splitting scheme: 
\begin{align}
\label{eq:split}
du/dt = v\ , 
\quad & dv/dt = -u \ ; \\[0.2cm]
\label{eq:split1}
du/dt = 0\ , 
\quad  
& dv/dt = 
- \Co D\Phi(u)\  ,
\end{align}
and develops a  St\"ormer-Verlet-type
integrator \citep{verlet67,neal10}
by synthesizing solvers of (\ref{eq:split}), 
(\ref{eq:split1}) as follows, for some small $\epsilon>0$ and initial values $(u_0,v_0)$:
\begin{equation}\label{HDdiscret}
\begin{aligned}
v^{-} &= v_0 - \tfrac{\epsilon}{2}\,\Co D\Phi(u_0)\  ;\\
\begin{bmatrix} u_\epsilon\\ v^{+}\end{bmatrix} &= \begin{bmatrix} \cos \epsilon& \sin \epsilon\\ -\sin \epsilon & \cos \epsilon\end{bmatrix}  \begin{bmatrix} u_0\\ v^{-}\end{bmatrix} \ ;\\
v_\epsilon &= v^{+} - \tfrac{\epsilon}{2}\, \Co D\Phi(u_\epsilon)\   .
\end{aligned}
\end{equation}
This scheme, referred to as a leapfrog step,  gives rise to a  map $\Psi_\epsilon: (u_0, v_0)\mapsto (u_\epsilon,v_\epsilon)$.
The algorithm proposes big jumps in the state space by synthesizing  
$I = \lfloor \tau/\epsilon \rfloor$ leapfrog maps, for some time horizon  $\tau>0$. 
It can be shown that if $I=1$ then $\infty$-HMC coincides with $\infty$-MALA for particular choice of step-sizes (see more details in Subsection \ref{sec:inf-mHMC}).
$\infty$-HMC will many times manifest numerical advantages over $\infty$-MALA 
due to the longer, designated moves suppressing random walk behavior.
$\infty$-HMC develops as shown in Algorithm \ref{tab:HMC}, 
where for starting position and velocity $(u,v)$ we have set 
$(u_i,v_i)= \Psi^{i}_{\epsilon}(u,v)$,
with $\Psi_{\epsilon}^{i}$ denoting the synthesis 
of $i$ maps $\Psi_\epsilon$, $0\le i\le I$.
Also, we denote by $\mathcal{P}_u$ the projection onto the $u$-argument.
The derivation of the accept/reject rule is more involved than $\infty$-MALA, 
and requires again that $\Co D \Phi(u)\in 
\mathrm{Im}(\Co^{1/2})$, $\mu_0$-a.s.\@ in $u$;
we refer the reader to \cite{beskos11}. We
will provide full details on the accept/reject
when developing the more general geometric version of $\infty$-HMC in Subsection \ref{sec:inf-mHMC}.
\begin{algorithm}[h]
\begin{flushleft}
\medskip
%\hrule
%\medskip
%\textit {SMC:}
{\itshape
\begin{enumerate}
\item[\textit{1.}] Given current $u$, sample independently  $v\sim \mathcal N(0,\Co)$
and propose 
$u' = \mathcal{P}_u\big\{ \Psi_\epsilon^{I}( u,v )\big\}$. 
\item[\textit{2.}]  Accept $u'$ with probability $1\wedge \exp\big\{-\Delta H(u,v)\}$
where we have set:
\begin{align*}
\Delta H(u,v) =& \,H(\Psi_\epsilon^I(u,v)) - H(u,v)\\[0.2cm]
& \equiv  \,\Phi(u_I)-\Phi(u_0)-\tfrac{\epsilon^2}{8}\big\{|\Co^{\frac12}D\Phi(u_I)|^2-
|\Co^{\frac12}D\Phi(u_0)|^2\big\}\\
&  \quad  
%- \tfrac{\epsilon}{2} \big\{ \langle v_0, D\Phi(u_0)\rangle - 
%\langle v_I, D\Phi(u_I)\rangle\big\} 
- 
\tfrac{\epsilon}{2}\, \sum_{i=0}^{I-1}\big(\,\langle v_{i}, D\Phi(u_{i})\rangle + \langle v_{i+1}, D\Phi(u_{i+1})\rangle \,\big)
\end{align*}
%
%The Metropolis-Hastings acceptance probability for target distribution 
%$\mu(du)$ and proposal kernel $Q(u,du')$ is equal to: 
%\begin{equation*}
%a(u,u') = 1\wedge \frac{\kappa(u',u)}{\kappa(u,u')}
%\end{equation*}
%where %$\log\rho(u,u')$ is given as follows:
%$1\wedge\exp(-\Phi(u')+\Phi(u))$, 
otherwise stay at $u$.
%
%\vspace{0.18cm}
%
\end{enumerate} }
\medskip
%\hrule
%\medskip
\end{flushleft}
\vspace{-0.4cm}
\caption{A single Markov step for $\infty$-HMC.}
\label{tab:HMC}
\end{algorithm}

%%
%\begin{itemize}
%\item  Given current $u$, sample independently  $v\sim \mathcal N(0,\Co)$
%and propose $u' = \mathcal{P}_v^{I} ( u,v )$
%\item Accept $u'$ with probability $1\wedge \exp\big\{-\Delta H(u,v)\}$
%where
%%
%\begin{equation}\label{eq:acptAlex1}
%\begin{aligned}
%\Delta H(u,v) =& H(\widehat\Psi_t^I(u,v)) - H(u,v)\\
%=& \Phi(u_k)-\Phi(u_0)+\tfrac{h^2}{8}\big\{|\Co^{\frac12}f(u_0)|^2-
%|\Co^{\frac12}f(u_k)|^2\big\}\\
%&  + \tfrac{h}{2} \big\{ \langle v_0, f(u_0)\rangle + \langle v_I, f(u_I)\rangle\big\} + h \sum_{\ell=1}^{I-1}\langle v_{\ell}, f(u_{\ell})\rangle
%\end{aligned}
%\end{equation}
%%
%\end{itemize}
%\begin{thm}[\cite{beskos11}]
%The above dynamics define a Markov chain which preserves the target distribution $\mu$ in \eqref{eq:Bayes}.
%\end{thm}

%%%%%%%%%%%%%%%%%%%%%%%%%%%%%%%%%%%%%%%%%%

\section{Geometric Metropolis-Hastings Algorithms}
\label{sec:geo}
Recall the assumed distribution of the data in (\ref{eq:forward}). 
We will be more explicit here and for expository convenience assume Gaussian noise $\eta 
\sim \mathcal{N}_m(0,\Sigma)$, for some symmetric, positive-definite 
$\Sigma\in \mathbb{R}^{m\times m}$. Thus the target distribution is:
\begin{equation*}
%\label{eq:Bayes1}
\frac{d\mu}{d\mu_0}(u) = \frac{1}{Z}\,\exp(-\Phi(u)) = \frac{1}{Z}\,\exp\big\{\,-\tfrac{1}{2}\,\big|y-\mathcal{G}(u)\big|^{2}_{\Sigma}\,\big\}
\end{equation*}
for some constant $Z>0$, where we have considered the scaled inner product 
$\langle \cdot,\cdot \rangle_{\Sigma} = \langle \cdot, \Sigma^{-1}\cdot \rangle$.
Below, we will define MCMC algorithms on the Hilbert space $\Hi$, and express 
conditions for their well-posedness in terms of the properties of the forward map 
$\mathcal{G}=(\mathcal{G}_k)_{k=1}^{m}:\Hi \mapsto \mathbb{R}^{m}$ which involves regularity properties of the underlying PDE in the given inverse problem. 

We work with the eigenvectors and eigenvalues of the prior covariance  
operator $\Co$, so that $\{\phi_j\}_{j\ge 1}$ is an orthonormal basis of $\Hi$ and
$\{\lambda_j^2\}_{j\ge 1}$ a sequence of positive reals with $\sum{\lambda_j^2}<\infty$ (this enforces 
the trace-class condition for $\Co$),
such that $\Co \phi_j = \lambda_j^2 \phi_j$, $j\ge 1$. 
We make the usual correspondence  
between an element $u$ and its coordinates w.r.t.\@ the basis 
$\{\phi_j\}_{j\ge 1}$, that is
$u=\sum_{j}u_j \phi_j \leftrightarrow \{u_j\}_{j\ge 1}$.
Using the standard Karhunen-Lo\`eve expansion 
of a Gaussian measure \citep{adler10,bogachev98,dashti15} we have the representation:
\begin{equation}
\label{eq:KL}
u \sim \mathcal{N}(0,\Co)\,\,\, \Longleftrightarrow\,\,\,
u=\sum_{j=1}^{\infty} u_j \phi_j\ ,\,\, 
u_j\sim \mathcal{N}(0,\lambda_j^2)\ , \,\,\textrm{ind. over $j\ge 1$}\ .
\end{equation}
We define the Sobolev spaces 
corresponding to the basis $\{\phi_j\}$:
\begin{equation*}
\Hi^s = \big\{\{u_j\}_{j\ge 1}: \sum j^{2s}|u_j|^{2}<\infty  \big\}\ , \quad s\in \mathbb{R}\ ,
\end{equation*}
so that  $\Hi^0\equiv \Hi$ and $\Hi^{s}\subset \Hi^{s'}$ if $s'<s$.
Typically, we will have $\lambda_j = \Theta(j^{-\kappa})$ for some $\kappa>1/2$ 
in the sense that $C_1\cdot j^{-\kappa}\leq 
\lambda_j\leq C_2 \cdot j^{-\kappa}$ for all 
$j \ge 1$, for constants $C_1, C_2>0$.
Thus, the prior (so also the posterior) concentrate on $\Hi^s$ for any $s<\kappa-1/2$.
Notice also that:

$$\mathrm{Im}(\Co^{1/2})= 
\Hi^{\kappa}\ .$$
Assumption \ref{ass:10} imposes some conditions on the gradient $D\Phi(u)$.

\begin{asump}
\label{ass:10}
(i) $\lambda_j = \Theta(j^{-\kappa})$, for $\kappa>1/2$. \\
(ii) For some $\ell\in[0,\kappa-1/2)$, 
the maps $\mathcal{G}_k:\Hi^{\ell}\mapsto \mathbb{R}$, $1\le k \le m$, are Fr\'echet differentiable  
on $\Hi^{\ell}$ with derivatives $D\mathcal{G}_k\in \Hi^{-\ell}$.
 %so that for all $u\in\Hi^{\ell}$,
%$D\mathcal{G}_k(u)\in \Hi^{-\ell}$.
\end{asump}

\noindent 
We can assume that $\ell$ is arbitrarily close to $\kappa-1/2$. We make the standard correspondence between 
the bounded linear operator $D\mathcal{G}_k(u)$ on $\Hi^{\ell}$ and an element of its dual space $D\mathcal{G}_k(u)\in \Hi^{-\ell}$
so that $D\mathcal{G}_k(u)(v) = \langle D\mathcal{G}_k(u), v \rangle $ for all $u, v\in\Hi^{\ell}$.
%\textcolor{red}{(Distinguish between element in the dual space and its Riesz representation?)}
%We will focus on $\mathcal{G}:\Hi^{\ell}\mapsto \mathbb{R}^{d}$ for a
%$\ell\in[0,\kappa-1/2)$.
%We assume that each $\mathcal{G}_k$, $1\leq k \leq d$, is Fr\'echet differentiable 
%with derivative $D\mathcal{G}_k(u)$. 
We consider the derivative  $D\mathcal{G}(u) = (D\mathcal{G}_1(u),\ldots 
D\mathcal{G}_m(u) )\in \{\Hi^{-\ell}\}^{m}$, $u\in \Hi^{\ell}$.
Under Assumption \ref{ass:10}, mapping $\Phi$ is Fr\'echet differentiable  
on $\Hi^{\ell}$ with:
\begin{equation*}
D\Phi(u) =  D\mathcal{G}(u)\Sigma^{-1}(\mathcal{G}(u) - y)\in \Hi^{-\ell}\ . 
\end{equation*}

\subsection{Local Gaussian Approximation of Posterior}

All three MCMC algorithms shown in Section \ref{sec:MCMC}
adjust scales in the proposal according to the prior covariance $\Co$. Indeed, if the target 
distribution was simply $\mu_0$, the proposal dynamics would equalise all scales and would also have acceptance probability equal to~$1$.
However, one can get more effective algorithms if the geometry of the posterior itself 
is taken into consideration in the selection of step-sizes.
We explore in this paper the idea of using a preconditioner $K=K(u)$ which will be location-specific in order to construct algorithms that 
are tuned to the local curvature of the posterior 
as pioneered in \cite{girolami11}, and developed
subsequently in other works, see e.g.\@ \cite{lan14}.
%The invariance properties of the dynamics are lost when $K$ is $u$ dependent, but nonetheless good algorithms may result.
%We develop a number of algorithms that apply this idea.

Reviewing $\infty$-MALA 
and $\infty$-HMC methods presented in Section~\ref{sec:MCMC},
the effect of the implicit method (\ref{eq:semi-implicit}) and the splitting (\ref{eq:split}) used for 
$\infty$-MALA and $\infty$-HMC respectively   
is that the resulting scheme provides an `ideal' proposal of acceptance 
probability 1 (respectively of the step-sizes 
$h$ or $\epsilon$) 
for the reference Gaussian measure $\mu_0 = \mathcal{N}(0,\Co)$.
Thinking about the local-move $\infty$-MALA algorithm, if the negative log-density w.r.t.\@ $\mu_0$, $u\mapsto\Phi(u)$, is relatively flat locally around the current position $u$, then one can expect relatively high 
acceptance probability when proposing a move from $u$ for the target $\mu$ itself, for a small 
step-size $h$. 
In general, 
it makes sense to attempt
to obtain alternative (to the prior $\mu_0$) Gaussian reference 
measures that deliver `flattened' log-densities 
for the target $\mu$.
This leads naturally to the choice of \emph{local}
reference measures, as differently oriented elliptic 
contours can provide better proxies to the target contours at different parts of the state space.

We turn at this point to a finite-dimensional context 
(so $\Hi\equiv \mathbb{R}^{n}$ for some $n\ge 1$)
and adopt an informal approach to avoid 
distracting technicalities.
Assume that we are interested in 
the target posterior in the vicinity of
$u_0\in \Hi$.
A second-order Taylor expansion
of the log-target (up to an additive constant):
\begin{equation*}
l(u) := -\Phi(u) - \tfrac{1}{2}\langle u, \Co^{-1}u \rangle 
\end{equation*}
around $u_0$ will give that:
\begin{multline*}
\exp\{l(u)\}=
\\ =  c(u_0) 
 \exp\big\{   
- \tfrac{1}{2}\big\langle u-m(u_0),[-D^{2}l(u_0)](u-m(u_0)) 
\big\rangle
+\mathcal{O}(|u-u_0|^{3})
\,\big\}
\end{multline*}
for some easy-to-identify  $m(u_0)\in \Hi$, $c(u_0)\in\mathbb{R}$ that depend on $u_0$.
Thus, 
with the Gaussian law
$\mathcal{N}(m(u_0),[-D^{2}l(u_0)]^{-1})$ as new reference measure,
the negative 
log-density (w.r.t.\@ this Gaussian law) of the target $\mu$ 
will be equal to $c'(u_0) + \mathcal{O}(|u-u_0|^{3})$ for some constant $c'(u_0)\in\mathbb{R}$, 
i.e.,\@ relatively flat in the vicinity of $u_0$.
Following the discussion in the previous paragraph, 
we will aim to develop algorithms driven by these 
local reference measures.
(Note that this local Gaussian reference measure coincides  
with the local Gaussian approximation used in the development 
of the Stochastic Newton  method in \cite{martin12}.)

To be more specific, 
we will achieve the required effect by 
allowing for general location-specific preconditioner 
$K=K(u_0)$ with the choice of $K(u_0)^{-1}$ motivated by the structure of the
negative Hessian
$-D^{2}l(u_0)$ at current position 
 $u_0$.
Thus, we will work with the local reference 
measure (in the vicinity of $u_0$):
\begin{equation*}
\tilde{\mu}_0= \mathcal{N}(m(u_0),K(u_0))
\end{equation*}
($m(u_0)$ cancels out in the subsequent developments and will not affect the algorithms)
and the target distribution $\mu$ expressed as:
%
%Thus, 
%using as new reference measure 
%the Gaussian law
%%
%%\begin{equation*}
%$\tilde{\mu}_0:=\mathcal{N}(m(u_0),[-D^{2}\ell(u_0)]^{-1})$,
%%\end{equation*}
%we have that:
%
\begin{gather}
\label{eq:new}
\frac{d\mu}{d\tilde{\mu}_0}(u) = 
c''(u_0)\exp\{-\widetilde{\Phi}(u;u_0)\}\ ,
\end{gather}
for some $c''(u_0)\in\mathbb{R}$,
where we have defined the negative log-density:
\begin{align*}
 \widetilde{\Phi}(u;u_0):= \Phi(u) + \tfrac{1}{2}&\langle u,\Co^{-1}u \rangle  
- \tfrac{1}{2}\,\big\langle u-m(u_0),K(u_0)^{-1}(u-m(u_0)) 
\big\rangle \ ,
\end{align*}
indicating the discrepancy  between the target and the local reference measure.
%
%As explained above, $\widetilde{\Phi}(u;u_0)$
%is relative flat close to $u_0$ (equal to a term constant in $u$ plus
% a $\mathcal{O}(|u-u_0|^3)$ term).
We also write its derivative:
\begin{align}\label{eq:dphi_approx}
D\widetilde{\Phi}(u;u_0)= D\Phi(u) + 
\mathcal{C}^{-1}u - K(u_0)^{-1}(u-m(u_0)) \ . 
\end{align}
%
%Thus, by using $\tilde{\mu}_0$ as the reference measure, rather than the prior $\mu_0$,
%we give rise to a negative log-likelihood 
%that is flat in the vicinity of $u_0$ except
%for a $\mathcal{O}(|u-u_0|^3)$-term. 
%One of the ideas underlying the $\infty$-MALA 
%and $\infty$-HMC methods presented in the section~\ref{sec:MCMC}, is that 
%if the posterior $\mu^{y}$ is close to the prior 
%$\mu_0$, the algorithms are expected to be very effective. 
%Motivated by this principle,
%the geometric MCMC methods presented in the %sequel will now work with the expression 
%for the target obtained in (\ref{eq:new}) above 
%and the new (local) reference measure $%\tilde{\mu}_0$ that is adapted to the target
%rather than $\mu_0$ which is blind to the %likelihood. 
%
%
%To be more specific,
%we will allow for general location-specific preconditioner 
%$K=K(u)$ with the choice of $K(u)^{-1}$ motivated by the structure of the
%negative Hessian
%$-D^{2}\ell(u )$ locally at current position 
% $u$ (what was referred to as `$u_0$' in the earlier discussion).
%Thus, we will work with the local reference 
%measure:
%%
%\begin{equation*}
%\mathcal{N}(m(u),K(u))
%\end{equation*}
%and the 
We will use the reference 
measures $\tilde{\mu}_0$ as drivers 
for the implicit scheme when deriving a local-move
MALA algorithm. 
Similarly to Section \ref{sec:MCMC},
we will also define an HMC-type algorithm
as an extension of the MALA version when we allow 
the synthesis of a number of local steps before applying 
the accept/reject.

%that depend on the current position (say $u_0$)
%and is carefully chosen so that the corresponding 
%target negative log-density (say $\widetilde{\Phi}(u;u_0)$)  is 
%is expected to be flatter in the vicinity 
%of the current position relatively to $u\mapsto\Phi(u)$.

%The above expression suggests making use of the measure 
%$\mathcal{N}(\mu(u_0),[-D^{2}\ell(u_0)]^{-1})$ as an `optimal' local 
%Gaussian approximation of the target close to $u_0$, 
%as $\widetilde{\Phi}$ is the negative logarithm (up to an additive constant 
%that depends on $u_0$) of the density between the target distribution 
%and $\mathcal{N}(\mu(u_0),[-D^{2}\ell(u_0)]^{-1})$.

\subsection{$\infty$-mMALA}
\label{sec:mMALA}
Recall the Langevin dynamics 
in (\ref{eq:Langevin}) that gave rise (for $K=\mathcal{C}$) to $\infty$-MALA in Section 
\ref{sec:MCMC}.
The above discussion, and re-expression 
of the target as in (\ref{eq:new}), 
suggest invoking dynamics of the type:
\begin{equation}\label{eq:Langevin2}
\frac{du}{dt} = -\frac12\,K(u)\,\big\{ \Co^{-1}u+ D\Phi(u)\big\} + \sqrt{K(u)}\, \frac{dW}{dt}
\end{equation}
for a location-specific preconditioner 
$K(u)$ (its choice motivated in practice by the form of the inverse negative Hessian at the current position).  
Notice that these dynamics do not, in general, preserve
the target $\mu$ as they omit the 
higher order (and computationally expensive) Christofell symbol terms, see e.g.\@
\cite{girolami11} and the discussion in 
\cite{xifara:14}.
As noted with the study of `Simplified MALA' in 
\cite{girolami11}, the dynamics in (\ref{eq:Langevin2})
can still capture an important part of the local curvature structure of the target and can provide an effective balance between mixing and computational cost. 
 
The time-discretization scheme develops 
as in the case of $\infty$-MALA, 
with the important difference that it will now be driven by the local reference measure $\tilde{\mu}_0$ 
rather than the prior.
%In addition, the numerical scheme 
%for the time-discretization 
%of (\ref{eq:Langevin2}), will now be based
%on the change of measure in (\ref{eq:new}). 
%
That is, we re-write (\ref{eq:Langevin2})
as follows:
\begin{equation}\label{eq:Langevin3}
\frac{du}{dt} = -\frac12\,K(u)\,\big\{ 
K(u)^{-1}(u-m(u)) + D\widetilde{\Phi}(u;u)
\big\} + \sqrt{K(u)}\, \frac{dW}{dt}
\end{equation}
and develop the semi-implicit scheme as follows:
\begin{gather}
\label{eq:semi-implicit2}
\frac{u'-u}{h} = -\frac12\,\Big\{\frac{u+u'}{2} - m(u) + K(u) D\widetilde{\Phi}(u;u)\Big\} + \sqrt{\frac{1}{h}}\,\xi\  ;\\\ 
 \xi\sim \mathcal N(0,K(u))\ . \nonumber
\end{gather}
Notice that $m(u)$ cancels out
(simply apply operator $K(u)$ on both sides 
of \eqref{eq:dphi_approx}, replace $u_0\leftrightarrow u$ 
and use the obtained expression for  $K(u) D\widetilde{\Phi}(u;u)$
here)
and we can 
rewrite (\ref{eq:semi-implicit2}) 
in the general form:    

%
%
%Assume $C:\Hi\mapsto \mathcal{L}(\Hi,\Hi)$. 
%Then, consider  the Langevin dynamics \eqref{eq:Langevin} with $K=C(x)$, 
%which after simple calculations can be written as:
%%
%\begin{equation}\label{mLangevin}
%\frac{du}{dt} = -\frac{u}{2}+\frac12 g(u) + \sqrt{G(u)^{-1}}\,\frac{dW}{dt}
%\end{equation}
%%
%for $g: X\to X$ defined as:
%%
\begin{equation}
\frac{u'-u}{h} = -\frac{1}{2}\big\{\frac{u'+u}{2} - g(u)\big\} + \sqrt{\frac 1h}\,\xi\ ,\quad \xi\sim \mathcal N(0,K(u))\ ,
\end{equation}
where we have defined:
\begin{equation}\label{eq:ngrad}
g(u) = -K(u) \big\{(\Co^{-1}-K^{-1}(u))u + D\Phi(u)\big\}\  .
\end{equation}
%%
%Equation (\ref{mLangevin}) is called simplified Riemannian manifold Langevin equation \citep{girolami11}.
%We use a aimilar semi-implicit scheme as in \eqref{semi-implicit},
%to obtain the discretisation:
%
%\begin{equation}
%\frac{u'-u}{h} = -\frac{u'+u}{4} + \frac12\, g(u) + \sqrt{\frac 1h}\,\xi\ ,\quad \xi\sim \mathcal N(0,G(u)^{-1})\ . 
%\end{equation}
%
Re-arranging terms, we can equivalently write:
\begin{equation}\label{eq:infmMALA}
u' = \rho\,u + \sqrt{1-\rho^2} \,v\ , \quad v=\xi +  \tfrac{\sqrt{h}}{2} g(u)\ ,\quad  \xi\sim \mathcal N(0,K(u))\ ,
\end{equation}
for $\rho$ defined as in (\ref{eq:infMALA}).
%This is the proposal for $\infty$-mMALA.

%where $a_{\frac12}^2 + b_{\frac12}^2 =1$ follows Crank-Nicolson scheme \citep{richtmyer94}.

%We state some conditions, so that the above proposal is well-defined on $\Hi$.
%
%\begin{asump}
%$K(u)$ is a 
%self-adjoint, positive-definite and trace-class operator on 
%the Hilbert space $\Hi$,  $\mu_0$-a.s.~in  $u\in \Hi$. 
%\end{asump}
%
Recall the steps for identifying the Metropolis-Hastings acceptance 
probability in (\ref{eq:MH1})-(\ref{eq:MH3}) and the related notation 
for the involved bivariate measures.
The following assumptions are sufficient for the well-posedness of the 
proposal (\ref{eq:infmMALA}) and for providing a non-trivial 
Radon-Nikodym derivative 
$(d\tp{\nu}/d\nu)(u,u')$  on the Hilbert space $\Hi$.
\begin{asump}\label{as:1}
We have, $\mu_0$-a.s.~in  $u\in \Hi$,  that $K(u)$ is a 
self-adjoint, positive-definite and trace-class operator on 
Hilbert space $\Hi$, and it is such that:
\begin{itemize}
\item[i)] $\mathrm{Im}(K(u)^{1/2}) = \mathrm{Im}(\Co^{\frac12})(=\Hi^{\kappa})$;
\item[ii)] $\{\Co^{-1/2}K(u)^{1/2}\}\{  \Co^{-1/2}K(u)^{1/2} \}^{\top} - I$ is a 
Hilbert-Schmidt operator on $\Hi$.
\end{itemize}
%
%$\mathcal N(0,G(u)^{-1})\simeq \mathcal N(0,\Co)$, $\mu_0$-a.s.~in $u$.
\end{asump}
\noindent A linear, bounded operator $A:\Hi\mapsto \Hi$ is Hilbert-Schmidt if 
$\sum_j |A\phi_j|^2<\infty$.
\begin{asump}\label{as:2}
 $(K(u)\Co^{-1}-I)u\in \mathrm{Im}(\Co^{\frac12})(=
\Hi^{\kappa})$, $\mu_0$-a.s.~in $u$.
\end{asump}
\begin{cor}
\label{cor:1}
Under Assumptions \ref{ass:10}-\ref{as:2}, we have that $g(u)\in\Hi^{\kappa}$.
\end{cor}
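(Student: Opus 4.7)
The plan is to split $g(u)$ into the two terms suggested by its definition,
\[
g(u) \;=\; -\bigl(K(u)\Co^{-1}-I\bigr)u \;-\; K(u)\,D\Phi(u),
\]
and show that each lies in $\Hi^{\kappa}=\mathrm{Im}(\Co^{1/2})$ separately. The first term is immediate: Assumption \ref{as:2} places $(K(u)\Co^{-1}-I)u\in\Hi^{\kappa}$, $\mu_0$-a.s.\ in~$u$. So the real content is the second term, $K(u)D\Phi(u)$, where the subtlety is that by Assumption \ref{ass:10} we only have $D\Phi(u)\in\Hi^{-\ell}$ with $\ell\in[0,\kappa-1/2)$, rather than in $\Hi$, so one cannot naively apply the bounded operator $K(u)$ to it.

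The route I would take is to exploit Assumption \ref{as:1}(i) to obtain a useful factorisation of $K(u)$. Since $\mathrm{Im}(K(u)^{1/2})=\mathrm{Im}(\Co^{1/2})$ and both square roots are self-adjoint, Douglas' range inclusion theorem gives a bounded operator $L=L(u)$ on $\Hi$ with $K(u)^{1/2}=\Co^{1/2}L$; taking adjoints yields $K(u)^{1/2}=L^{*}\Co^{1/2}$, and composing these two identities produces
\[
K(u) \;=\; \Co^{1/2}\,L\,L^{*}\,\Co^{1/2}.
\]
Next I would verify that $\Co^{1/2}$, acting diagonally as $u_j\mapsto \lambda_j u_j$ in the Karhunen-Lo\`eve basis with $\lambda_j=\Theta(j^{-\kappa})$, maps $\Hi^{s}$ continuously into $\Hi^{s+\kappa}$ for every $s\in\mathbb{R}$. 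In particular, taking $s=-\ell$, $\Co^{1/2}D\Phi(u)\in\Hi^{\kappa-\ell}\subset\Hi$ because $\kappa-\ell>1/2>0$. Combining these ingredients:
\[
K(u)D\Phi(u) \;=\; \Co^{1/2}\,L\,L^{*}\bigl(\Co^{1/2}D\Phi(u)\bigr)\;\in\;\Co^{1/2}(\Hi)\;=\;\Hi^{\kappa},
\]
since $LL^{*}$ is bounded on $\Hi$ and the outer $\Co^{1/2}$ maps $\Hi$ into $\Hi^{\kappa}$.

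The main obstacle, as flagged above, is precisely the mismatch of regularity: $D\Phi(u)$ sits in the dual-type space $\Hi^{-\ell}$, so one cannot directly exploit boundedness of $K(u)$ on $\Hi$. The trick is that Assumption \ref{as:1}(i) is exactly the hypothesis needed to factor out an extra $\Co^{1/2}$ on the right-hand side of $K(u)$, which provides the additional $\kappa$ degrees of smoothing required to absorb $D\Phi(u)\in\Hi^{-\ell}$ into $\Hi$ before the remaining bounded composition $\Co^{1/2}LL^{*}$ lifts the result into $\Hi^{\kappa}$. Assumption \ref{as:1}(ii) (Hilbert-Schmidt perturbation) is not needed for this corollary; it will be used later for the well-posedness of the Radon-Nikodym derivative in the Metropolis-Hastings ratio. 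Putting the two summands together gives $g(u)\in\Hi^{\kappa}$ as claimed.
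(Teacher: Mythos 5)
Your proof is correct and follows essentially the same route as the paper's: Assumption \ref{as:2} disposes of the term $(K(u)\Co^{-1}-I)u$, and the term $K(u)D\Phi(u)$ is handled by writing $K(u)=\Co^{1/2}R\,\Co^{1/2}$ with $R=\{\Co^{-1/2}K(u)^{1/2}\}\{\Co^{-1/2}K(u)^{1/2}\}^{\top}$ bounded, so that the inner $\Co^{1/2}$ lifts $D\Phi(u)\in\Hi^{-\ell}$ into $\Hi^{\kappa-\ell}\subseteq\Hi$ and the outer $\Co^{1/2}$ lands the result in $\Hi^{\kappa}$. Your explicit appeal to Douglas' range-inclusion theorem simply makes precise the boundedness of $L=\Co^{-1/2}K(u)^{1/2}$ that the paper leaves implicit in Assumption \ref{as:1}(i).
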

\begin{proof}
Due to Assumption \ref{as:2}, it remains
to show $K(u)D\Phi(u)\in \Hi^{\kappa}$.
Note  that $K(u)D\Phi(u) =\Co^{1/2}R\,\Co^{1/2}D \Phi(u)$ where  
$R=\{\Co^{-1/2}K(u)^{1/2}\}\{  \Co^{-1/2}K(u)^{1/2} \}^{\top}$.
%Assumption~\ref{as:1} implies that the eigenvalues of $R$ are 
%bounded from above and below.
Also, from Assumption \ref{ass:10},  
$\Co^{1/2}D \Phi(u)\in \Hi^{\kappa-\ell}\subseteq \Hi$.
So, $\Co^{1/2}R\,\Co^{1/2}D \Phi(u)\in 
\mathrm{Im}(\Co^{1/2})= \Hi^{\kappa}$.
%From  Assumption \ref{as:1}, we have immediately that 
%$K(u)D\Phi(u) =\Co^{1/2}R\,\Co^{1/2}D \Phi(u)  \in \Hi^{\kappa}$.
%Assumption \ref{as:2} then implies the result.
\end{proof}

From the Feldman-Hajek theorem (see e.g.\@ Theorem 2.23 in \cite{da14}),
Assumption \ref{as:1} and Corollary \ref{cor:1} are necessary and sufficient so that 
$\mathcal N(g(u),K(u))\simeq \mathcal N(0,\Co)$, $\mu_0$-a.s.~in~$u$.
The following result gives the corresponding 
Radon-Nikodym derivative,
which will then be used to illustrate the 
well-posedness of the MCMC algorithm  and 
provide the Metropolis-Hastings acceptance probability.
%proposal (\ref{eq:infmMALA}) gives rise to a Metropolis-Hastings algorithm, well-defined on the Hilbert space $\Hi$, which we henceforth label as $\infty$-mMALA.
%
%
\begin{thm}
\label{th:mm}
Assumptions \ref{ass:10}-\ref{as:2} imply that 
$\mathcal N( (\sqrt{h}/2)\,g(u),K(u)) \simeq \mathcal N(0,\Co)$,
$\mu_0$-a.s.~in $u$, with Radon-Nikodym derivative:
\begin{align*}
\lambda(w;u): &= \frac{d\mathcal N( \tfrac{\sqrt{h}}{2}\,g(u),K(u))}{d\mathcal N(0,\Co)}
(w) = \frac{d\mathcal N(\tfrac{\sqrt{h}}{2}\,g(u),K(u))}{d\mathcal N(0,K(u))}(w)\times \frac{d\mathcal N(0,K(u))}{d\mathcal N(0,\Co)}(w)\\
&= 
\exp\big\{-
\tfrac{h}{8}|K^{-\frac12}(u)g(u)|^2+
\tfrac{\sqrt{h}}{2}\langle K^{-\frac12}(u)g(u),K^{-\frac12}(u)w\rangle\big\} \\
& \qquad \qquad \times 
 \exp\big\{-\tfrac12\langle w, (K^{-1}(u)-\Co^{-1})w \rangle \big\}\cdot 
|\,\Co^{1/2}K(u)^{-1/2}\,|\ . 
\end{align*}
\end{thm}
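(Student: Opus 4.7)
The plan is to factorize the Radon-Nikodym derivative through the intermediate reference measure $\mathcal{N}(0,K(u))$, which shares the covariance of the numerator and the zero mean of the denominator. Writing
\[
\frac{d\mathcal N(\tfrac{\sqrt{h}}{2}g(u),K(u))}{d\mathcal N(0,\Co)}(w)
\;=\;
\frac{d\mathcal N(\tfrac{\sqrt{h}}{2}g(u),K(u))}{d\mathcal N(0,K(u))}(w)\cdot
\frac{d\mathcal N(0,K(u))}{d\mathcal N(0,\Co)}(w),
\]
reduces the problem to two classical computations: a Cameron--Martin translation and a Feldman--Hajek covariance change. The chain rule then recombines the two pieces, giving absolute continuity $\mu_0$-a.s.\ in $u$.

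For the first factor (the mean shift), I would apply Theorem 2.21 of \cite{da14}. The admissibility condition is $\tfrac{\sqrt{h}}{2}g(u)\in\mathrm{Im}(K(u)^{1/2})$. By Assumption \ref{as:1}(i) this Cameron--Martin space equals $\mathrm{Im}(\Co^{1/2})=\Hi^{\kappa}$, and Corollary \ref{cor:1} gives exactly $g(u)\in\Hi^{\kappa}$. The Cameron--Martin density then takes the standard exponential form
\[
\exp\bigl\{-\tfrac{h}{8}|K^{-1/2}(u)g(u)|^{2}+\tfrac{\sqrt{h}}{2}\langle K^{-1/2}(u)g(u),K^{-1/2}(u)w\rangle\bigr\},
\]
matching the first bracket of the claim.

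For the second factor (the covariance change, both measures centred), I would invoke the Feldman--Hajek theorem (Theorem 2.23 in \cite{da14}). Assumption \ref{as:1}(i) ensures the two Cameron--Martin spaces coincide, which makes the two zero-mean Gaussians comparable, and Assumption \ref{as:1}(ii) is precisely the Hilbert--Schmidt perturbation hypothesis required for equivalence. The resulting density is standardly written via the spectral resolution of the operator $T(u):=\{\Co^{-1/2}K(u)^{1/2}\}\{\Co^{-1/2}K(u)^{1/2}\}^{\top}-I$, and after rewriting in terms of $w$ it takes the compact form $\exp\bigl\{-\tfrac12\langle w,(K^{-1}(u)-\Co^{-1})w\rangle\bigr\}\cdot|\Co^{1/2}K(u)^{-1/2}|$, which is the second bracket of the claim.

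The main obstacle is the rigorous interpretation of this second factor: a typical $w\sim\mathcal N(0,\Co)$ does not lie in the domain of $\Co^{-1}$ or $K^{-1}(u)$, and $|\Co^{1/2}K(u)^{-1/2}|$ is not literally an operator determinant on $\Hi$. Both expressions must be read as their regularized Feldman--Hajek versions: the quadratic form is defined $\mu_0$-a.s.\ via the spectral decomposition of the Hilbert--Schmidt operator $T(u)$ (so that only the regularized trace/quadratic form in the $T(u)$-eigenbasis is needed, which converges because of Assumption \ref{as:1}(ii)), and the determinant is the corresponding Fredholm-type normalizing constant $\prod_{j}(1+\tau_{j})^{-1/2}$ where $\{\tau_{j}\}$ are the eigenvalues of $T(u)$. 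Once this bookkeeping is in place, multiplying the two admissible densities yields the stated formula and completes the proof.
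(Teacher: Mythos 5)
Your proof follows essentially the same route as the paper: factorize through the intermediate measure $\mathcal N(0,K(u))$, apply Theorem 2.21 of \cite{da14} (Cameron--Martin) for the mean shift using Corollary \ref{cor:1} to verify admissibility, and the Feldman--Hajek theorem for the covariance change. The paper's own proof is terser --- it explicitly labels the second factor ``a formal expression of the ratio of two Gaussian measures'' and defers the regularized interpretation to the subsequent Remark --- so your extra care with the spectral/Fredholm reading of that factor is consistent with, and slightly more explicit than, what the authors wrote.
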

\begin{proof}
The first Radon-Nikodym derivative in the
expression for $\lambda(w,u)$ is an application of Theorem 2.21 of \cite{da14} on translations 
of Gaussian measures.
The second density is a formal expression of the ratio
of two Gaussian measures. 
\end{proof}
\begin{rk}
Note that due to the Hilbert-Schmidt property in Assumption \ref{as:1},
the term
\begin{equation}
\label{eq:dens1}
\langle w, (K^{-1}(u)-\Co^{-1})w \rangle -\log|\,\Co\, K(u)^{-1}|
\end{equation}
appearing in the expression for $\lambda(w,u)$ in Theorem \ref{th:mm} 
is a.s.\@ finite under $w\sim \mu_0$ ($\mu_0$-a.s.\@ in $u\sim \mu_0$) as expected 
(since we assume existence of a density).
For instance, the second moment of (\ref{eq:dens1}) is equal to 
(we use the standard representation on $\mathbb{R}^{n}$ 
by projecting onto the first 
$n$ basis functions in $\{\phi_i\}$;
we also denote by $\{\nu_{j,n}\}_{j=1}^{n}$ the eigenvalues of the 
projection  $\{\Co^{-1/2}K(u)^{1/2}\}\{  \Co^{-1/2}K(u)^{1/2} \}^{*}$ on $\mathbb{R}^{n\times n}$):
\begin{equation*}
a_n :=\Big\{\sum_{j=1}^{n}\big(\log \nu_{j,n} + \nu_{n,j}^{-1}-1  \big)
\Big\}^2
+ 2 \sum_{j=1}^{n} (\nu_{n,j}^{-1}-1)^2
\end{equation*}
From the Hilbert-Schmidt assumption we have that 
$\sup_{n}\sum_{j=1}^{n}(1-\nu_{j,n})^2<\infty$, thus also 
$C_1 \le \sum_{j,n}\nu_{j,n}\le C_2$, for constants $C_1,C_2>0$.
Since $0\le (\log \nu_{j,n} + \nu_{n,j}^{-1}-1) \le C\,(1-\nu_{j,n})^2$
for some constant $C>0$, we have that $\sup_{n}a_n <\infty$.
\end{rk}

Let $Q(u,du')$ being the proposal kernel 
derived from \eqref{eq:infmMALA}; we also consider the bivariate measure  
$\nu(du,du')=\mu(du)Q(u,du')$.
Recall from (\ref{eq:MH1})-(\ref{eq:MH3})
that obtaining the Metropolis-Hastings 
accept/reject rule requires 
finding the Radon-Nikodym derivative 
$d\nu^{\top}/d\nu$.
Similarly to the derivation of $\infty$-MALA in Section~\ref{sec:MCMC} we consider now the bivariate Gaussian law
${\nu}_0(du,du')=\mu_0(du)Q_0(u,du')$
with $Q_0(u,du')$ as in (\ref{eq:refG}).
Recall  we have the symmetry property $\nu_0\equiv\nu_0^{\top}$.
Applying Theorem \ref{th:mm} we have:
\begin{equation}
\label{eq:denG}
\frac{d\nu}{d\nu_0}(u,u') = 
\frac{d\mu}{d\mu_0}(u)\cdot 
\frac{dQ(u,\cdot)}{dQ_0(u,\cdot)}(u') =
\frac{1}{Z}\exp\{-\Phi(u)\}\cdot  
\lambda(\tfrac{u'-\rho u}{\sqrt{1-\rho^2}};u)\ .
\end{equation}
%
%Since $(d\nu^{\top}/d{\nu_0}^{\top})(u,u') = (d\nu/d \nu_0)(u',u)$
%%and upon noticing that $\widetilde{\nu}\equiv
%%\widetilde{\nu}^{\top}$, 
We obtain 
the required density 
as $(d\nu^{\top}/d\nu) = 
[\,d\nu^{\top}/d{\nu}_0^{\top}]\,/\,[\,d\nu/d{\nu_0}\,]$.
We can now define the complete method, labeled $\infty$-mMALA 
in Algorithm \ref{tab:mmMALA}, (the small `m' in the name stands for `manifold').

\begin{algorithm}[!h]
\begin{flushleft}
\medskip
%\hrule
%\medskip
%\textit {SMC:}
{\itshape
\begin{enumerate}
\item[\textit{1.}] Given current $u$, sample independently  
$\xi\sim \mathcal N(0,K(u))$
and propose: 
\begin{equation*}
u' = \rho u + \sqrt{1-\rho^2} \big\{ \xi +  \tfrac{\sqrt{h}}{2} g(u) \big\}\ . 
\end{equation*} 
\item[\textit{2.}]  Accept $u'$ with probability  $a(u,u') = 1\wedge \frac{\kappa(u',u)}{\kappa(u,u')}$, where we have set:
\begin{align*}
\kappa(u,u') = \frac{1}{Z}\,\exp\{-\Phi(u)\}\times
\lambda(\tfrac{u'-\rho u}{\sqrt{1-\rho^2}};u) 
\end{align*}
otherwise stay at $u$.
%
%\vspace{0.18cm}
%
\end{enumerate} }
\medskip
%\hrule
%\medskip
\end{flushleft}
\vspace{-0.4cm}
\caption{A single Markov step for $\infty$-mMALA.}
\label{tab:mmMALA}
\end{algorithm}
\begin{rk}
When $K(u)\equiv \Co$, algorithms $\infty$-MALA and 
$\infty$-mMALA coincide. 
\end{rk}
%
% and the Metropolis-Hastings acceptance probability has the following form:
%%
%\begin{equation}\label{acptAlex2}
%a(u,u') = 1\wedge \frac{d\tp{\nu}}{d\nu}(u,u')\ , \quad \frac{d\tp{\nu}}{d\nu}(u,u') = \frac{\exp\{-\Phi(u')\}\lambda(\rho^{-1}(u;u');u')}{\exp\{-\Phi(u)\}\lambda(\rho^{-1}(u';u);u)}\ , 
%\end{equation}
%%
%where $\rho^{-1}(u';u) = \big\{ (1+h/4)u'-(1-h/4)u\big\}/\sqrt{h}$ and $\lambda(w;u)$ is calculated as follows:
%
%
%
In the following we let ${\mathsf H}(u)$ denote the posterior Hessian, computed
from the negative log posterior:
$${\mathsf H}(u):=\Co^{-1}+ D^2\Phi(u)\ ;$$
since this is not necessarily positive-definite it is also of interest
to consider a modification in which the non-positive and small eigenvalues are
all shifted above a threshold, as in \cite{martin12}, and we use
the same notation ${\mathsf H}(u)$ for this modification in order not
to clutter notation. The following corollary connects our methodology
with the Stochastic Newton (SN) MCMC method from \cite{martin12}.
We also recall that the paper \cite{petra14} considered variants
on this method where ${\mathsf H}(\cdot)$ is evaluated at the MAP
point, and low rank approximations are employed.

\begin{cor}\label{cor:equiv2SN}
%Let ${\mathsf H}(u)$ denote the posterior Hessian \footnote{\cite{petra14} use the Hessian fixed at MAP, $H(u_{\mathrm{MAP}})$, and can be viewed as a special case of \cite{martin12} in which Hessian $H(u)$ is location-specific.}.
When $\rho=0$ ($h=4$), $\infty$-mMALA coincides with the SN MCMC method.
\end{cor}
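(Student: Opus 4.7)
My plan is to verify the corollary by showing that both the proposal kernel and the accept/reject rule of $\infty$-mMALA specialize, when $\rho=0$ and $K(u) = {\mathsf H}(u)^{-1}$, to those of the Stochastic Newton method described in \cite{martin12}. Since the statement is essentially an identification of two algorithms through direct substitution, I would organize the argument around (a) proposal distribution, (b) acceptance ratio, and comment on why no limiting arguments are needed.

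First I would set $\rho=0$, which gives $h=4$ via $\rho=(1-h/4)/(1+h/4)$, and hence $\sqrt{1-\rho^2}=1$ and $\sqrt{h}/2=1$. The proposal \eqref{eq:infmMALA} therefore collapses to
\begin{equation*}
u' = \xi + g(u), \qquad \xi \sim \mathcal N(0, K(u)).
\end{equation*}
Substituting the SN choice $K(u) = {\mathsf H}(u)^{-1}$ into \eqref{eq:ngrad} and using ${\mathsf H}(u) = \Co^{-1} + D^2\Phi(u)$, the drift simplifies to
\begin{equation*}
g(u) = u - {\mathsf H}(u)^{-1}\bigl\{\Co^{-1} u + D\Phi(u)\bigr\},
\end{equation*}
which is precisely the mean $m(u)$ of the local Gaussian approximation of the posterior (i.e.\ the Newton step from $u$ applied to the negative log-posterior). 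Hence $u' \sim \mathcal N\bigl(m(u), {\mathsf H}(u)^{-1}\bigr)$, which is the SN proposal.

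Next I would verify that the Metropolis--Hastings acceptance probability also matches. In this special regime the reference bivariate measure $\nu_0 = \mu_0 \otimes Q_0$ used in Section \ref{sec:MCMC} is no longer useful as an intermediate symmetric law (since $\rho=0$ makes $Q_0(u,\cdot) = \mu_0$), so it is cleaner to derive $d\nu^{\top}/d\nu$ directly from Theorem \ref{th:mm}. Evaluating $\lambda(w;u)$ at $w = u' - \rho u = u'$ and using \eqref{eq:denG} gives
\begin{equation*}
\frac{d\nu}{d\nu_0}(u,u') = \frac{1}{Z}\,\exp\{-\Phi(u)\}\,\lambda(u';u),
\end{equation*}
and multiplying $\lambda(u';u)$ by $d\mu_0/dw = $ the Gaussian density normalisation recovers, in finite dimensions, the density of $\mathcal N(m(u),{\mathsf H}(u)^{-1})$ evaluated at $u'$ multiplied by the posterior density. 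The ratio $d\nu^{\top}/d\nu$ is then precisely $[\mu(u')\,q_{\mathrm{SN}}(u;u')]\,/\,[\mu(u)\,q_{\mathrm{SN}}(u';u)]$, the classical MH ratio for SN.

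The only minor subtlety worth flagging is that the $\rho=0$ choice is an exact algebraic specialisation rather than a limit, so Assumptions \ref{ass:10}--\ref{as:2} still apply verbatim and the whole well-posedness apparatus of Theorem \ref{th:mm} remains in force. I do not anticipate any serious obstacle; the only step requiring care is matching the local Gaussian density of SN with the product $\lambda(u';u)\cdot (\text{prior density})$, which amounts to collecting the quadratic and linear terms in $u'$ and checking that the centre and precision read off as $m(u)$ and ${\mathsf H}(u)$ respectively.
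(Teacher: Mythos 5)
Your proof is correct and follows essentially the same route as the paper: set $\rho=0$ so that $h=4$, $\sqrt{1-\rho^2}=1$ and $\sqrt{h}/2=1$, whence \eqref{eq:infmMALA} collapses to $u'\sim\mathcal N(g(u),K(u))$ with $g(u)=u-{\mathsf H}(u)^{-1}\{\Co^{-1}u+D\Phi(u)\}$, which is the Stochastic Newton proposal. Your additional verification of the acceptance ratio is sound but more than the paper does --- the paper stops at the proposal, since two Metropolis--Hastings schemes with identical proposal and identical target automatically share the same acceptance probability.
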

\begin{proof}
When $\rho=0$, we have $h=4$ from \eqref{eq:infMALA}. 
The proposal \eqref{eq:infmMALA} of $\infty$-mMALA becomes:
\begin{gather}
u' \sim \mathcal N(g(u),K(u))\ , \quad g(u) = u -K(u) ( \Co^{-1} u + D\Phi(u) )\ , \nonumber \\
\quad  K(u) = {\mathsf H}(u)^{-1} \label{eq:equiv2SN}
\end{gather}
which is exactly the proposal for the SN MCMC method defined in Section 2.3
of \cite{martin12}.
\end{proof}

\subsection{$\infty$-mHMC}
\label{sec:inf-mHMC}

Following the same direction as with $\infty$-mMALA, 
we now begin from the continuous-time Hamiltonian dynamics in (\ref{Hamiltonian}), with a location-specific mass matrix:
\begin{equation}
\label{mHamiltonian0}
\frac{d^2u}{dt^2} + K(u)\,
\big\{\, \mathcal{C}^{-1}u + D\Phi(u) \big\} = 0\ .
\end{equation}
For a splitting scheme driven by the local Gaussian 
reference measure $\tilde{\mu}_0$, 
we re-write the above dynamics as:
\begin{equation}
\label{mHamiltonian1}
\frac{d^2u}{dt^2} + K(u)\,
\big\{\, K(u)^{-1}(u-m(u)) + D\widetilde{\Phi}(u;u) \big\} = 0\ .
\end{equation}
As with $\infty$-mMALA, $m(u)$ cancels out.
Setting $du/dt=v$, we make use of the following 
splitting scheme:
\begin{align}
\label{eq:msplit}
du/dt = v \ , \quad 
& dv/dt = -u \ ; \\
\label{eq:msplit1}
du/dt = 0 \ , \quad 
& dv/dt = -K(u)\big\{\,(\mathcal{C}^{-1}-K^{-1}(u))\,u+ D\Phi(u) 
\,\big\} \ . 
\end{align}
Both (\ref{eq:msplit}), (\ref{eq:msplit1}) 
can be solved analytically, the first by applying a rotation. 
Thus, we obtain the following approximate symplectic integrator of (\ref{mHamiltonian0}),
%
%
%by setting $K=K(u)$, 
%and use the split scheme 
%driven by the local reference measure 
%$\mathcal{N}(m(u),K(u))$.
%That is we have the continuous-time dynamics:
%%
%\begin{equation}
%\label{mHamiltonian}
%\frac{d^2u}{dt^2} + u = g(u)
%\end{equation}
%%
%or equivalently:
%%
%\begin{equation}\label{mHD}
%\frac{du}{dt} = v, \quad \frac{dv}{dt} = -u + g(u)
%\end{equation}
%%
for $g$ as defined in \eqref{eq:ngrad}:

%One can find more discussion in the section 
\begin{equation}\label{mHDdiscret}
\begin{aligned}
v^- &= v_0 + \tfrac{\epsilon}{2}\,g(u_0)\ ; \\
\begin{bmatrix} u_\epsilon\\ v^{+}\end{bmatrix} &= \begin{bmatrix} \cos\epsilon & \sin\epsilon\\ -\sin\epsilon & \cos\epsilon
\end{bmatrix}  \begin{bmatrix} u_0\\ v^{-}\end{bmatrix}\  ;\\
v_\epsilon &= v^{+} + \tfrac{\epsilon}{2}\,g(u_\epsilon)\  .
\end{aligned}
\end{equation}
%
%where $\tau(t)/t \to 1$ as $t \to 0$.
Equation \eqref{mHDdiscret} gives rise to the
leapfrog map $\Psi_\epsilon: (u_{0},v_{0})\mapsto (u_{\epsilon},v_{\epsilon})$.
Given a time horizon $\tau$ and current position 
$u$, the MCMC mechanism proceeds 
by proposing:
\begin{equation*}
u' =\mathcal{P}_u\big\{{\Psi}_{\epsilon}^{I}(u,v)\big\}\ , \quad v\sim\mathcal{N}(0,K(u))
\ . 
\end{equation*}
for $I=\lfloor \tau/\epsilon \rfloor$.
%
%
%
%Hamiltonian Monte Carlo proceeds through the proposal $u'$ found from the following formula, for any integer $k$:
%\begin{equation}\label{infmHMC}
%\begin{bmatrix} u_k\\ v_k\end{bmatrix} = \widehat\Psi_t^k \begin{bmatrix} u_0\\ v_0\end{bmatrix}, \quad v_0=\sqrt{G(u)^{-1}}\xi\sim\mathcal N(0,G(u)^{-1})
%\end{equation}
%where $\widehat\Psi_t^k$ is the $k$-fold composition of $\widehat\Psi_t$.
%
Note that the dynamics in (\ref{mHamiltonian0}) 
do not preserve, in general, the target distribution~$\mu$ 
(when initialized with $v\sim \mathcal{N}(0,K(u))$). 
Thus, there is no theoretical guarantee 
that the algorithm will give good 
acceptance probabilities for arbitrary time lengths
$\tau$ with diminishing $\epsilon$ - an important 
property that characterises non-local HMC algorithms.
However, with properly chosen $\tau$,
$\infty$-mHMC, as a multi-step generalization of $\infty$-mMALA 
(see the similar discussion in Section \ref{sec:MCMC} and the formal statement in Remark \ref{cor:multi-step} below), 
is a valuable algorithm to be tested in applications,
and in the numerical examples that follow it is 
indeed found in many cases to be superior than $\infty$-mMALA. 
%It is expected that one must choose 
%relatively smaller values for $\tau$ compared to applications of HMC based on dynamics that do preserve the target.  (not the case in experiments -- preconditioner matters more!)
%
% above formulae does not correspond to Hamiltonian dynamics 
%for some alternative potential since $g(u)$ cannot be written as a divergent.
%Also, the flow $\Psi_t$ defined by its solution does not preserve in general the
%target measure $\mu$.
%Nevertheless, we can still use (\ref{mHD}) to develop a geometric Metropolis-Hastings algorithm on the Hilbert space $\Hi$.

The following theorem is required for  establishing the well-posedness of the developed algorithm. We define the probability measures on the phase-space:
\begin{align*}
S_0(du,dv) &:= \mathcal{N}(0,\Co)(du)\otimes \mathcal{N}(0,\Co)(dv)\ ; \\[0.1cm]
\tilde{S}_0(du,dv) &:= \mathcal{N}(0,\Co)(du)\otimes
 \mathcal{N}(\tfrac{\epsilon}{2}g(u),\Co)(dv)\ ; \\[0.1cm]
S(du,dv) &:= \mu(du) \otimes
  \mathcal{N}(0,K(u))(dv) \ .
\end{align*}
We also define the push-forward probability measures:
\begin{align*}
S^{(i)}: = S\circ \Psi_\epsilon^{-i}\ ,
\quad 1\le i\le I\ .  
\end{align*}
For starting positions $u_0, v_0$, we set $(u_i,v_i):=\Psi_\epsilon^{i}(u_0, v_0)$, $0\le i\le I$.

\begin{thm}\label{thm:infmHMC}
\begin{itemize}
\item[(i)]
Under Assumptions \ref{ass:10}-\ref{as:2}, 
Theorem \ref{th:mm} implies the absolute continuity
$S^{(i)}\simeq S_0$, for all indices $1\le i\le I$, with Radon-Nikodym derivatives satisfying the recursion:
\begin{equation*}
\frac{dS^{(i)}}{dS_0}(u_i,v_i)=
 \frac{dS^{(i-1)}}{dS_0}(u_{i-1},v_{i-1})
 \cdot 
 G(u_{i-1},v_{i-1}+\tfrac{\epsilon}{2}g(u_{i-1}))
 \cdot G(x_i, v_i)
\end{equation*}
where we have defined:
\begin{equation*}
G(u,v) = \frac{d\tilde{S}_0}{dS_0}(u,v) =
\exp\big\{-
\tfrac{\epsilon^2}{8}|\Co^{-\frac12}g(u)|^2+
\tfrac{\epsilon}{2}\langle \Co^{-\frac12}g(u),\Co^{-\frac12}v\rangle\big\} \ . 
\end{equation*}
\item[(ii)]
From (i) we obtain that:
\begin{align*}
\frac{dS^{(I)}}{dS}(u_I,v_I)= 
\frac{(dS/dS_0)(u_0,v_0)}{(dS/dS_0)(u_I,v_I)}\times
\prod_{i=1}^{I} G(u_{i-1},v_{i-1}+\tfrac{\epsilon}{2}g(u_{i-1}))
 \cdot G(u_i, v_i)\ . 
\end{align*}
%
%Upon noticing that:
%%
%\begin{equation*}
%\frac{dS}{dS_0}(u_0,v_0) = 
%\frac{1}{Z'}\,\exp\big\{-\Phi(u_0) - \tfrac{1}{2}\langle v_0 , (K(u_0)^{-1}-\Co^{-1})v_0 \rangle
%+\log |\Co^{1/2}\,K^{-1/2}(u_0)|\big\}
%\end{equation*}
%
We can re-write:
\begin{align*}
\log\big\{ (dS^{(I)}/dS)(u_I,v_I) \big\} =\Delta H(u_0,v_0)
\end{align*}

for the following quantity:
\begin{align*}
\Delta H(&u_0,v_0) 
=\Phi(u_I)-\Phi(u_0)+\tfrac{1}{2}
\langle v_I, (K^{-1}(u_I)-\Co^{-1})v_I\rangle -
\tfrac{1}{2}\langle v_0, (K^{-1}(u_0)-\Co^{-1})v_0\rangle \\[0.2cm]
&-\log|\Co^{1/2}K^{-1/2}(u_I)|
+\log|\Co^{1/2}K^{-1/2}(u_0)|
-\tfrac{\epsilon^2}{8}\big(\,|\Co^{-\frac12}g(u_I)|^2-
|\Co^{-\frac12}g(u_0)|^2\,\big) \\
&+\tfrac{\epsilon}{2} \sum_{i=0}^{I-1}\big(\,\langle \Co^{-1/2}g(u_{i}), \Co^{-1/2}v_i\rangle 
+ \langle \Co^{-1/2}g(u_{i+1}),\Co^{-1/2}v_{i+1}%
\rangle\,\big)  \ . 
\end{align*}
\item[(iii)]
We have the identity:
\begin{align*}
\Delta H(u_0,v_0) \equiv H(u_I, v_I) - H(u_0,v_0)
\end{align*}
for the energy function:
\begin{equation*}
H(u,v):= \Phi(u) + \tfrac{1}{2}\langle u, \Co^{-1}u \rangle + \tfrac{1}{2}\langle v, K(u)^{-1}v \rangle 
- \log |\Co^{1/2}K(u)^{-1/2}|\ . 
\end{equation*}
\item[(iv)] 
Given current position $u\in\Hi$, the Markov chain with proposed move:
\begin{equation*}
u' =\mathcal{P}_u\big\{{\Psi}_{\epsilon}^{I}(u,v)\big\}\ , \quad v\sim\mathcal{N}(0,K(u))
\ ,
\end{equation*}
and acceptance probability:
\begin{equation*}
a  = 1\wedge \exp\{-\Delta H(u,v)\}
\end{equation*}
preserves the target probability measure $\mu$.
\end{itemize}
\end{thm}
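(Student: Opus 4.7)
The plan is to establish (i), (ii), (iii) by direct algebraic verification built on Theorem \ref{th:mm}, and (iv) by a standard augmented-space Metropolis--Hastings argument.

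For (i) I would proceed by induction on $i$, exploiting the three-substep decomposition $\Psi_\epsilon = T_C \circ R \circ T_A$, where $T_A(u,v) = (u, v+\tfrac{\epsilon}{2}g(u))$, $T_C$ is the analogous post-rotation half-kick, and $R$ is the rotation block in \eqref{mHDdiscret}. Two facts drive the induction. First, $R$ preserves the product Gaussian $S_0$: a direct $\sin/\cos$ computation shows that $R$ preserves the quadratic form $\tfrac{1}{2}(\langle u, \Co^{-1}u\rangle + \langle v, \Co^{-1}v\rangle)$ characterising $S_0$. Second, each half-kick fixes the $u$-coordinate and translates the velocity, turning the conditional law $\mathcal{N}(0,\Co)$ into $\mathcal{N}(\tfrac{\epsilon}{2}g(u),\Co)$; by Theorem \ref{th:mm} applied pointwise in $u$, this contributes precisely the Radon--Nikodym factor $G$ with the post-kick velocity as argument. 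Chaining the three sub-pushforwards across a single leapfrog step gives the claimed recursion, and iterating from $dS^{(0)}/dS_0 = dS/dS_0$ yields $S^{(i)} \simeq S_0$ for all $1 \le i \le I$.

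For (ii) I iterate the recursion and divide by $(dS/dS_0)(u_I,v_I)$ to isolate $dS^{(I)}/dS$. The explicit $\Delta H$ formula then follows by plugging in the two pieces of $\log(dS/dS_0)$, namely $-\Phi - \log Z$ from $d\mu/d\mu_0$ and the Feldman--Hajek density $-\tfrac{1}{2}\langle v, (K(u)^{-1}-\Co^{-1})v\rangle + \log |\Co^{1/2}K(u)^{-1/2}|$ from $d\mathcal{N}(0,K(u))/d\mathcal{N}(0,\Co)$, and expanding $\log G(u_{i-1}, v_{i-1}+\tfrac{\epsilon}{2}g(u_{i-1})) + \log G(u_i, v_i)$. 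The translation in the first $\log G$ combines with its cross-term to flip the sign on the squared-norm piece, so summing over $i$ telescopes the $|\Co^{-1/2}g(u_\cdot)|^2$ contributions to the endpoints and produces the symmetric sum of $\langle \Co^{-1/2}g(u_i), \Co^{-1/2}v_i\rangle$ terms in $\Delta H$. For (iii), the same observation that $R$ preserves $Q(u,v) := \tfrac{1}{2}(\langle u,\Co^{-1}u\rangle + \langle v,\Co^{-1}v\rangle)$ is again the key: applying it between $(u_{i-1}, v_{i-1}^-)$ and $(u_i, v_i^+)$ and substituting $v_{i-1}^- = v_{i-1} + \tfrac{\epsilon}{2} g(u_{i-1})$, $v_i^+ = v_i - \tfrac{\epsilon}{2} g(u_i)$ produces a per-step identity expressing $Q(u_i,v_i) - Q(u_{i-1},v_{i-1})$ as exactly the combination of $\tfrac{\epsilon}{2}\langle \Co^{-1/2}g, \Co^{-1/2}v\rangle$ and $\tfrac{\epsilon^2}{8}|\Co^{-1/2}g|^2$ terms sitting in $\Delta H$ but missing from the Hamiltonian difference. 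Summing, the $Q$-differences telescope and supply precisely the $\langle u,\Co^{-1}u\rangle + \langle v,\Co^{-1}v\rangle$ contributions inside $H(u_I,v_I)-H(u_0,v_0)$, yielding $\Delta H(u_0,v_0) = H(u_I,v_I)-H(u_0,v_0)$.

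For (iv) I pass to the augmented phase-space target $S$ and apply the standard HMC argument. The map $T := F \circ \Psi_\epsilon^I$, with $F(u,v)=(u,-v)$, is an involution (symmetry of leapfrog under momentum reversal), and both $H$ and $S$ are $F$-invariant because $v$ appears quadratically. By (ii)-(iii), $(dS^{(I)}/dS)(u_I,v_I) = \exp\{\Delta H(u_0,v_0)\}$, so after accounting for the $F$-invariance of $S$ the Metropolis--Hastings ratio for the bivariate involutive proposal $S(du\,dv)\otimes \delta_{T(u,v)}$ equals $\exp\{-\Delta H(u,v)\}$; consequently $a = 1 \wedge \exp\{-\Delta H\}$ satisfies detailed balance for $S$. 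Because $v$ is refreshed from $\mathcal{N}(0,K(u))$ at the start of each step (a Gibbs update preserving $S$) and $S$ is $F$-invariant, the optional momentum flip is immaterial to the $u$-marginal, so the update $u' = \mathcal{P}_u\{\Psi_\epsilon^I(u,v)\}$ leaves $\mu$ invariant. The main obstacle in this plan is the infinite-dimensional bookkeeping in (iii): individually $\langle u,\Co^{-1}u\rangle$ and $\log|\Co^{1/2}K(u)^{-1/2}|$ are only formal, and one must invoke the Hilbert--Schmidt and image-inclusion clauses of Assumptions \ref{as:1}--\ref{as:2}, in the spirit of the remark after Theorem \ref{th:mm}, to ensure the combinations appearing in $\Delta H$ and in $H(u_I,v_I)-H(u_0,v_0)$ are almost-surely finite and that the term-by-term cancellations are valid.
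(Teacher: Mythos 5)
Your proposal is correct and follows essentially the same route as the paper: the half-kick/rotation/half-kick factorisation of $\Psi_\epsilon$ with the translation theorem supplying the $G$-factors and rotation-invariance of $S_0$ handling the middle map is exactly the paper's argument for (i), and (iv) is the standard momentum-flip reversibility argument the paper also uses (phrased there as a direct expectation identity rather than detailed balance for an involution). Your expansions for (ii) and (iii) — in particular the telescoping of the $|\Co^{-1/2}g(u_\cdot)|^2$ terms and the per-step conservation of $Q(u,v)=\tfrac12(\langle u,\Co^{-1}u\rangle+\langle v,\Co^{-1}v\rangle)$ under the rotation — correctly supply the ``cumbersome but straightforward'' algebra that the paper's appendix omits.
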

\begin{proof}
See \ref{appdx:thm-infmHMC}.
\end{proof}

We can now define the complete method, labeled $\infty$-mHMC, 
in Algorithm \ref{tab:mHMC} below.

\begin{algorithm}[h]
\begin{flushleft}
\medskip
%\hrule
%\medskip
%\textit {SMC:}
{\itshape
\begin{enumerate}
\item[\textit{1.}] Given current $u$, sample independently  $v\sim \mathcal N(0,K(u))$
and propose 
$u' = \mathcal{P}_u\big\{ \Psi_\epsilon^{I}( u,v )\big\}$. 
\item[\textit{2.}]  Accept $u'$ with probability $1\wedge \exp\big\{-\Delta H(u,v)\}$
for the change of energy $\Delta H(u,v)$ defined in Theorem 
\ref{thm:infmHMC} (ii)-(iii), 
%where we have set:
%%
%\begin{align*}
%\Delta H(u,v) =& \,H(\Psi_\epsilon^{I}(u,v)) - H(u,v)\\[0.3cm]
%=& \,\Phi(u_I)-\Phi(u)+\tfrac12\,\big\{
%\langle v_I, (G(u_I)-\Co^{-1})v_{I}\rangle-\langle v, (G(u)-\Co^{-1})v\rangle  \textcolor{red}{-\log|G(u_I)| + \log|G(u)|}
%\big\} \\
%& +\tfrac{\epsilon^2}{8}\big\{|\Co^{-\frac12}g(u)|^2-|C^{-\frac12}g(u_I)|^2\big\} +\tfrac{\epsilon}{2}\, \sum_{\ell=0}^{I-1}\big\{\langle v_{\ell}, \Co^{-1}g(u_{\ell})\rangle + \langle v_{\ell+1}, \Co^{-1}g(u_{\ell+1})\rangle\big\} \ ,
%\end{align*}
%
%The Metropolis-Hastings acceptance probability for target distribution 
%$\mu(du)$ and proposal kernel $Q(u,du')$ is equal to: 
%\begin{equation*}
%a(u,u') = 1\wedge \frac{\kappa(u',u)}{\kappa(u,u')}
%\end{equation*}
%where %$\log\rho(u,u')$ is given as follows:
%$1\wedge\exp(-\Phi(u')+\Phi(u))$, 
otherwise stay at $u$.
%
%\vspace{0.18cm}
%
\end{enumerate} }
\medskip
%\hrule
%\medskip
\end{flushleft}
\vspace{-0.4cm}
\caption{A single Markov step for $\infty$-mHMC.}
\label{tab:mHMC}
\end{algorithm}

\begin{rk}
When $K(u)\equiv \Co$, algorithms $\infty$-HMC and 
$\infty$-mHMC coincide. 
\end{rk}

\begin{cor}\label{cor:multi-step}
Assume that we allow for different step-sizes in the  leapfrog scheme in (\ref{mHDdiscret}):  $\epsilon_1$ in the first and third equation, and $\epsilon_2$ in the second (the rotation). Recall 
the step-size $h$ in the definition of $\infty$-mMALA.
%and $\epsilon_2$, in the first and third equations respectively,  
%in the definition of the  leapfrog scheme in (\ref{mHDdiscret}).
%Let $\epsilon_2$ denote the step-size used in the second equation in (\ref{mHDdiscret}).
Then, if $I=1$, and $\epsilon_1$ and $\epsilon_2$ are such that:
\begin{equation}\label{step-size-setting}
\epsilon_1^2=h\ , \quad
\cos\epsilon_2 = \frac{1-\epsilon_1^2/4}{1+\epsilon_1^2/4}\ , \quad \sin\epsilon_2 = \frac{\epsilon_1}{1+\epsilon_1^2/4}\ ,
\end{equation}
 algorithms $\infty$-mMALA and $\infty$-mHMC coincide.
\end{cor}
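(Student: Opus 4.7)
The plan is to verify, under the step-size constraints \eqref{step-size-setting}, that the proposal map of $\infty$-mHMC reduces to that of $\infty$-mMALA, and then to conclude that the acceptance probabilities must agree because both are the canonical Metropolis-Hastings rule for a common proposal and target.

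First I would compose the three substeps of the modified leapfrog \eqref{mHDdiscret}, using $\epsilon_1$ in the drift equations and $\epsilon_2$ in the rotation, and project onto the $u$-component. This yields
\begin{equation*}
u_\epsilon \;=\; \cos\epsilon_2\cdot u_0 \,+\, \sin\epsilon_2\cdot v_0 \,+\, \tfrac{\epsilon_1 \sin\epsilon_2}{2}\,g(u_0)\ ,\qquad v_0\sim\mathcal{N}(0,K(u_0))\ .
\end{equation*}
A short algebraic check using \eqref{step-size-setting} and the definition $\rho=(1-h/4)/(1+h/4)$ from \eqref{eq:infMALA} shows that $\cos\epsilon_2=\rho$, $\sin\epsilon_2=\sqrt{1-\rho^2}$, and $\epsilon_1 \sin\epsilon_2/2=\sqrt{1-\rho^2}\cdot\sqrt{h}/2$. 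Substituting, the proposal becomes
\begin{equation*}
u_\epsilon \;=\; \rho\, u_0 \,+\, \sqrt{1-\rho^2}\,\bigl(v_0+\tfrac{\sqrt{h}}{2}\,g(u_0)\bigr)\ ,
\end{equation*}
which is exactly the $\infty$-mMALA proposal \eqref{eq:infmMALA} under the identification $v_0\leftrightarrow\xi$, so the two $u$-marginal proposal kernels $Q(u,du')$ coincide.

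For the acceptance, I would then observe that for $I=1$ the leapfrog map is a deterministic bijection on phase space for fixed $u_0$, so $v_0$ is a function of $(u_0,u_\epsilon)$ and the $\infty$-mHMC acceptance can be viewed as a function of $(u_0,u_\epsilon)$ alone. Both algorithms implement the canonical Metropolis-Hastings rule $1\wedge d\nu^{\top}/d\nu$ associated with $\nu(du,du')=\mu(du)Q(u,du')$ and the common target $\mu$: for $\infty$-mMALA this is precisely the ratio $\kappa(u',u)/\kappa(u,u')$ obtained from \eqref{eq:denG}, while for $\infty$-mHMC it is the $u$-marginal induced by the extended-space Radon-Nikodym derivative computed in Theorem \ref{thm:infmHMC}. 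Since the proposal kernels coincide, so do the two acceptance probabilities.

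The main potential obstacle lies in the alternative direct algebraic verification, in which one would substitute $v_0=(u_\epsilon-\rho u_0)/\sqrt{1-\rho^2}-(\sqrt{h}/2)g(u_0)$ (and the corresponding expression for $v_\epsilon$) into the $\Delta H$ formula from Theorem \ref{thm:infmHMC}(ii) and check term-by-term that it agrees with $-\log\{\kappa(u_\epsilon,u_0)/\kappa(u_0,u_\epsilon)\}$ from Algorithm \ref{tab:mmMALA}. The difficulty here is purely in bookkeeping: $\Delta H$ splits the Gaussian-quadratic, drift-gradient, and log-determinant contributions across the endpoints $(u_0,v_0)$ and $(u_1,v_1)$, whereas the $\lambda$-based expression in $\kappa$ groups them through $w=(u'-\rho u)/\sqrt{1-\rho^2}$; the uniqueness of the canonical MH ratio bypasses this matching exercise and reduces the corollary to the proposal-level computation above.
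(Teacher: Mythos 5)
Your proof is correct and follows essentially the same route as the paper's: verify that under \eqref{step-size-setting} the single-leapfrog $u$-update collapses to the $\infty$-mMALA proposal \eqref{eq:infmMALA} (the paper does this by identifying $v^{-}=v$ and then applying the rotation, you by composing the map directly — the same computation), and then conclude that the acceptance probabilities agree because both are the canonical Metropolis--Hastings ratio for the common proposal kernel and target. Your added remark that the $I=1$ leapfrog is a bijection, so the $\infty$-mHMC acceptance is a function of $(u_0,u_\epsilon)$ alone, makes the final step slightly more explicit than the paper's one-line assertion, but it is not a different argument.
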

\proof{See  \ref{appdx:cor-multi-step}}.
%\textcolor{red}{Do we remove the details from Appendix? It motivates the multi-step generalization, though the calculation is tedious.}

\begin{rk}
Following Corollary \ref{cor:equiv2SN} and Corollary \ref{cor:multi-step}, the following plot illustrates graphically the connections between the various algorithms.
%\begin{align*}
%\boxed{\infty\textrm{-MALA}} &\xrightarrow{\textrm{position-dependent preconditioner}\; K(u)} \boxed{\infty\textrm{-mMALA}}\\
%\rotatebox[origin=c]{270}{$\xrightarrow{\textrm{multiple steps}\; (I>1)}$} &\phantom{\xrightarrow{\textrm{position-dependent preconditioner}\; (K(u))}\quad} \rotatebox[origin=c]{270}{$\xrightarrow{\textrm{multiple steps}\; (I>1)}$}\\
%\boxed{\infty\textrm{-HMC}} &\xrightarrow{\textrm{position-dependent preconditioner}\; K(u)} \boxed{\infty\textrm{-mHMC}}\\
%\end{align*}
\begin{align*}
\boxed{\infty\textrm{-MALA}} &\xrightarrow{\textrm{position-dependent preconditioner}\; K(u)} \boxed{\infty\textrm{-mMALA}} &\xrightarrow{h=4} \boxed{\textrm{SN}}\\
\rotatebox[origin=c]{270}{$\xrightarrow{\textrm{multiple steps}\; (I>1)}$} &\phantom{\xrightarrow{\textrm{position-dependent preconditioner}\; (K(u))}\quad} \rotatebox[origin=c]{270}{$\xrightarrow{\textrm{multiple steps}\; (I>1)}$} &\\
\boxed{\infty\textrm{-HMC}} &\xrightarrow{\textrm{position-dependent preconditioner}\; K(u)} \boxed{\infty\textrm{-mHMC}} &\\
\end{align*}
\end{rk}

%\textcolor{red}{{\bf QUES}: is Hilbert space HMC a generalization (multi-step) of $\infty$-MALA?\\
%After careful calculation. The answer is YES. But it is still quite interesting that two ways of generalization (pCN with $\beta=1/2$ and accumulation of energy error) unify.}

\subsection{Split $\infty$-mMALA and $\infty$-mHMC}
Following the discussion on optimal local Gaussian approximation in Subsection \ref{sec:mMALA}
or the metric tensor interpretation in \cite{girolami11}, 
a typical choice of $K(u)^{-1}$ is the expectation over the data $y$ given $u$ 
of the negative Hessian of the log-target (this choice also guarantees positive-definiteness of $K(u)$), that is:
\begin{equation}\label{metric}
K(u)^{-1} = F(u) + \Co^{-1}, \quad F(u) := \E_{y|u}[\,D\Phi(u;y) \otimes D\Phi(u;y)\,]\  .
\end{equation}
%
%Assumptions \ref{as:1} and \ref{as:2}
Assuming a projection onto finite dimension $n\ge 1$, the operations of 
obtaining the operator $K(u)^{-1}$, applying it on a vector, inverting it or sampling from $\mathcal{N}(0,K(u))$ will typically have computational costs of order $\mathcal{O}(n^3)$ for each given
current  $u\in \Hi$. This can be prohibitively expensive when $n$ is large,
and could cause algorithms to be less efficient than simpler ones
that use a constant mass matrix, when compared according to cost
per independent sample.
However, in a large class of inverse problem applications, the 
typical wave-length of the eigenfunctions of the covariance $\Co$
decays as the eigenvalues decay (consider for example the periodic
setting where $\Co$ is an inverse fractional power of the Laplacian operator $\Delta$).  
As a consequence, for typical observations which inform low frequencies,
the information from the data spreads non-uniformly with respect to
the coordinates $\{u_i\}$ of the unknown function parameter $u$,
with most of it concentrating on the low-frequency coordinates.
We will take advantage of this setting to recommend an effective choice of preconditioner $K(u)$.
 
Recall the orthonormal basis $\{\phi_j\}$ of $\Hi$ consisting of 
eigenfunctions of $\Co$, and the isomorphism $\Hi \leftrightarrow \ell^2$
mapping $u\leftrightarrow \{u_j\}$ with $u= \sum_{j\ge 1}u_j\phi_j = 
\sum_{j\ge 1}\langle u, \phi_j \rangle \phi_j $.
%
%
%However, if the eigenvalues of $F(u)$ satisfy certain decay conditions in the original space $\Hi$ or in some transformed space $\tilde \Hi$, 
%then most of the data information will concentrate on some subspace, say $\Hi_0$ (or $\tilde \Hi_0$) spanned by the first $D_0\ge 1$ dominant eigendirections of a particular operator, with $D_0 \ll N$. 
%
%Consider a countable orthonormal basis of $\Hi$, $\{e_n\}_{n \ge 1}$.
%We assume that this consists of the eigen-functions of $\Co$.
%Using the standard isomorphism $\Hi \leftrightarrow \ell^2$, where 
%$\ell^2$ denotes the Hilbert space of squared summable real sequences,
%we can now think of  $u\in \ell^2$ as the coordinates of $u\in \Hi$ with respect to
% $\{e_n\}$.         
For a cut-off point $D_0\ge 1$, 
we write $u=(u^t, u^r)$ with $u^t := u_{1:D_0}$ 
and  residual part $u^r:=u_{(D_0+1):\infty}$. 
%We set:
%%
%\begin{equation*}
%\Hi_0 = \Big\{u\in \Hi: u=\sum_{j=1}^{D_0}u_j \phi_j\Big\}
%\end{equation*}
%
We define the truncation operator $T$ mapping
\begin{equation}
\label{truncation}
 \quad u\mapsto (u^t,0,0,\ldots)\ 
\end{equation}
with domain $\Hi^{-\ell}$.
Balancing computational considerations with mixing effectiveness of the proposal move within 
the Metropolis-Hastings framework, we recommend using 
 the following truncated Fisher information operator:
\begin{equation}
\tilde{F}(u)  = \mathbb{E}_{y|u}[\,T\{D\Phi(u;y)\} \otimes T\{D\Phi(u;y)\}\,]\  .
 \end{equation}
Thus, we recommend the following choice:
\begin{equation}
\label{eq:Gnew}
K^{-1}(u) := \tilde{F}(u) + \Co^{-1}\  .
\end{equation}
Given that $\{\phi_j\}$ corresponds to the eigenfunctions of $\Co$,
operator  $K(u)$ in (\ref{eq:Gnew}) trivially satisfies 
Assumptions \ref{as:1}-\ref{as:2}, as $\tilde{F}(u)$ only has  
a finite-size  upper diagonal block of non-zero entries. Indeed, we can write:
\begin{equation*}
K(u)^{-1} = \left(  
\begin{array}{cc} 
\{K(u)^{t}\}^{-1} & 0 \\
0 & \{K(u)^{r}\}^{-1}
\end{array} 
\right) =  \left(\begin{array}{cc} 
\tilde{F}(u)^{t} + \Co^{t} & 0 \\
0 & \Co^{r}
\end{array} 
\right)\  .
\end{equation*}
with the truncations on the operators defined in the obvious way.
%
%\textcolor{red}{
%\begin{prop}
%$G^*(u)$ satisfies Assumption \ref{asump1} and \ref{asump2}.
%\end{prop}
%}

We label as \emph{Split $\infty$-mMALA} and \emph{Split $\infty$-mHMC}
the correponding MCMC methods resulting from the above choice of location specific preconditioner.
The calculation of all required algorithmic quantities is now simplified, 
due to $K(u)$ being diagonal except for a finite-range of values.
Indeed, in the case for instance of  Split $\infty$-mMALA, the proposal may be written as:
\begin{equation*}
(u^t, u^r)' = \rho\,(u^t, u^r) + \sqrt{1-\rho^2}\,\big\{ (\xi^{t}, \xi^{r}) + \tfrac{\sqrt{h}}{2}(g(u)^{t}, g(u)^{r}) \big\}
\end{equation*}
where we have:
\begin{gather*}
\xi^t\sim \mathcal{N}\big(0,K(u)^t \big) \ , \quad  \xi^r \sim \mathcal{N}(0,\Co^{r})\ ,\\[0.2cm]
g(u)^{t} = - K(u)^{t}\,\big\{  - \tilde{F}(u)^{t}\,u^{t} + D\Phi(u)^{t}  \big\}\ , \quad 
g(u)^{r} = - \Co^{r} D\Phi(u)^{r} \ . 
\end{gather*}

\begin{rk}
%We conclude this section with some comments on the connection and the deviation to DILI \citep{cui16}
Splitting the proposal into a likelihood-informed and a simpler step bears similarities 
with the `intrinsic subspace' method in \cite{cui16}.
%The application of the dimension reduction methodology, truncating K-L expansion of Gaussian prior, is built on the same philosophy as DILI's.
%That is, to employ geometrically capable and computationally intensive MCMC methods to probe complicated structure of a finite, preferably low dimensional `intrinsic subspace' and to explore its orthogonal complement in the Hilbert space with relative simple methods like pCN. 
%The split allows for a manageable overall computational cost.
%However, even within the finite-dimensional subspace, complex structure may not be well explored by Langevin dynamics with a fixed preconditioner \citep{girolami11,lan14} no matter how it is obtained. 
We stress however that our methodology develops geometric algorithms, 
in the sense that it employs location-specific curvature information.
The development of the geometric methods in a general setting in the earlier sections (beyond the truncation we recommend here) is still necessary for mathematical rigorousness, and more importantly, for the numerical robustness to possibly high-dimensional `intrinsic subspaces'.
As previously discussed, the straightforward splitting implemented here works fairly well on a class of inverse problems we consider in Section~\ref{sec:numerics}. 
%with the `intrinsic subspace' dictated (not solely determined). 
%by the prior. For problems where the likelihood plays a more important role in finding the `intrinsic subspace', one should consider those data-informed dimension reduction techniques including LIS \citep{cui14} and AS \citep{constantine15a}.
We should mention here that in a context where the data in the inverse problem possess such strong information that a faithful representation of $u$ would require a large set  of high-frequency coordinates,
then more sophisticated likelihood-informed splitting methods, e.g. \cite{cui14}, \cite{constantine15a}, could potentially be considered to help derive low-dimensional `intrinsic subspaces'.
\end{rk}

%%%%%%%%%%%%%%%%%%%%%%%%%%%%%%%%%%%%%%%%%%

\section{Numerical Experiments}
\label{sec:numerics}
Our experiments involve simulation studies based on three
physical inverse problems.
% Data are generated
%on a finer mesh than that used at the inversion.
The prior is in each case Gaussian on a Hilbert space $\Hi$. 
%In subsequent subsections we consider each application
%in turn, defining the likelihood, and hence posterior, through solution
%of a PDE.
In this section we consider three inverse problems -- the groundwater flow, the thermal fin heat conductivity and the laminar jet.
The first two examples are implemented in MATLAB (r2015b) and the last one is implemented in FEniCS \citep{alnaes15,logg12}.
All computer codes are available at \href{https://bitbucket.org/lanzithinking/geom-infmcmc}{https://bitbucket.org/lanzithinking/geom-infmcmc}.
The necessary adjoint and tangent linearized versions of this solver are derived with the dolfin-adjoint package \citep{farrell13}.

\subsection{Prior Specification}
\label{sec:gauss}
We will consider Hilbert spaces $\Hi\subseteq L^2(\mathcal D;\mathbb R)$, the latter denoting the space of real-valued squared-integrable functions on bounded open domains $\mathcal D\subset \mathbb R^d$, $d\ge 1$.
We denote by
$\langle\cdot,\cdot\rangle$ and $\Vert\cdot\Vert$ 
the inner product and norm, respectively, of $L^2(\mathcal D;\mathbb R)$.
In all of our examples we will construct the Karhunen-Lo\`eve (K-L) expansion \eqref{eq:KL}
through eigenfunctions of the Laplacian.
% with \textcolor{red}{Neumann boundary
%conditions}.
Specifically, we choose
covariance operators on $\Hi$ of the form: 
%Consider a covariance operator $\Co$ of the structure:
%
\begin{equation}\label{cov-op}
\sigma^2(\alpha\, \mathrm{I}-\Delta)^{-s}
\end{equation}
for scale parameters $\alpha,\sigma^2>0$, `smoothness' parameter $s\in\mathbb{R}$ 
and the Laplacian $\Delta=\sum_{j=1}^{d}
\partial_j^2 $.

In the first two numerical examples we have $d=2$,
and rectangular domain~$\mathcal{D}$ 
of the form $[k_1,k_2]\times[l_1,l_2]$
for integers $k_1, k_2, l_1, l_2$.
In this case, we will work with the 
orthonormal basis:
\begin{equation}
\phi_i({\bf x}) =2 |\mathcal{D}|^{-1/2}\,\cos\big\{\pi \big(i_1+\tfrac{1}{2}\big) x_1\big\}
\cos\big\{\pi \big(i_2+\tfrac{1}{2}\big) x_2\big\} \ ,
\quad i_1\ge 0\ ,\,\,i_2\ge 0\  .
\label{eigenbasis}
\end{equation}
Thus, the Hilbert space will 
be (we set $I=\{i=(i_1,i_2):i_1\ge 0, i_2\ge 0\}$): 
$$\Hi = \mathrm{span}\big\{\phi_i; i\in I\big\}\equiv 
\big\{u\in L^2(\mathcal{D};\mathbb{R}):
u=\sum_{i\in I}u_i \phi_i\ ,\, \sum_{i\in I}u_i^2<\infty  \big\}\ . $$
Guided by (\ref{cov-op}), we set the covariance operator $\Co$ as:
\begin{equation}
\label{eq:C}
\Co = \sum_{i\in I}\lambda_i^2 \{\phi_i\otimes \phi_i\}
\ ;\quad 
 \lambda_i^2 = \sigma^2\,\big\{ \alpha+\pi^2\big(\big(i_1+\tfrac{1}{2})^2 + 
 \big(i_2+\tfrac{1}{2}\big)^2\big)\big\}^{-s}\ .
\end{equation} 
For $\Co$ to be trace-class we require that 
$\sum_{i\in I}\lambda_i^2<\infty$, that is $s>1$.
%Note that if $\{\lambda_i^2\}_{i\in I}$ and 
%$\{\phi_i\}_{i\in I}$ denote the eigenvalues and eigenfunctions of $\Co$ respectively.
%If $s>d/2$ and $\lambda_i\asymp i^{-\frac{s}{d}}$, 

In the third example we will have $d=1$, $\mathcal{D}=[-1,1]$
and use a prior covariance with the following orthonormal eigenfunctions and eigenvalues:
%We assume an 1D Gaussian prior with covariance operator $\Co$ defined in \eqref{cov-op} 
%so that $\theta(y)$ is defined on $\mathcal{D}=
%[-1,1]$ via a K-L expansion \eqref{KLexpansion} with
%eigen-structure:
%
\begin{equation}
\phi_i(x) = \big(\tfrac{1}{\sqrt{2}}\big)^{\delta\,[\,i=0\,]}\cos(\pi i x)\ ,\quad 
\lambda_i^2 = 
2^{\,\,\delta\,[\,i=0\,]}\,
\sigma^2\,\{\,\alpha+(\pi i)^2\,\}^{-s}\ , \quad i\ge 0\ , 
\end{equation}
where $\delta\,[\cdot]$ is the indicator of whether condition(s) in the square bracket being satisfied (1), or otherwise (0), and
the trace-class property requires that $s>1/2$.
%Thus, 
%$\Co$ gives rise to a Gaussian measure 
%$\mathcal \mathcal{N}(0,\Co)$, so that each draw 
%$u\sim \mathcal{N}(0,\Co)$ admits the Karhunen-Lo\`eve (K-L) expansion \citep{adler10,bogachev98,dashti15}:
%%
%\begin{equation}\label{KLexpansion}
%u(\mathbf{x}) = \sum_{i\in I} u_i \lambda_i\,\phi_i(x)\ , \quad u_i \overset{iid}{\sim} \mathcal{N}(0,1)\ .
%\end{equation}
%%

%Suppose $d=2$ in our applications.  $\{\phi_d({\bf x})\}$ can be constructed from a cosine (Fourier) basis, and $\{\lambda_d^2\}$ can be calculated thereafter.
%More specifically, we have
%\begin{equation}\label{eigenpair}
%\begin{aligned}
%\phi_i({\bf x}) &= \cos[\pi (i_1+0.5) x_1 + \pi (i_2+0.5) x_2], \; or\; \cos[\pi (i_1+0.5) x_1] \cos[\pi (i_2+0.5) x_2]\\
%\lambda_i &= \sigma[\alpha+\pi^2((i_1+0.5)^2 + (i_2+0.5)^2)]^{-s/2}
%\end{aligned}
%\end{equation}
%where $i=(i_1,i_2)$ with $0\leq i_1,i_2 < D$ for some dimension $D$ large enough, $\sigma>0,\alpha\ge0$ are scale parameters, and $s>1$ controls the smoothness of the prior.

For given orthonormal basis $\{\phi_i\,;\, i\in I\}$, we run MCMC algorithms to sample K-L coordinates $\{u_i:=\langle u,\phi_i\rangle\,;\, i\in I_0\subset I\}$ in the following experiments, for some chosen non-negative integer $|I_0|$.
These coordinates can be viewed as projections of parameter function $u$ onto K-L modes up to $|I_0|$.
The splitting methods are implemented with Fisher operator truncated on the first $D_0$ of $|I_0|$ coordinates.
The gradient $D\Phi(u)$ is obtained by one adjoint solver in addition to the forward solution to the relevant PDE;
the metric action $\tilde{F}(u)\,v$ is obtained by another two extra adjoint (incremental) solvers for each $v\in \mathbb X$~\citep{bui14}.

\begin{figure}[t]
  \begin{center}
    \includegraphics[width=1\textwidth]{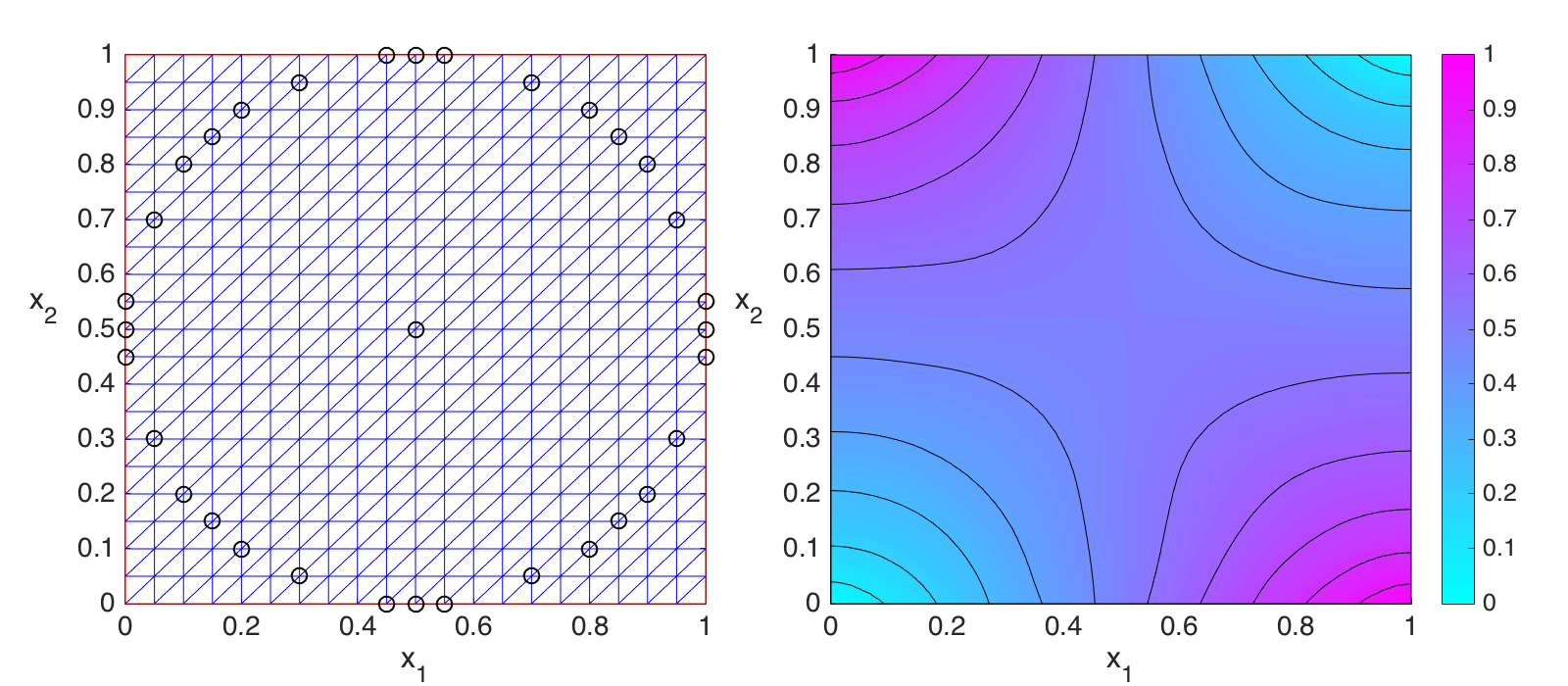}
  \end{center}
  \vspace{-0.5cm}
  \caption{Groundwater flow problem: the location of noisy observations (circles) on $[0,1]\times[0.1]$ (left) and the forward PDE solution under the true 
  permeability $u^{\dagger}$ (right).}
  \label{fig:obs_soln_groundwater}
\end{figure}

\subsection{Groundwater Flow}
We consider a canonical inverse problem involving the following elliptic PDE \citep{dashti11,conrad14} defined on the unit
square $\mathcal{D}=[0,1]^2$:
\begin{equation}
\label{groundwater}
\begin{aligned}
- \nabla\cdot (e^{u({\bf x})}\nabla p({\bf x})) &= 
0\  ;\\ p({\bf x})|_{x_2=0} & = x_1\ ; \\
p({\bf x})|_{x_2=1} & = 1-x_1\ ; \\
\left.\frac{\pa p({\bf x})}{\pa x_1}\right|_{x_1=0} &= \left.\frac{\pa p({\bf x})}{\pa x_1}\right|_{x_1=1} = 0\ . 
\end{aligned}
\end{equation}
This PDE serves as a simple model of steady-state flow in aquifers and other
subsurface systems. 
The unknown parameter $u$ represents the logarithm of permeability of the porous medium
and
$p$ represents the hydraulic head function.
The inverse problem involves inferring the log-permeability field $u=u(\mathbf{x})$ based on noisy observations,
$y$, of $p=p(\mathbf{x})$.

We consider a Gaussian prior on  $\Hi \subset L^{2}( \mathcal{D};\mathbb{R})$
with covariance $\Co$ of eigen-structure 
$(\lambda_j^2, \phi_j)_{j\in I}$, 
as explained in Subsection \ref{sec:gauss}.
We pick hyper-parameter values $\alpha=0$, $s=1.1$, $\sigma^2=1$.
To generate the data, we choose 
the true log-permeability field $u^{\dagger}$
via its coordinates
 $u_i^{\dagger}= \lambda_i^{1/2} \sin\big((i_1-1/2)^2+(i_2-1/2)^2\big)\cdot 
\delta\,[\,1\le i_1,i_2\le 10\,]$. 
In this setting, we solve the forward equation \eqref{groundwater} on a $40\times 40$ mesh and add Gaussian noise to 33 positions, $\mathbf{x}_n$, $1\le n \le 33$, of the true 
hydraulic head function $p^{\dagger}$ located on a circle and shown
on the left panel of Figure \ref{fig:obs_soln_groundwater}. In particular, 
we simulate data as:
\begin{equation*}
y_n = p^{\dagger}({\bf x}_n) + \eps_n\ , \quad \eps_n\sim \mathcal N(0,\sigma_y^2)\ ,
\end{equation*}
with $\sigma_y^2=10^{-4}$.
When running the MCMC algorithms,
the posterior is approximated by projecting 
the coordinates on $I_0=\{i\in I:i_1\le 10,\,i_2\le 10\}$  
and applying the PDE solver on a $20\times 20$ mesh.

We run the MCMC algorithms: pCN, $\infty$-MALA, $\infty$-HMC, $\infty$-mMALA, $\infty$-mHMC, Split $\infty$-mMALA and Split $\infty$-mHMC. For the split methods we truncate at $i_1, i_2\le 5$ based on threshing the eigenvalues $\{\lambda_i^2\}$ of $\Co$. Therefore we have $|I_0|=100$ and $D_0=25$ for this example.
Each algorithm is run for $1.1\times10^4$ iterations,
with the first $10^3$-samples used for burn-in.
HMC algorithms use a number of leapfrog steps 
chosen at random between 1 and 4.
All steps-sizes were tuned to obtain acceptance probabilities 
of about $60\%$-$70\%$.

\begin{figure}
  \centering
  \includegraphics[width=1\textwidth]{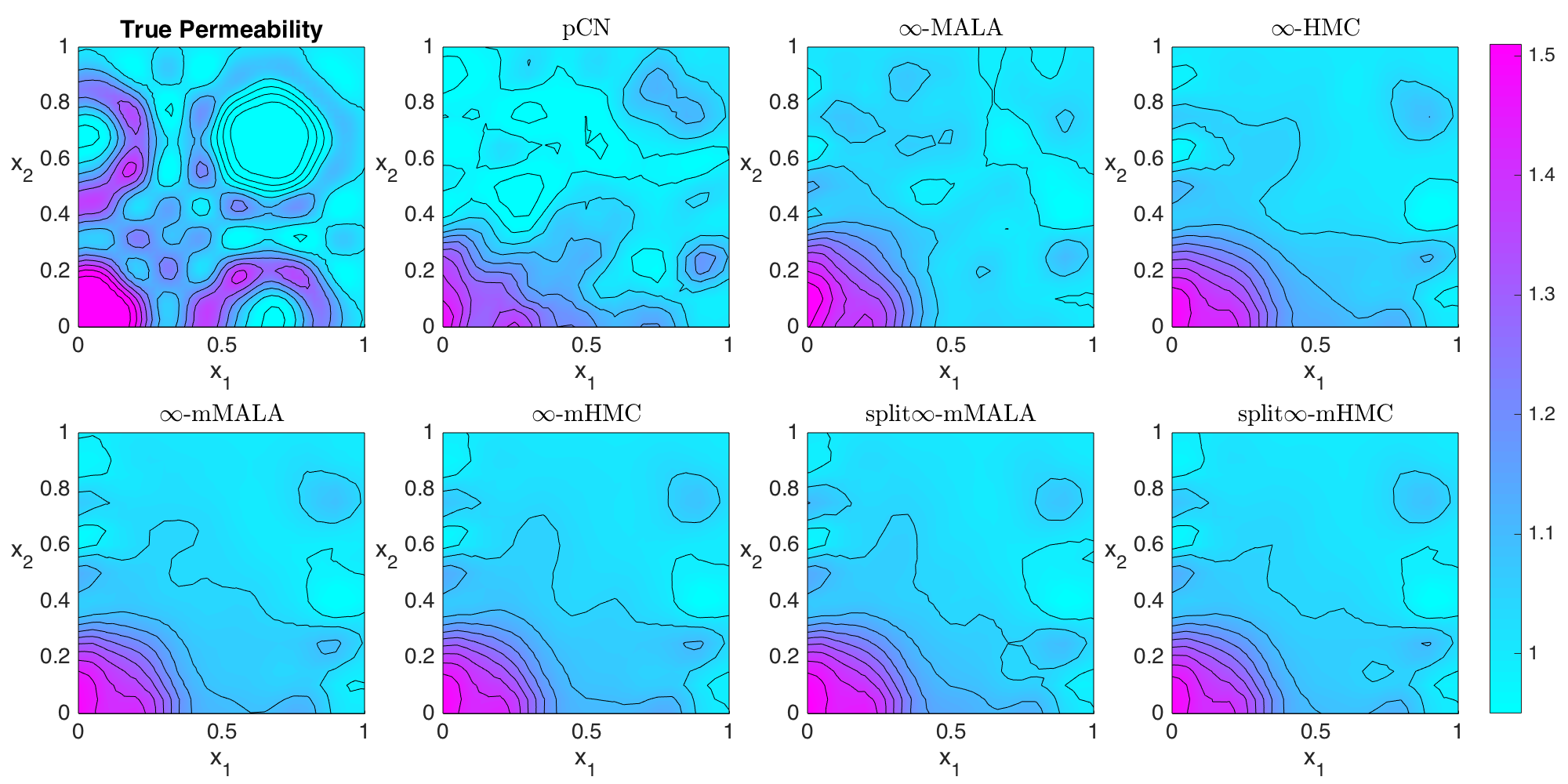}
\vspace{-1cm} 
  \caption{Groundwater flow problem: the true permeability field $e^{u^{\dagger}}$ (upper left-most) 
  and the posterior mean estimates provided by the various MCMC algorithms.}
  \label{fig:truth_est_groundwater}
\end{figure}
Figure \ref{fig:truth_est_groundwater} illustrates the posterior mean estimates of the permeability of the porous medium provided by the various algorithms.
The estimates by pCN and $\infty$-MALA differ from the rest due to the bad convergence properties of these algorithms. Figure \ref{fig:misfit_acf_groundwater}
shows the traceplots and corresponding autocorrelation functions for the negative log-likelihood $\Phi(u)$ (or `data-misfit') evaluated at the sample values;
the various traces are vertically offset to allow for comparisons. 
%
%
%\begin{figure}[t]
%  \begin{center}
%    \includegraphics[width=1\textwidth,height=.4\textwidth]{est_var_groundwater}
%  \end{center}
%  \caption{Posterior variance of permeability in groundwater problem}
%  \label{fig:est_var_groundwater}
%\end{figure}
%
\begin{figure}[!h]
  \centering
  \includegraphics[width=1\textwidth]{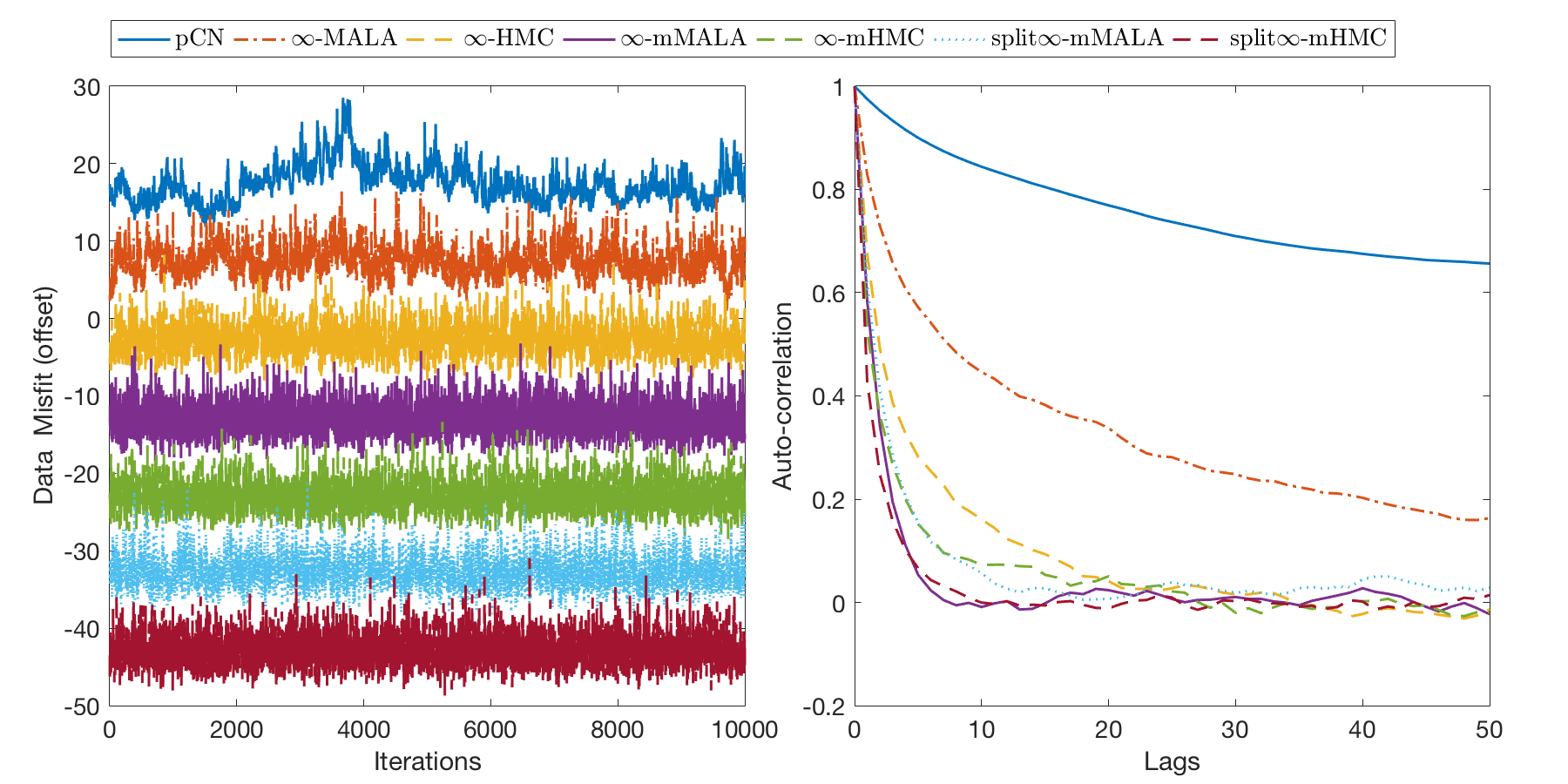}
  \vspace{-0.8cm}
  \caption{Groundwater flow problem: the trace plots of data-misfit function  (left panel, values have been offset for better comparison) and the 
 corresponding acf plots (right panel).\vspace{0.2cm}}
  \label{fig:misfit_acf_groundwater}
\end{figure}
%
% latex table generated in R 3.2.4 by xtable 1.8-2 package
% Tue Mar 29 15:09:45 2016
\begin{table}[!h]
%\vspace{0.2cm}
\tiny
\centering
\begin{tabular}{l|cclccc}
  \hline
Method & AP & s/iter & ESS(min,med,max) & minESS/s & spdup & PDEsolns \\ 
  \hline
pCN & 0.69 & 4.86E-03 & (5.72,17.23,52.6) & 0.118 & 1.00 & 11001 \\ 
  $\infty$-MALA & 0.71 & 2.23E-01 & (27.15,58.44,138.93) & 0.012 & 0.10 & 22002 \\ 
  $\infty$-HMC & 0.77 & 5.62E-01 & (302.37,461.03,590.36) & 0.054 & 0.46 & 54822 \\ 
  $\infty$-mMALA & 0.75 & 8.09E-01 & (1422.11,1747.68,2051.5) & 0.176 & 1.49 & 2222202 \\ 
  $\infty$-mHMC & 0.62 & 1.99E+00 & (2514.45,3667.88,4438.35) & 0.126 & 1.07 & 5562070 \\ 
  Split $\infty$-mMALA & 0.67 & 3.20E-01 & (654.22,1078.15,1283.37) & 0.205 & 1.74 & 572052 \\ 
  Split $\infty$-mHMC & 0.67 & 8.02E-01 & (3641.2,5230.48,5746.96) & 0.454 & 3.85 & 1434940 \\ 
   \hline
\end{tabular}
\caption{Sampling efficiency in the groundwater flow problem. Column labels are as follows.
AP: average acceptance probability; s/iter: average seconds per iteration; ESS(min,med,max): minimum, median, maximum of Effective Sample Size across all posterior coordinates; min(ESS)/s: minimum ESS per second;
spdup: speed-up relative to base pCN algorithm;
PDEsolns: number of PDE solutions during execution.
%HMC algorithms do leap frog for steps randomly chosen between 1 and 4. Split$^*$ truncates K-L at $D_0$=25. \vspace{0.2cm}} 
\vspace{0.2cm}}
\label{tab:groundwater}
\end{table}
Table \ref{tab:groundwater} compares the sampling efficiency of the various algorithms.
Once more information is introduced (gradient or/and  location-specific scales in the geometric methods) the mixing of the algorithms improves drastically. Even when the increased computational cost is taken under consideration, the overall effectiveness of Split $\infty$-mHMC, as measured by the minimal effective sample size (ESS) per CPU time (in secs), points to close to 4-fold improvement compared to pCN. In this example, the non-geometric methods $\infty$-MALA, $\infty$-HMC perform worse than pCN due to insufficient mixing improvement when weighted against the extra computations. 
The same holds for $\infty$-mHMC, clearly motivating in this case the significance of  the truncation technique for reducing computational costs within Split $\infty$-mHMC.

\begin{figure}[!h]
  \centering
  \includegraphics[width=1\textwidth]{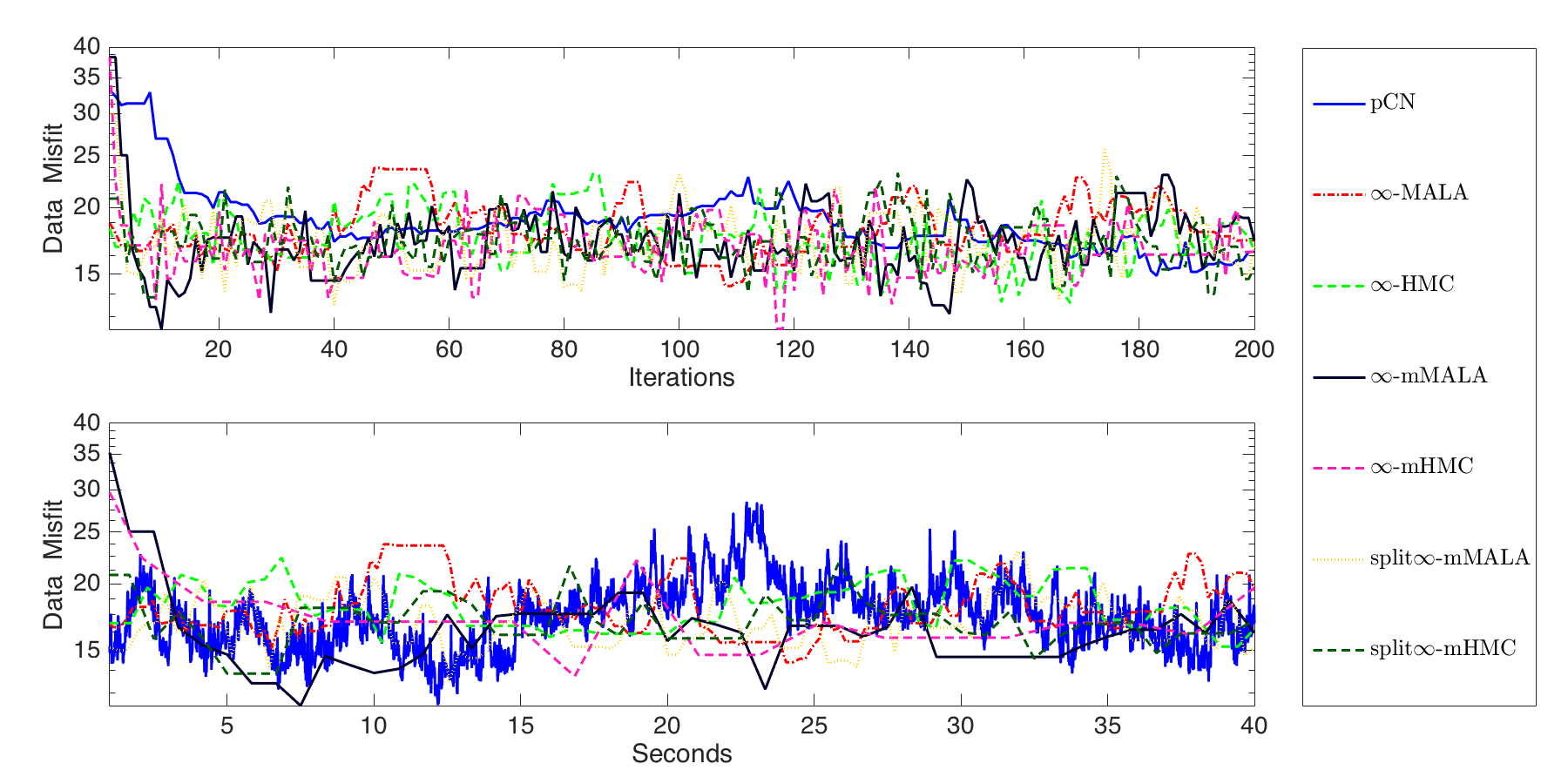}
  \vspace{-0.9cm}
  \caption{Groundwater flow problem: trace plots of data-misfits before burn-in for the first 200 iterations (upper panel) and first 40 seconds (lower panel) respectively.}
  \label{fig:misfit_groundwater}
\end{figure}

Figure \ref{fig:misfit_groundwater} shows the first few data-misfit evaluations
at the beginning of the algorithms. PCN  exhibits strong diffusive behavior. 
The lower panel, where the horizontal axis corresponds to execution time, 
seems to indicate that maybe the various methods are not dramatically better than 
pCN in this case.
Still, as mentioned above, the optimal speed-up against pCN  is by a factor of $4$.
In the two subsequent, more complex, examples the speed-up 
factor will be much larger.
Splitting methods with truncation number different from $D_0=25$ are also implemented: smaller $D_0$ causes the truncated Fisher operator to lose useful information while larger $D_0$ negatively impacts the computational advantage. One can refer to Figure \ref{fig:est4indep_groundwater} for illustration. Other results are omitted for brevity of exposition.

\begin{figure}[!h]
  \centering
  \includegraphics[width=1\textwidth]{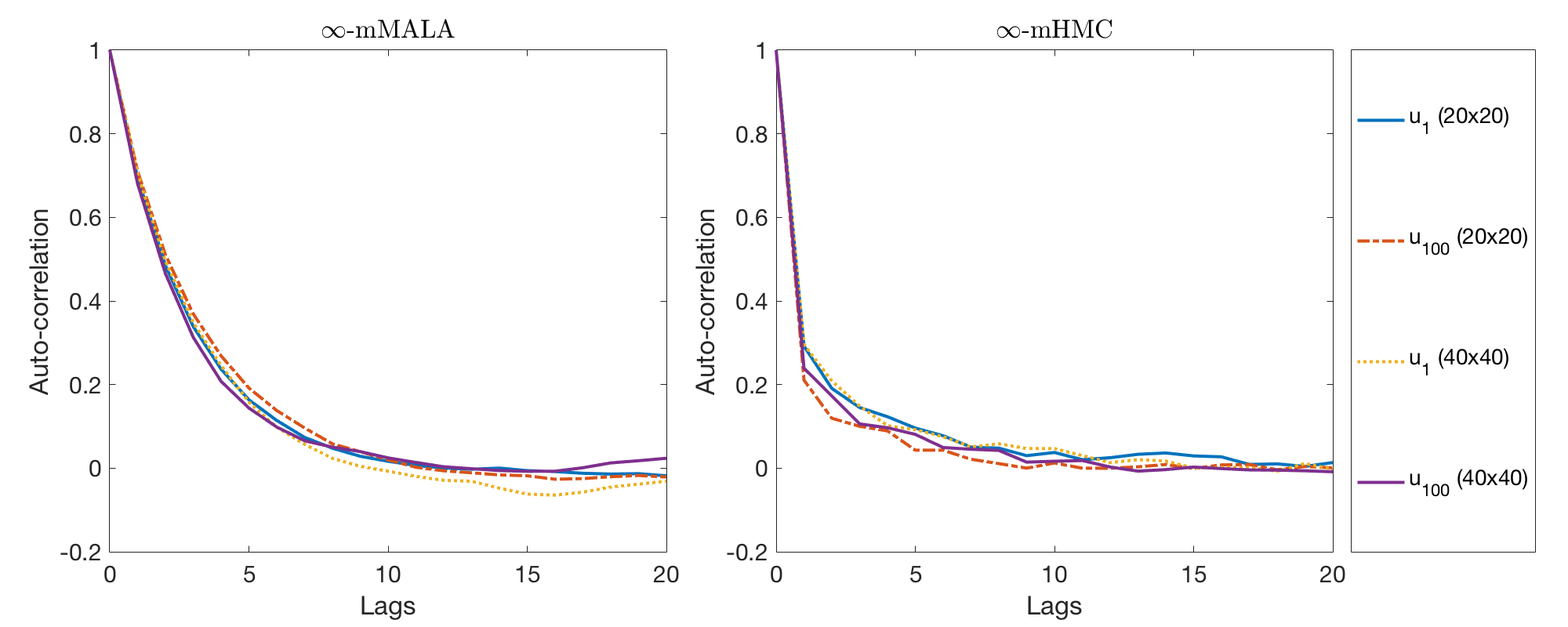}
\vspace{-1cm} 
  \caption{Groundwater flow problem: auto-correlation function of selected samples ($u_1, u_{25}, u_{100}$) generated by $\infty$-mMALA (left) and $\infty$-mHMC (right) with forward PDE solving carried on $20\times20$ mesh and $40\times40$ mesh.}
  \label{fig:acf4indep_groundwater}
\end{figure}

To verify mesh-independence of the proposed methods, we re-do the above inference with forward PDE solved on a refined, $40\times40$ mesh. Since the mesh-independence of non-geometric methods has been established in the literature \citep{beskos08,cotter13,beskos11}, and split algorithms are special cases of their full versions, we only compare the performance of $\infty$-mMALA (and $\infty$-mHMC) with PDE solved on $20\times20$ mesh and $40\times40$ mesh.
For $\infty$-mMALA, the two implementations share the same acceptance rate $75\%$ and their effective sample sizes (minimum, median, maximum) are $(1422.11,1747.68,2051.5)$ and $(1263.78,1757.22,2056.68)$ respectively. For $\infty$-mHMC, the two implementations have similar acceptance rates ($62\%$ and $61\%$ repectively), and effective sample sizes $(2514.45,3667.88,4438.35)$ and $(2311.26,3469.34,4469.44)$ respectively.
Figure \ref{fig:acf4indep_groundwater} illustrates that for both $\infty$-mMALA and $\infty$-mHMC, the auto-correlation functions of selected samples decay with lag but do not deteriorate under mesh refinement. This fact means that the number of MCMC steps to reach equilibrium is independent of the mesh \citep{hairer14}.
Figure \ref{fig:est4indep_groundwater} shows the close posterior mean estimates of the permeability field by $\infty$-mHMC with PDE solved on those two meshes (Similar result exists for $\infty$-mMALA but is omitted), which also
qualitatively confirms the mesh-independence of $\infty$-mMALA and $\infty$-mHMC. The column wise comparison of estimates using different number of modes indicates that most posterior information is concentrated in the subspace formed by the first 25 eigen-directions.

\begin{figure}[t]
  \centering
  \includegraphics[width=1\textwidth]{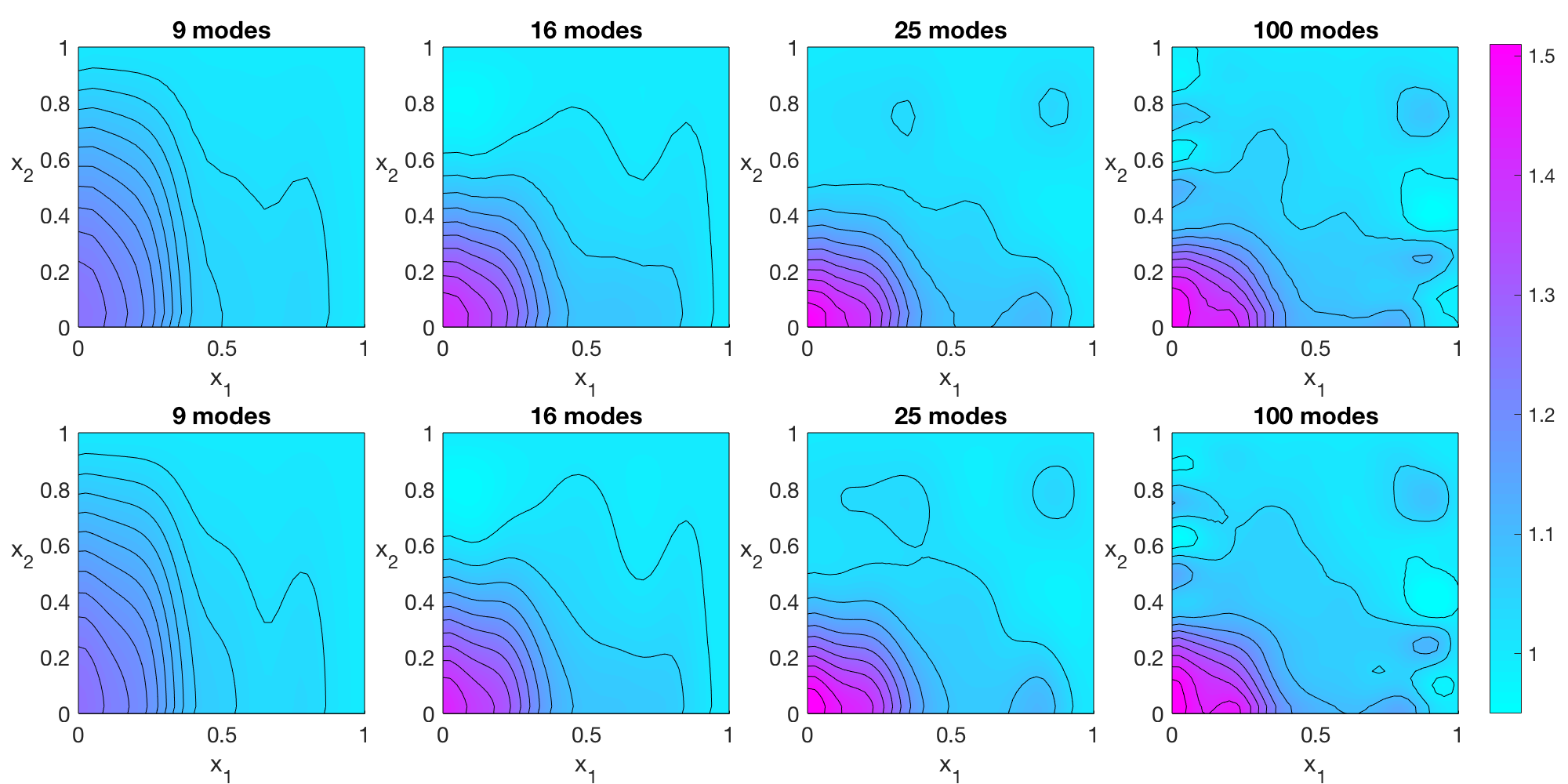}
\vspace{-1cm} 
  \caption{Groundwater flow problem: estimated permeability field $e^u$ using samples by $\infty$-mHMC with forward PDE solving carried on $20\times20$ mesh (upper row) and on $40\times40$ mesh (lower row). Each column corresponds to estimates with different number of modes (components of $\{u_i\}$).}
  \label{fig:est4indep_groundwater}
\end{figure}

\begin{figure}[t]
  \begin{center}
    \includegraphics[width=1\textwidth]{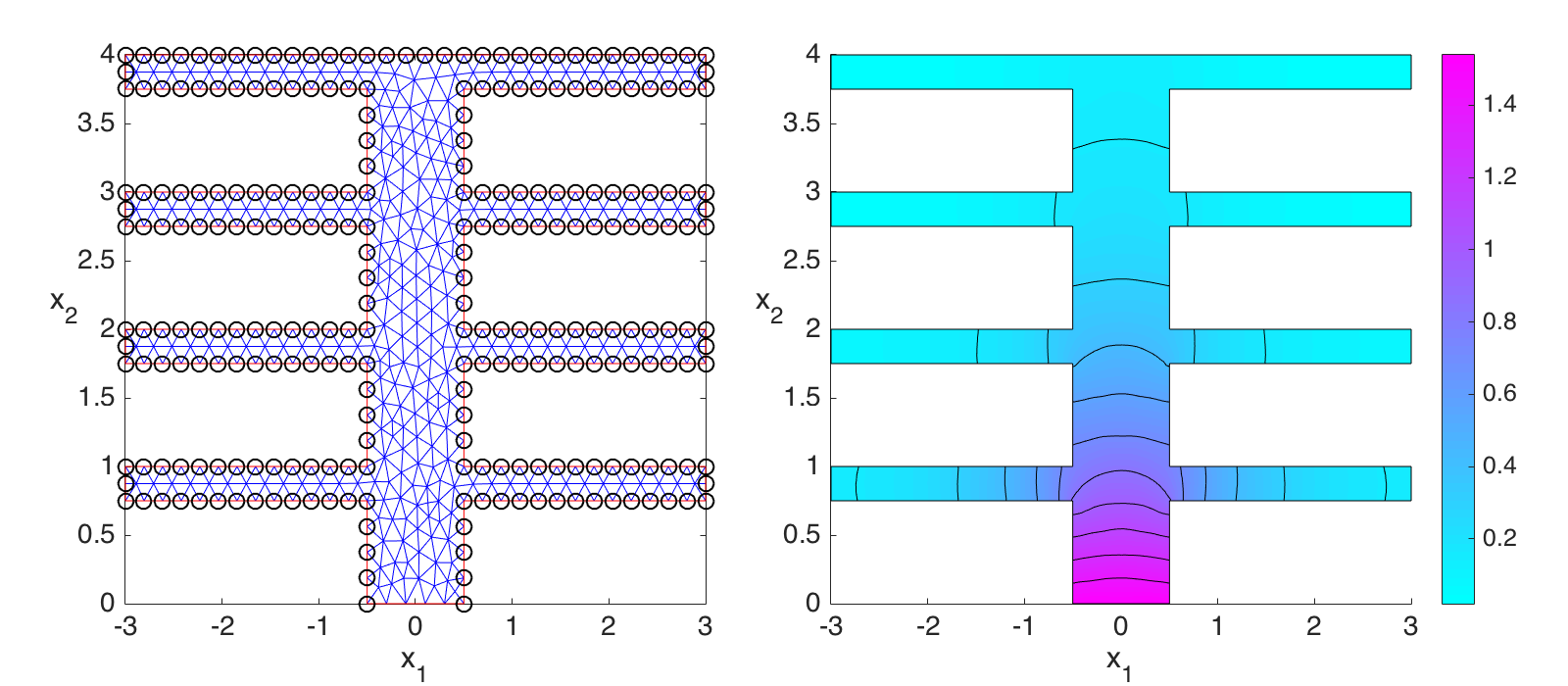}
  \end{center}
  \vspace{-0.8cm}
  \caption{Thermal fin problem: the location of observations (circles) (left panel) and the forward PDE solution $p^{\dagger}$ under the true 
  parameter $u^{\dagger}$ (right panel).}
  \label{fig:obs_soln_thermalfin}
\end{figure}

\subsection{Thermal Fin}
We now consider the following thermal fin model:
\begin{equation}\label{ThermalFin}
\begin{aligned}
-\nabla\cdot (e^{u({\bf x})}\nabla p({\bf x})) = 0\ ,\qquad\qquad \,\,\quad &\mathbf{x}\in\mathcal E^{0}=\mathrm{Interior}(\mathcal{E})\, , \\
(e^{u({\bf x})}\nabla p({\bf x})) \cdot {\bf n}  = -Bi\,\cdot p({\bf x})\ ,\quad &\mathbf{x}\in\pa\mathcal E\backslash\Gamma\ , \\
(e^{u({\bf x})}\nabla p({\bf x})) \cdot {\bf n} = 1\ ,  \quad\qquad\qquad\, &\mathbf{x}\in\Gamma=[-0.5,0.5]\times\{0\}
\ . 
\end{aligned}
\end{equation}
These equations model the heat conduction over 
the non-convex domain $\mathcal E$ depicted in Figure \ref{fig:obs_soln_thermalfin},
where $\Gamma=[-0.5,0.5]\times\{0\}$ is a part of the boundary $\pa\mathcal E$ on which the inflow heat flux is 1.
For the rest of the boundary we assume Robin boundary conditions.
Following \cite{bui15}, we set the Biot number to $Bi=0.1$.
The forward problem (\ref{ThermalFin}) provides the temperature $p$ given the heat conductivity function $e^u$ and the inverse problem involves reconstructing $u$ from noisy observations of $p$.
The complexity of the model domain
makes this inverse problem more challenging than the previous groundwater flow problem.

The prior for $u$ is obtained as explained at Subsection 
\ref{sec:gauss}, for domain $\mathcal{D}=[-3,3]\times [0,4]$.
We have chosen a rectangular domain $\mathcal{D}$ for $u$ which
contains the domain $\mathcal{E}$ of the PDE 
as a convenient way to construct the prior. 
However it should be mentioned that such a construction may introduce non-physical correlations between the fins;
priors which are geometry-adapted could be used but would be more
complicated to implement and maybe go beyond the scope of this paper.
In this example, we set $\alpha=0$, $s=1.2$, 
$\sigma^2=1$ in the specification of $\Co$. 
The true log-conductivity field 
$u^{\dagger}$ has coordinates
$u_i^{\dagger}=\lambda_i^{1/2}\,\sin((i_1-\tfrac{1}{2})^2+(i_2-\tfrac{1}{2})^2)\cdot 
\delta\,[\,i_1\le 10,\,i_2\le 10\,]$
and the simulated data are obtained by 
solving \eqref{ThermalFin} on a triangular mesh (left panel of Figure \ref{fig:obs_soln_thermalfin}) with discretization step-size $h_{max}=0.1$.
Then, $N=262$ observations are taken along the Robin boundary $\pa\mathcal E\backslash\Gamma$ (we denote the positions of the observations 
$\{\mathbf{x}_n\}$, $1\le n \le N$)
 and contaminated with Gaussian noise with mean zero and standard deviation
$\sigma_y=0.01\cdot \max_{1\leq n\leq N}\{p^{\dagger}({\bf x}_n)\}$, as in \citep{bui15}.
When running the MCMC algorithms, 
we project on the coordinates on 
$\{i_1,i_2\le 10\}$, and use the same finite
element construction as above.
HMC algorithms use a number of leapfrog steps randomly chosen between 1 and 4. 
The split methods apply the geometric principle 
on $\{i_1,i_2\le 5\}$.
Thus similarly as the previous example, $|I_0|=100$ and $D_0=25$.
\begin{figure}[t]
  \centering
  \includegraphics[width=1\textwidth]{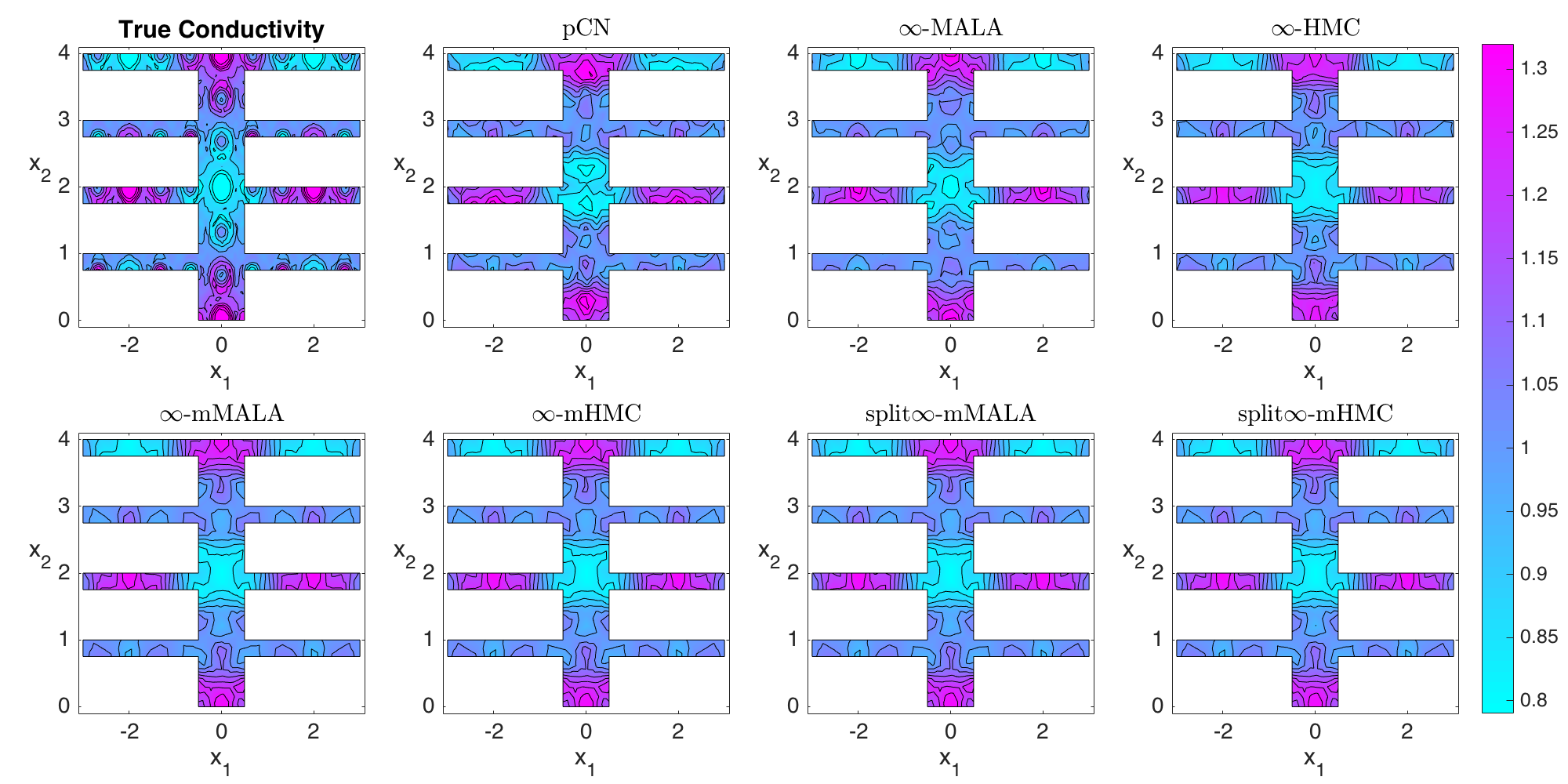}
  \vspace{-0.9cm}
  \caption{Thermal fin problem: the true
  heat conductivity field $e^{u^{\dagger}}$ (upper left-most) and the posterior mean estimates obtained by the various MCMC algorithms.
  \vspace{0.2cm}}
  \label{fig:truth_est_thermalfin}
\end{figure}

In this example, there are ample data points (262) to provide enough information
in inferring (100) unknown parameters, which is different from the previous example
as an underdetermined elliptic inverse problem (inferring 100 unknown parameters from 33 data points) \citep{dashti15}.
As shown in Figure \ref{fig:truth_est_thermalfin},
the posterior mean estimates of heat conductivity are consistent across different algorithms and %and Figure \ref{fig:conductivity_variance}.
close to the truth. Due to having more informative data in this 
example, the posterior mean is closer to the truth 
than in the previous example (see Figure \ref{fig:truth_est_groundwater}).
%\begin{figure}[t]
%  \begin{center}
%    \includegraphics[width=1\textwidth,height=.4\textwidth]{est_var_thermalfin}
%  \end{center}
%  \caption{Variance of conductivity field of Thermal fin model}
%  \label{fig:est_var_thermalfin}
%\end{figure}
%
% latex table generated in R 3.2.4 by xtable 1.8-2 package
% Tue Mar 29 15:09:45 2016
\begin{table}[!ht]\tiny
\centering
\begin{tabular}{l|cclccc}
  \hline
Method & AP & s/iter & ESS(min,med,max) & minESS/s & spdup & PDEsolns \\ 
  \hline
pCN & 0.67 & 6.97E-03 & (3.61,8.67,29.93) & 0.052 & 1.00 & 11001 \\ 
  $\infty$-MALA & 0.70 & 9.60E-02 & (5.52,15.07,33.91) & 0.006 & 0.11 & 22002 \\ 
  $\infty$-HMC & 0.75 & 2.34E-01 & (24.78,81.13,156.41) & 0.011 & 0.20 & 55264 \\ 
  $\infty$-mMALA & 0.79 & 5.12E-01 & (1729.28,2224.8,2474.28) & 0.338 & 6.51 & 2222202 \\ 
  $\infty$-mHMC & 0.69 & 1.31E+00 & (4018.07,5679.26,6956.14) & 0.306 & 5.90 & 5582270 \\ 
  Split $\infty$-mMALA & 0.77 & 1.53E-01 & (1180.78,1792.34,2026.81) & 0.770 & 14.87 & 572052 \\ 
  Split $\infty$-mHMC & 0.72 & 3.85E-01 & (5327.64,7107.08,8335.14) & 1.384 & 26.70 & 1432704 \\ 
   \hline
\end{tabular}
\caption{Sampling efficiency in the thermal fin problem. Column labels are as in Table \ref{tab:groundwater}.
%AP: average acceptance probability; s/iter: average seconds per iteration; ESS(min,med,max): minimum, median, maximum of Effective Sample Size across all posterior coordinates; min(ESS)/s: minimum ESS per second;
%spdup: speed-up relative to base pCN algorithm;
%PDEsolns: number of PDE solutions during execution.
%HMC algorithms do leap frog for steps randomly chosen between 1 and 4. Split$^*$ truncates K-L at $D_0$=25. \vspace{0.2cm}} 
\vspace{0.2cm}}
\label{tab:thermalfin}
\end{table}
\begin{figure}[!h]
  \centering
  \includegraphics[width=1\textwidth]{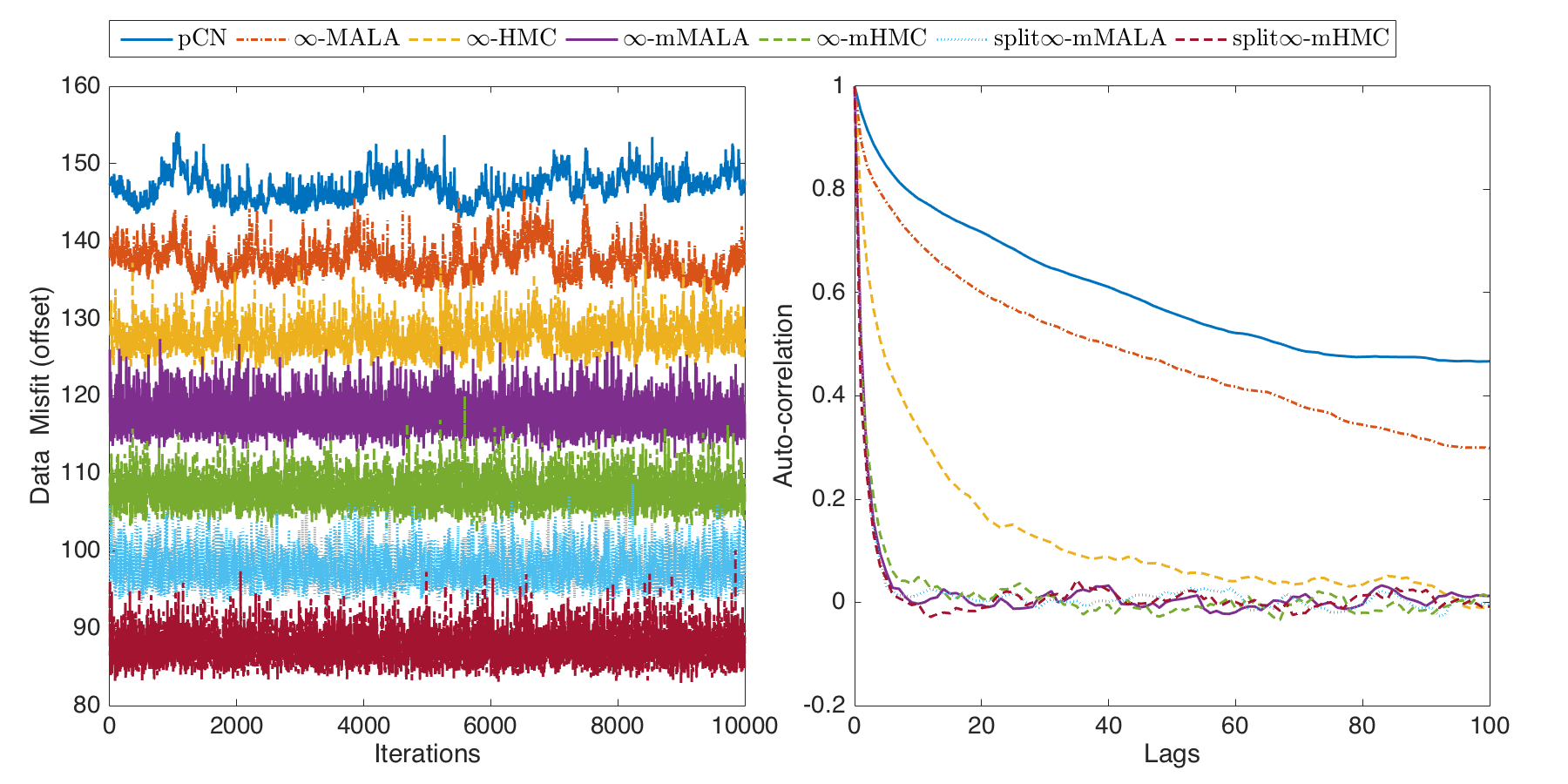}
  \vspace{-0.9cm}
  \caption{Thermal fin problem: trace plots of data-misfit function (left panel, values have been offset for better comparison) and
  the corresponding acf functions (right panel).}
  \label{fig:misfit_acf_thermalfin}
\end{figure}
Table \ref{tab:thermalfin} and Figure \ref{fig:misfit_acf_thermalfin} compare the sampling efficiency of different algorithms. 
Notice that more than an order of magnitude of improvement is observed for Split $\infty$-mMALA and Split $\infty$-mHMC compared to pCN. 
%The right panel of Figure \ref{fig:misfit_acf_thermalfin} shows a significant reduction of sample autocorrelation 
% given for the geometric methods compared with non-geometric ones.
%
%
\begin{figure}[!h]
  \centering
  \includegraphics[width=1\textwidth]{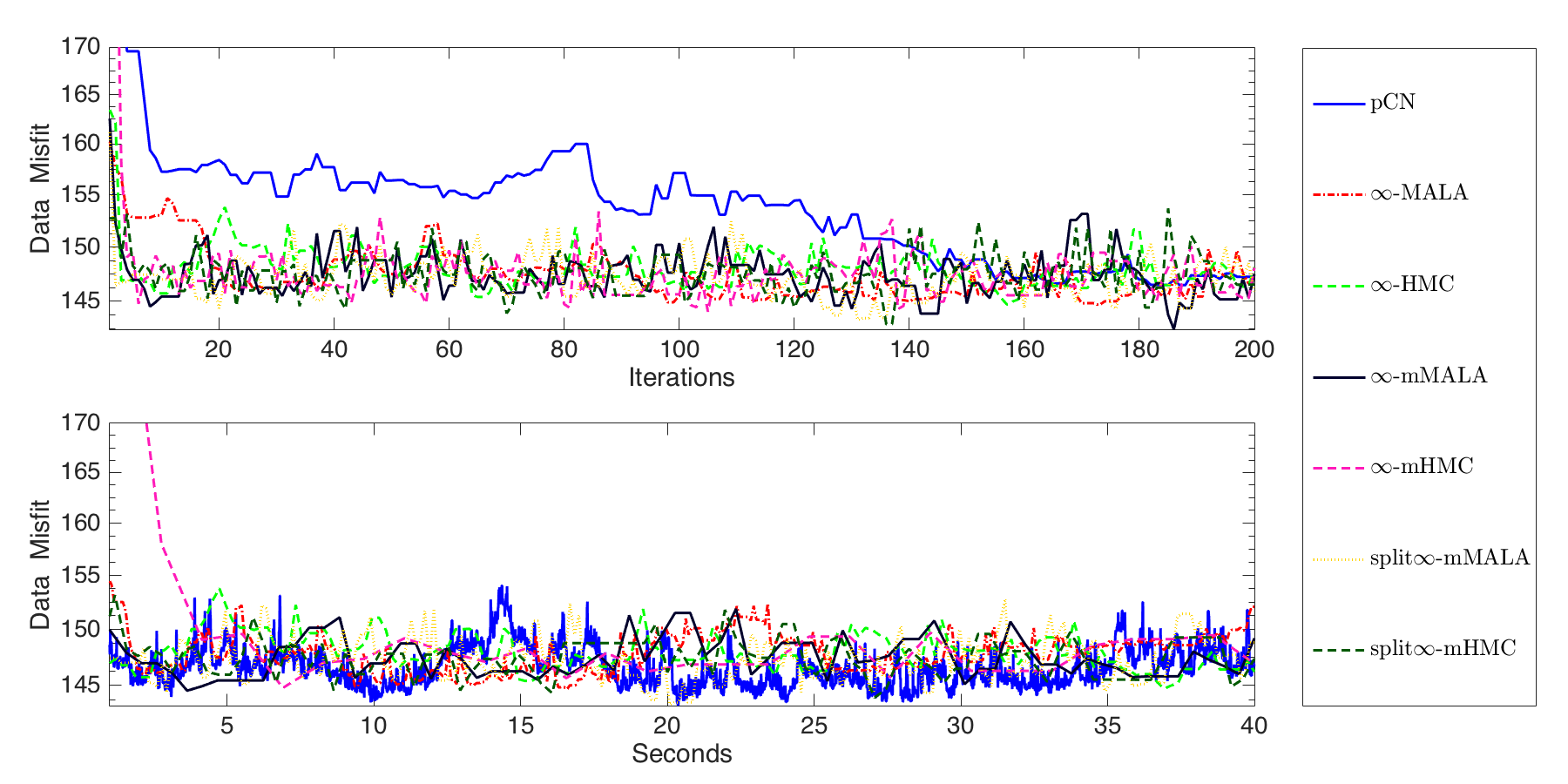}
  \vspace{-0.9cm}
  \caption{Thermal fin problem: trace plots of data-misfits before burn-in for the first 200 iterations (upper) and the first 40 seconds (lower) respectively.}
  \label{fig:misfit_thermalfin}
\end{figure}
In Figure \ref{fig:misfit_thermalfin}, pCN needs several iterations to reach the stationary stage. 
%But as with the previous example the PDE solution cost is not intensive
%enough to show an advantage when the misfit is plotted against time.
Notice that in this case also $\infty$-mHMC requires some time 
before reaching the stationary regime.

\subsection{Laminar Jet}
We consider the 2D incompressible Navier-Stokes equation:
\begin{equation}\label{laminarjet}
\begin{aligned}
\textrm{Momentum}:\quad &
%\partial_t{\bu} + \bu\cdot \nabla \bu= -\nabla p + \div \left( \nu (\nabla \bu + \nabla \bu^\top) \right),
- \div \left( \nu\,\big(\nabla \bu + \nabla \bu^\top
\big) \right) + \bu\cdot \nabla \bu + \nabla p = 0\ , \\
\textrm{Continuity}:\quad &
\div\, \bu = 0\ ,\\
& \bu \cdot \bn = -\theta(y)\ , \quad \bsigma_n \times \bn = 0\ ,  \quad {\rm on}\,\,\; \mathcal I\ ,  \\
& \bsigma_n + \beta\,(\bu\cdot \bn)_{-}\, \bu = \bZero\ ,  \quad {\rm on}\,\,\; \mathcal O\ ,  \\
& \bu \cdot \bn = 0\ , \; \bsigma_n \times \bn = 0\ ,   \quad {\rm on}\;\,\, \mathcal B\ ,  
\end{aligned}
\end{equation}
where $\bu=(u_1,u_2)$ is the velocity, $p$ is the pressure and
 $\nu>0$ is the viscosity.
Vector $\bn$ denotes the unit normal to the mesh boundary and $$\bsigma_n = - p\,\bn + \nu\, (\nabla \bu + \nabla \bu^\top) \cdot \bn$$ represents the boundary traction. Also, 
$$(\bu\cdot \bn)_{-}\, = (\bu \cdot \bn - | \bu \cdot \bn |)/2$$ and $\beta \in (0, 1]$ is the backflow stabilization parameter
in \cite{moghadam11}.
This PDE models non-reacting turbulent jet dynamics.
$\mathcal{I}, \mathcal{O}, \mathcal{B}$ denote the inlet, outlet and
bounding sides respectively, to be described below.

\begin{figure}
  \centering
  \includegraphics[width=1\textwidth]{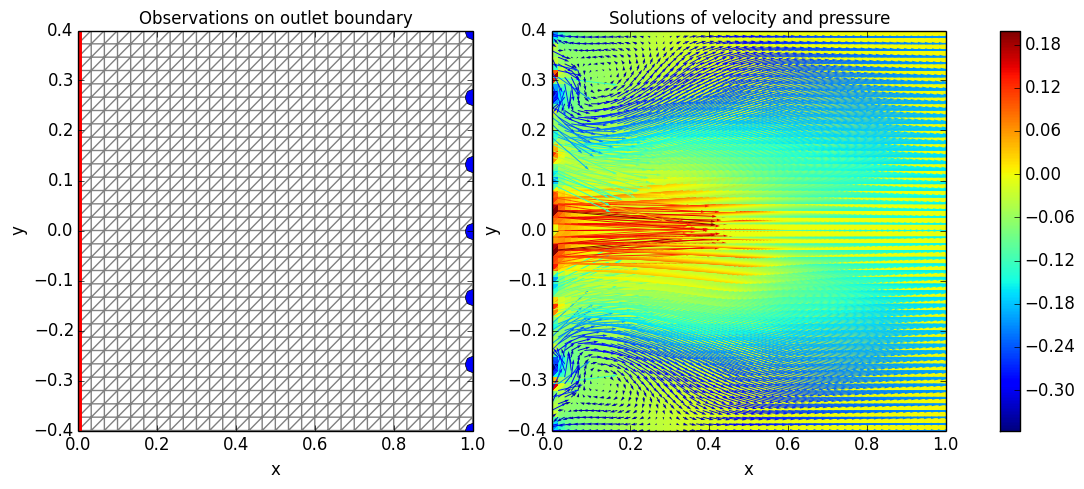}
  \vspace{-0.9cm}
  \caption{Laminar jet problem: (left panel) the location of inlet velocity to be inferred (red line) and the measurement
locations (blue dots); (right panel) the forward PDE solution with true unknown $\theta^{\dagger}$, with the heat map showing the pressure $p$ and the arrows representing the velocity field $\bu$.}
  \label{fig:obs_solns_laminar}
\end{figure}

We will describe a concrete simplified problem setting following \cite{klein03}.
The relevant domain $\mathcal{E}$ for the PDE is a rectangle with length $L_x=10L$ and width $L_y=8L$, 
with parameter $L$ being a typical lengthscale of the
(unknown) inlet velocity field; it is set to $L=0.1$ in this experiment. 
The induced domain $\mathcal{E}=[0,1]\times [-0.4,0.4]$ is shown 
on the left panel of Figure  \ref{fig:obs_solns_laminar}. 
We consider the following boundary conditions. At the inlet boundary $\mathcal I=\{\,x=0,\,y \in (- L_y/2, L_y/2)\,\} $
we prescribe a normal velocity profile $\theta(y)$ and vanishing tangential stress. 
At the outflow boundary  $\mathcal O=\{x=L_x\ , y \in (-L_y/2\,, L_y/2)\}$ we prescribe a traction-free condition plus an additional convective traction term to stabilize regions of possible backflow \citep{moghadam11}. 
Finally, on the bounding sides 
$\mathcal B=\{x\in(0, L_x),\,y=\pm L_y/2\}$ we prescribe free-slip conditions.
A typical solution is shown in the right panel of Figure \ref{fig:obs_solns_laminar}, where the heat map shows the pressure $p$ and the arrows 
represent the velocity field $\bu$.
Note that the color change along the inlet boundary reflects the persistence of high frequencies in the true inflow velocity profile (see also the left panel of Figure \ref{fig:truth_est_laminar}). 

\begin{figure}
  \centering
  \includegraphics[width=1\textwidth]{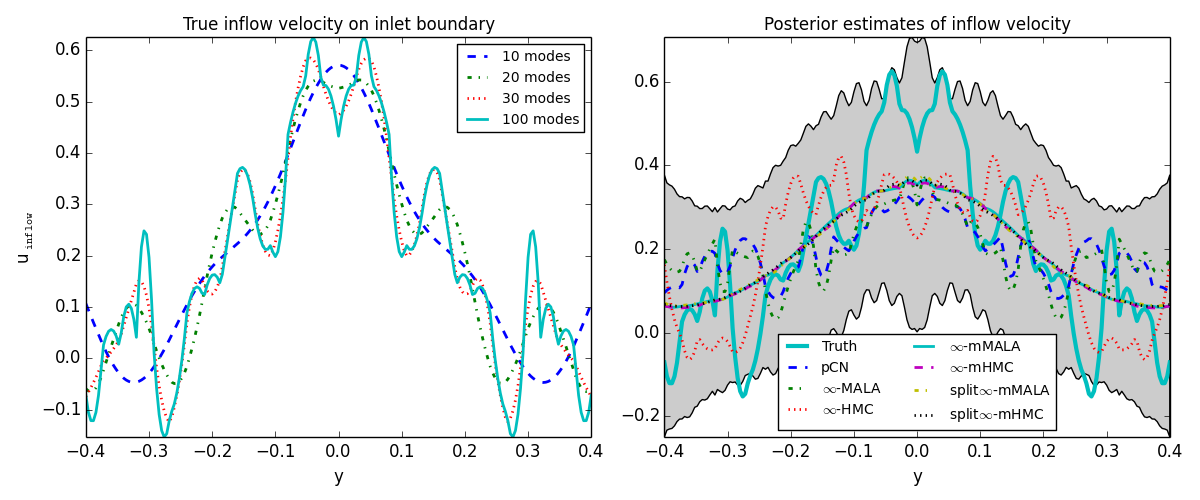}
  \vspace{-1cm}
\caption{Laminar jet problem: true inflow velocity  for increasing number of frequencies (left panel); the true $\theta^{\dagger}$ used 
corresponded to the highest number shown, 100. 
Also, posterior mean estimates provided  by
the MCMC algorithms (right panel). 
Results of all geometric algorithms (with small `m') agree with each other and the others (non-geometric methods) do not because they have not converged.
The shaded region shows the $95\%$ credible band constructed with samples from $\infty$-mHMC.}
  \label{fig:truth_est_laminar}
\end{figure}

Given an inflow velocity profile $\theta=\theta(y)$ on $\mathcal I$, the forward problem 
computes $\bu(x,y)$, and $\varphi(y) = \bu(L_x,y)$.
The inverse problem aims to infer $\theta=\theta(y)$ given noisy observations of $\varphi(y)$ on the right boundary
$\mathcal O$.
We assume an 1D Gaussian prior on the super-domain $[-1,1]$ 
as explained in Subsection \ref{sec:gauss}.
% \eqref{cov-op} 
%so that $\theta(y)$ is defined on $\mathcal{D}=
%[-1,1]$ via a K-L expansion \eqref{KLexpansion} with
%eigen-structure:
%%
%\begin{equation}
%\phi_i(y) = \big(\tfrac{1}{\sqrt{2}}\big)^{\mathrm{I\,[\,i=0\,]}}\cos(\pi i y)\ ,\quad 
%\lambda_i^2 = 
%2^{\,\,\mathrm{I\,[\,i=0\,]}}
%\sigma^2\,\{\,\alpha+(\pi i)^2\,\}^{-s}\ , \quad i\ge 0\ , 
%\end{equation}
%
We choose hyper-parameters $\sigma=0.5$, $\alpha=1$ and $s=0.8$. 
%Note that under the choice of $s$ paths of $\theta(y)$ are rougher than Brownian motion (corresponding to $s=1$), making it challenging to reconstruct the unknown $\theta(y)$.
We obtain the true path $\theta^{\dagger}$
by sampling the coefficients $\theta_i^{\dagger}$,
$1\le i \le 100$, from the prior with $\theta_i^{\dagger}=0$, $i>100$.
The true inflow velocity $\theta^{\dagger}$ on $[-0.4,0.4]$ is shown at the left panel of Figure \ref{fig:truth_est_laminar}.
Note here negative values of $\theta(y)$ (around $y=\pm0.4, \pm0.3$) indicate backward flow, which also can be seen in the right panel of Figure \ref{fig:obs_solns_laminar}.
We solve the Laminar equation for $\nu=3\times10^{-2}$, $\beta=0.3$ on a $60\times60$ mesh 
and obtain 7 observations from the velocity field at the locations indicated by blue dots on the left panel of Figure \ref{fig:obs_solns_laminar}, contaminated with Gaussian noise of variance $\sigma^2_{obs}=10^{-4}$.
We stress here that this is a complex inverse problem due to the non-linearity of the forward PDE and the sparsity of observations. Each forward solution relies on an expensive Newton iteration 
with no clear theory about convergence of solutions 
when using different initializations.
In this experiment, 
%we empirically choose the viscosity $\nu=3\times10^{-2}$ so that the associated Reynolds number $Re$ (inversely proportional to $\nu$) favors a reasonable convergence domain.
we choose the viscosity $\nu=3\times10^{-2}$ as a compromise between reasonable convergence
rate in the nonlinear solver, which favors larger $\nu$, and obtaining
interesting flow structure, which favors smaller $\nu$.
We also adopt the perspective of using a fixed 
initial position ($\theta_i=0$ for all $i$ here) for the Newton iteration 
every time the PDE  dynamics are invoked, 
so that there is a well-defined map (on a given grid) from $\theta(y)$
to the likelihood of the observations. 
The required adjoints for gradient and metric-action (metric-vector product) are linear, and hence not too expensive to compute. 
The backflow stabilization term (in the 4th equation of \eqref{laminarjet}) involves taking
the minimum of $\bu\,\cdot\, \bn$ with $0$. This term is non-differentiable wherever $\bu\cdot \bn = 0$, and
thus the unknown-to-likelihood map is formally non-differentiable %whenever $\bu\cdot \bn = 0$.
on the set $\{{\bf x}\in \mathcal O\,;\, \bu({\bf x})\cdot \bn = u_1({\bf x}) = 0\}$.
In future work we hope to extend geometric methods to such semi-smooth maps.
%However, this non-smoothness is a rare event that occurs on sets of measure zero
%in parameter space, and appears to pose no difficulties in practice on using derivatives in geometric MCMC \citep[Rademacher's theorem (Theorem 3.1.6)][]{federer69}.
However, we believe that this non-smoothness occurs on sets of
measure zero in parameter space for the chosen PDE configuration, and hence poses
no difficulties in practice when computing derivatives in  geometric MCMC.

We run the various MCMC algorithms 
(all initialized at zero) for $1.1\times10^4$ iterations, treating the first $10^3$ samples as burn-in. The posterior is obtained 
by stopping the K-L expansion for the prior at $|I_0|=100$ and solving the PDE on a $30\times30$ mesh.
The split-methods used location-specific scales 
up to $D_0=30$.
HMC algorithms use a number of leapfrog steps randomly chosen between 1 and 4.
%
%The left panel of Figure \ref{fig:truth_est_laminar} indicates that the inflow velocity with 30 modes in K-L  expansion approximates the truth (with 100 modes) well enough, 
%so we truncate it at $D_0=30$ for the subspace in splitting methods.
%\textcolor{red}{
%We mention here that during a lot of experimental executions of the various 
%algorithms, at a few cases the PDE solver will invoke proposed MCMC states for which the Newton solvers diverge.
%This is maybe unavoidable in practice when working with complex PDE models.
%We reject such proposals with probability $1$, 
%i.e.\@ we remove these few states from the domain of the posterior.
We mention here an important practical consideration that arises
when solving this problem. For almost all proposed states within MCMC
the Newton solver converged. However with very low probability,
and in almost all the experiments we ran, situations arise in which
the proposed MCMC states led to divergence of the Newton solver. Whilst
this might be ameliorated to some extent by different initializations
of the Newton method, for reasons described above we have fixed the
initialization. We deal with the divergence of Newton method in
these situations by rejecting such proposals with probability $1$, 
i.e. we remove these low probability states from the domain of the posterior.
%
%
%In these scenarios, we retry for different proposals until 10 consecutive `bad' ones are met, leaving rejected states and continuing, which in turns introduces some bias however never happened in this experiment.
%

Unlike the previous two PDE examples,
none of the non-geometric methods 
converged to equilibrium  
due to requiring very small step-sizes 
($\mathcal{O}(10^{-4})$) to provide 
non-negligible acceptance rates. 
%Therefore, they fail to give correct posterior mean estimates as geometric methods consistently do.
The right panel of Figure \ref{fig:truth_est_laminar},
shows the posterior means as estimated 
by the various MCMC algorithms. 
As expected, the estimate does not match 
the true inflow velocity $\theta^{\dagger}$ 
in the high frequencies due to limited amount of data. 
%
%estimated with samples generated by geometric methods do not match the true inflow velocity due to data insufficiency (only 7 observations). This is different from what we observed in the previous thermal fin example with ample data.
Note that the $95\%$ credible band calculated with samples from $\infty$-mHMC is wide and covers most of the true inflow velocity (solid cyan line).
\begin{figure}[!h]
  \begin{center}
     \includegraphics[width=1\textwidth]{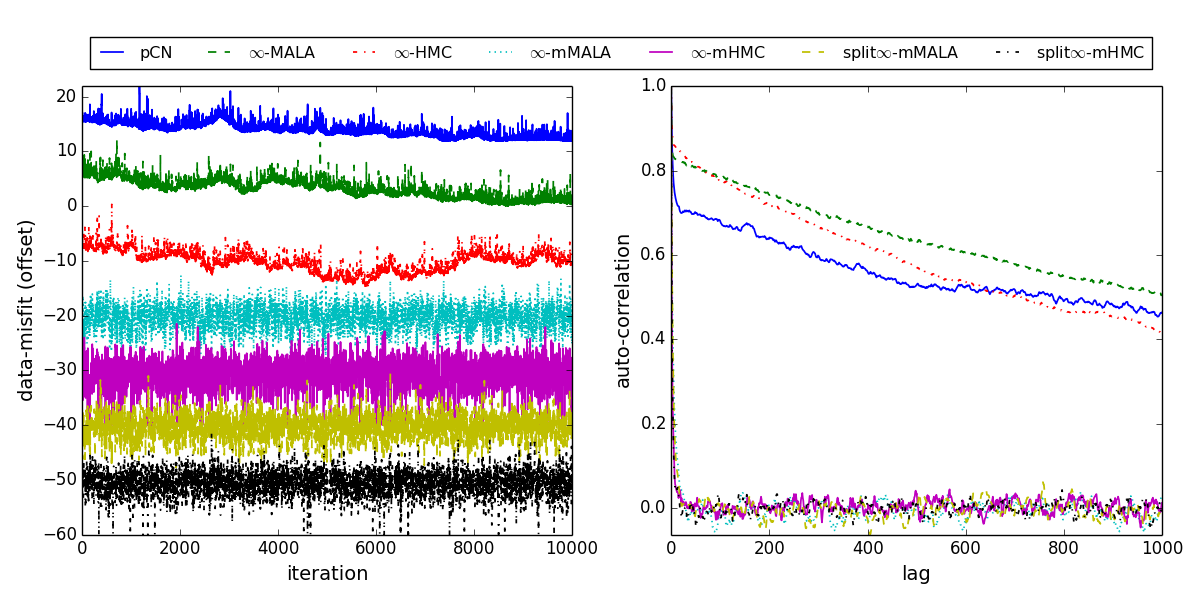}
  \end{center}
  \vspace{-0.8cm}
  \caption{Laminar jet problem:  trace plots of the data-misfit function  (left panel, values have been offset for better comparison) and the corresponding acf functions (right panel).}
  \label{fig:misfit_acf_laminar}
  \vspace{0.2cm}
\end{figure}
Figure \ref{fig:misfit_acf_laminar} illustrates the extremely high auto-correlation of samples in the case of the  non-geometric methods due to 
ineffective small step-sizes.
 The left panel indicates that 
 non-geometric methods have not converged and the right panel shows high auto-correlation even at a lag of 1000.
%
%
% latex table generated in R 3.2.4 by xtable 1.8-2 package
% Tue Mar 29 15:09:45 2016
\begin{table}[!h]\tiny
\centering
\begin{tabular}{l|cclccc}
  \hline
Method & AP & s/iter & ESS(min,med,max) & minESS/s & spdup & PDEsolns \\ 
  \hline
pCN & 0.61 & 1.29 & (5.24,   6.66,  13.33) & 4.05E-04 & 1.00 & 22004 \\ 
  $\infty$-MALA & 0.66 & 1.68 & (5.38,  6.62, 19.53) & 3.21E-04 & 0.79 & 33005 \\ 
  $\infty$-HMC & 0.72 & 3.81 & (5.41,  7.43, 16.44) & 1.42E-04 & 0.35 & 82466 \\ 
  $\infty$-mMALA & 0.68 & 5.97 & (1075.24, 2851.22, 3867.08) & 1.80E-02 & 44.47 & 2233205 \\ 
  $\infty$-mHMC & 0.58 & 13.33 & (2058.42, 3394.17, 4560.03) & 1.54E-02 & 38.13 & 5575696 \\ 
  Split $\infty$-MMALA & 0.57 & 3.66 & (1079.55, 1805.89, 2395.13) & 2.95E-02 & 72.82 & 693065 \\ 
  Split $\infty$-mHMC & 0.60 & 6.88 & (2749.63, 3974.36, 5498.03) & 4.00E-02 & 98.67 & 1721694 \\ 
   \hline
\end{tabular}
\caption{Sampling Efficiency in the laminar jet problem. Column labels are as in Table \ref{tab:groundwater}.
}  
\label{tab:laminar}
\end{table}
\begin{figure}[t]
  \begin{center}
     \includegraphics[width=1\textwidth]{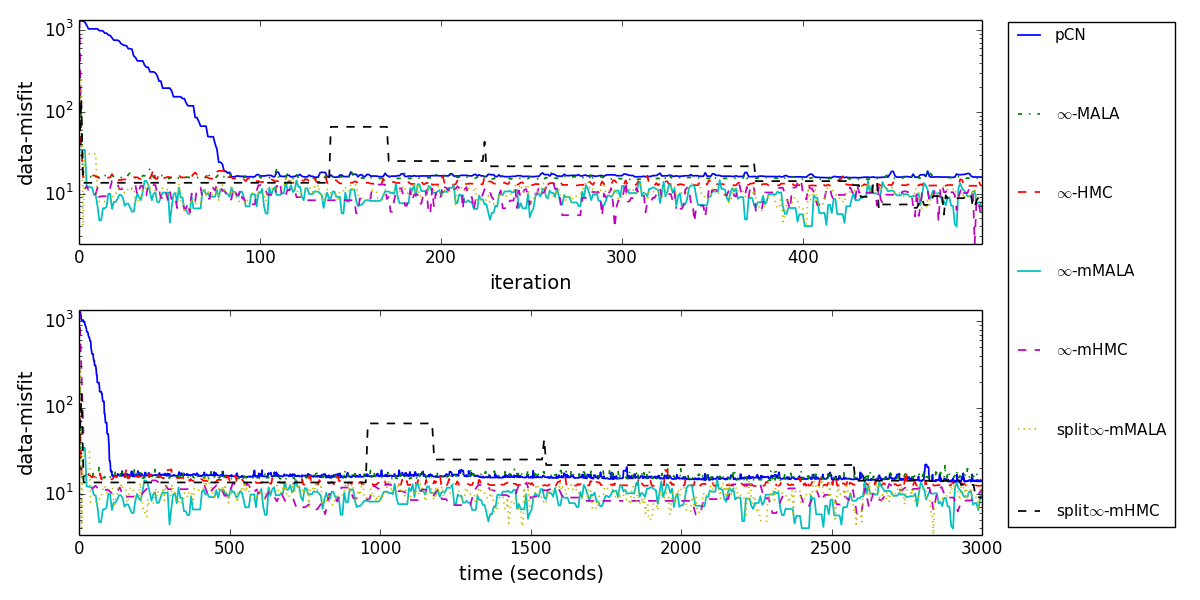}
  \end{center}
  \vspace{-0.7cm}
  \caption{Laminar jet problem: trace plots of data-misfits before burning-in for the first 500 iterations (upper panel) and the first 3000 seconds (lower panel) respectively.}
  \label{fig:misfit_laminar}
\end{figure}
Table \ref{tab:laminar} shows that the proposed geometric methods yield almost 2 orders of magnitude improvement in sampling efficiency compared with pCN.
Figure \ref{fig:misfit_laminar} illustrates the first few data-misfit values according to different sampling methods. 
The upper plot shows pCN and $\infty$-MALA 
have not reached the center of the posterior,
 while $\infty$-HMC starts to approach it after 400 iterations.
The lower plot verifies that this happens after 2500 seconds.
It is also interesting to note that unlike other geometric methods, split $\infty$-mHMC takes about 450 iterations and 3000 seconds to enter the convergent region.
All the above summaries confirm that geometric methods are advantageous in  sampling efficiency.

\begin{figure}[t]
  \begin{center}
     \includegraphics[width=.8\textwidth]{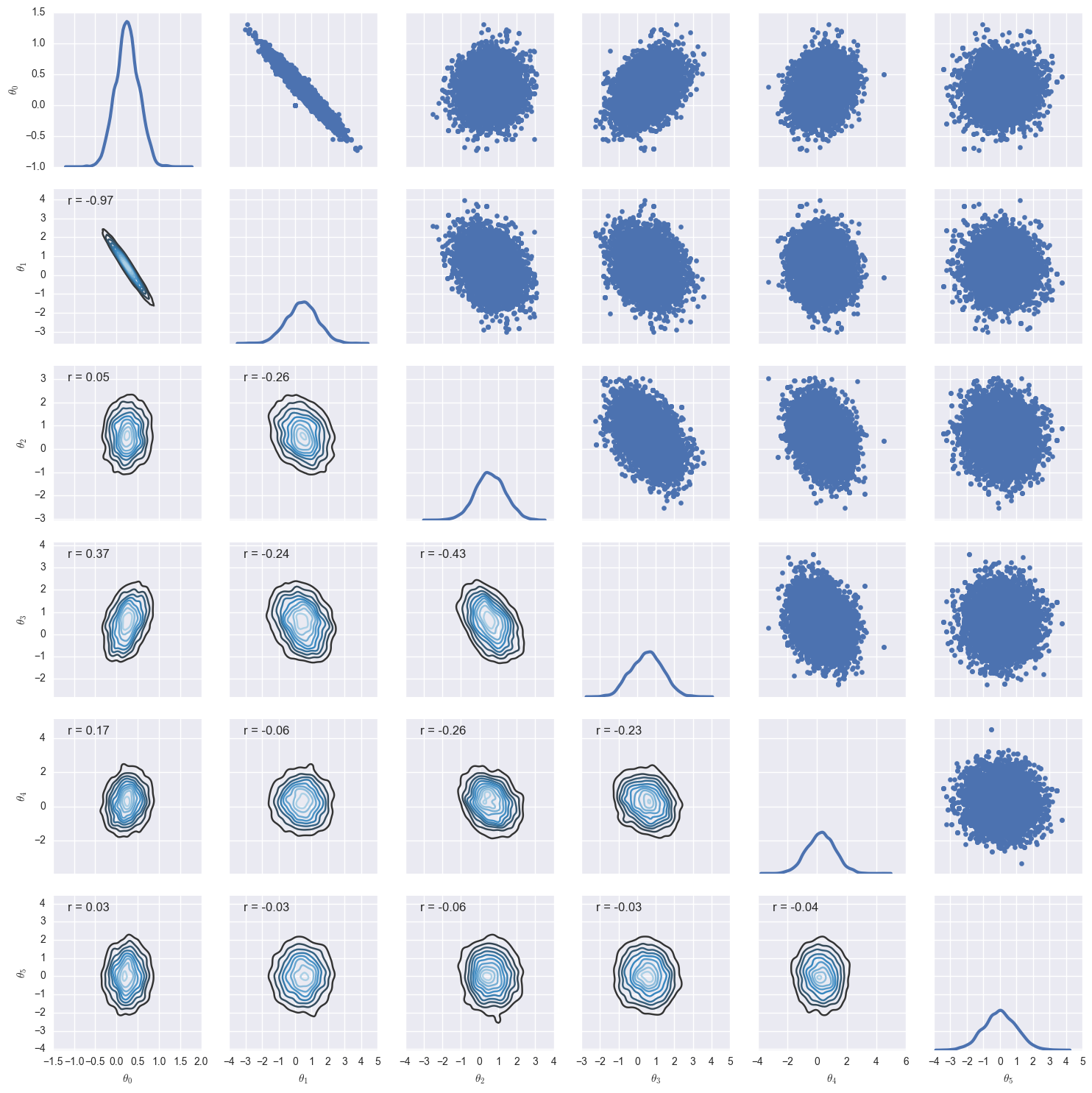}
  \end{center}
  \vspace{-0.7cm}
  \caption{Pair-wise marginal posterior distributions of the first 6 unknown frequencies of $\theta_1,\cdots,\theta_6$ in the laminar jet problem.}
  \label{fig:pairmarginal_laminar}
\end{figure}

%%%%%%%%%%%%%%%%%%%%%%%%%%%%%%%%%%%%%%%%%%

\section{Conclusion and Discussion}\label{sec:conclusion}
This paper makes a number of contributions in the development
of MCMC methods appropriate for the solution of inverse problems
involving complex forward models with unknown parameters 
defined on infinite-dimensional Hilbert spaces. In particular: we generalize
the simplified Riemannian manifold MALA of \cite{girolami11} 
from finite to infinite dimensions, and develop an HMC-version 
of the new method; we establish a connection between these 
infinite-dimensional
geometric MALA and HMC algorithms; we develop a straightforward
dimension reduction methodology which renders the methods highly
effective in practice; we demonstrate the advantages of using HMC methods,
built around ballistic motion, i.e. move with large step-size, that suppresses random walk behavior.
All the algorithms are shown to be well-defined in the infinite dimensional setting,
and three numerical  studies demonstrate the effectiveness of the new
methodology.

Some recent works have investigated incorporating information about the posterior within MCMC algorithms of mesh-independent mixing times,
see e.g.\@
\cite{law2014} and the Dimension-Independent 
Likelihood-Informed MCMC in  \citep[DILI,][]{cui16}. However, 
these approaches aim to make use of the curvature of the posterior 
at a fixed position (typically, the MAP, i.e.\@ the maximiser of the posterior).
%after adaptively 
%incorporating the second order derivative information from the prior 
%preconditioned Hessian (Gauss-Newton) in a burn-in/adpative phase, DILI as
%implemented in practice freezes the geometric information for the
%remainder of the chain, even in the low dimensional subspace.adler10
The geometric methods defined here can be more appropriate for 
distributions with more complex non-Gaussian structures.
In our laminar jet example for instance, Figure \ref{fig:pairmarginal_laminar}
illustrates the non-Gaussianity of the posterior, 
thus incorporation of information about the local geometry can be beneficial in this context.
Our methodology does not require pre-processing steps (e.g.\@ finding the MAP and the Hessian at the MAP).

As mentioned in the main text, 
simplified manifold Langevin dynamics do not preserve the target distribution
as they omit third order tensor terms, 
and can provide ineffective proposals for
highly irregular targets (e.g.\@ the banana-shaped distribution in \citep{lan14} or the banana-biscuit-doughnut in \citep{lan16}).
%manifold MCMC algorithms based on simplified (manifold) Langevin dynamics are quickly limited in capability in probing even more complicated probability distributions, like the banana-shaped distribution \citep{lan14} or banana-biscuit-doughnut distribution \citep{lan16}.
%Such  causes the sampler to lose crucial geometric correction in guiding proposals and the associated SDE does not preserve the desired target
%probability measure any more.
In such cases, the multi-step HMC generalization will also be ineffective  
as the dynamics will soon drift away from the current energy contour, 
and have small acceptance probabilities. 
%from the energy surface (probability contour) so that Metropolis correction is inevitable even if the SDE can be exactly simulated.
%Therefore the performance of methods based on simplified manifold Langevin will degrade. 
This consideration motivates a potential future development of 
infinite-dimensional MCMC methods that will incorporate full geometric information (including the third order 
tensor).
The resulting method will be based on the full Riemannian manifold Langevin dynamics (say, on $\mathbb{R}^n$) \citep{girolami11}:
\begin{equation}\label{MLangevin}
\frac{du}{dt} = -\frac{u}{2}+ \frac{g(u)}{2} + \frac{dW^*}{dt}
\end{equation}
where the Brownian motion $W^*$ on the Riemannian manifold 
with metric tensor $G:\mathbb{R}^{n}\mapsto \mathbb{R}^{n\times n}$ has the form \citep{girolami11,chung13}:
\begin{equation}\label{MBrownian}
dW^*(t)_i = |G(u)|^{-\frac12} \sum_j \pa_j [G(u)^{-1}_{ij}|G(u)|^{\frac12}] dt + [\sqrt{G(u)^{-1}} dW]_i
\end{equation}
with $1\le i\le n$,
or the corresponding Lagrangian dynamics \citep{lan14}:
\begin{equation}\label{LD}
\frac{du}{dt} = v\ , \quad \frac{dv}{dt} = -u + g^*(u)\ , 
\end{equation}
where $g^*(u)_i=g(u)_i - \tr[G(u)^{-1} \pa_i G(u)] - \Gamma_{k,l}^i(u)v^kv^l$. 
We have made use of the  Christoffel symbols
$\Gamma_{k,l}^i(u)=\frac12 g^{ij}[\pa_k g_{jl}+\pa_lg_{kj}-\pa_jg_{kl}]$, where $g_{kl}$ denotes the $(k,l)$-th element of $G(u)$.
Combining these dynamics with infinite-dimensional MCMC methodology will require some further research and  is left for future work.
Critically, one will need to carefully investigate the balance between 
improved mixing and the extra computational overheads.

Future work will aim to incorporate alternative dimension reduction techniques such as Likelihood Informed Subspaces \citep[LIS,][]{cui14,cui16} or Active Subspaces \citep[AS,][]{constantine15a,constantine15b}. Fully geometric MCMC can then be employed in the finite dimensional `intrinsic' subspace while its complement can be efficiently explored with relative simple methods like pCN or $\infty$-MALA. This merging of ideas will maybe enable us to make even better use of the geometric structure of the target within the MCMC algorithms.

\section*{Acknowledgement}
We thank Claudia Schillings for her assistance in the development of adjoint codes for the groundwater flow problem and Umberto Villa for his assistance in the development of adjoint codes for the laminar jet problem.
AB is supported by the Leverhulme Trust Prize.
MG, SL and AMS are supported by the EPSRC program grant, Enabling Quantification of Uncertainty in Inverse Problems (EQUIP), EP/K034154/1 and the DARPA funded program Enabling Quantification of Uncertainty in Physical Systems (EQUiPS), contract W911NF-15-2-0121. 
MG is also supported by an EPSRC Established Career Research Fellowship, EP/J016934/2.
PEF is supported by EPSRC grants EP/K030930/1 and  EP/M019721/1, and  a Center of Excellence grant from the Research Council of Norway to the Center for Biomedical
Computing at Simula Research Laboratory.
AMS is also supported by an ONR grant.

%%%%%%%%%%%%%%%%%%%%%%%%%%%%%%%%%%%%%%%%%%
\section*{References}

\bibliography{references}

\begin{thebibliography}{10}
\expandafter\ifx\csname url\endcsname\relax
  \def\url#1{\texttt{#1}}\fi
\expandafter\ifx\csname urlprefix\endcsname\relax\def\urlprefix{URL }\fi
\expandafter\ifx\csname href\endcsname\relax
  \def\href#1#2{#2} \def\path#1{#1}\fi

\bibitem{beskos08}
A.~Beskos, G.~Roberts, A.~Stuart, J.~Voss, {MCMC} methods for diffusion
  bridges, Stochastics and Dynamics 8~(03) (2008) 319--350.

\bibitem{beskos11}
A.~Beskos, F.~J. Pinski, J.~M. Sanz-Serna, A.~M. Stuart, {Hybrid Monte-Carlo on
  Hilbert spaces}, Stochastic Processes and their Applications 121 (2011)
  2201--2230.

\bibitem{cotter13}
S.~L. Cotter, G.~O. Roberts, A.~Stuart, D.~White, {MCMC} methods for functions:
  modifying old algorithms to make them faster, Statistical Science 28~(3)
  (2013) 424--446.

\bibitem{law2014}
K.~Law, Proposals which speed up function-space {MCMC}, Journal of
  Computational and Applied Mathematics 262 (2014) 127--138.

\bibitem{pinski15}
F.~J. Pinski, G.~Simpson, A.~M. Stuart, H.~Weber, Algorithms for
  {K}ullback--{L}eibler approximation of probability measures in infinite
  dimensions, SIAM Journal on Scientific Computing 37~(6) (2015) A2733--A2757.

\bibitem{rudolf15}
D.~Rudolf, B.~Sprungk, {On a generalization of the preconditioned
  Crank-Nicolson Metropolis algorithm}, arXiv preprint arXiv:1504.03461.

\bibitem{cui16}
T.~Cui, K.~J. Law, Y.~M. Marzouk, Dimension-independent likelihood-informed
  {MCMC}, Journal of Computational Physics 304 (2016) 109 -- 137.

\bibitem{girolami11}
M.~Girolami, B.~Calderhead, {Riemann manifold Langevin and Hamiltonian Monte
  Carlo methods}, Journal of the Royal Statistical Society, Series B (with
  discussion) 73~(2) (2011) 123--214.

\bibitem{martin12}
J.~Martin, L.~C. Wilcox, C.~Burstedde, O.~Ghattas, A stochastic {N}ewton {MCMC}
  method for large-scale statistical inverse problems with application to
  seismic inversion, SIAM Journal on Scientific Computing 34~(3) (2012)
  A1460--A1487.

\bibitem{bui12}
T.~Bui-Thanh, O.~Ghattas, D.~Higdon, Adaptive {H}essian-based nonstationary
  {G}aussian process response surface method for probability density
  approximation with application to {B}ayesian solution of large-scale inverse
  problems, SIAM Journal on Scientific Computing 34~(6) (2012) A2837--A2871.

\bibitem{cui14}
T.~Cui, J.~Martin, Y.~M. Marzouk, A.~Solonen, A.~Spantini, Likelihood-informed
  dimension reduction for nonlinear inverse problems, Inverse Problems 30~(11)
  (2014) 114015.

\bibitem{constantine15a}
P.~G. Constantine, Active Subspaces: Emerging Ideas for Dimension Reduction in
  Parameter Studies, SIAM, 2015.

\bibitem{petra14}
N.~Petra, J.~Martin, G.~Stadler, O.~Ghattas, {A computational framework for
  infinite-dimensional Bayesian inverse problems, Part II: Stochastic Newton
  MCMC with application to ice sheet flow inverse problems}, SIAM Journal on
  Scientific Computing 36~(4) (2014) A1525--A1555.

\bibitem{tierney98}
L.~Tierney, {A note on Metropolis-Hastings kernels for general state spaces},
  The Annals of Applied Probability 8~(1) (1998) 1--9.

\bibitem{neal10}
R.~M. Neal, {MCMC using Hamiltonian dynamics}, in: S.~Brooks, A.~Gelman,
  G.~Jones, X.~L. Meng (Eds.), Handbook of Markov Chain Monte Carlo, Chapman
  and Hall/CRC, 2010.

\bibitem{roberts97}
G.~O. Roberts, A.~Gelman, W.~R. Gilks, Weak convergence and optimal scaling of
  random walk {M}etropolis algorithms, The Annals of Applied Probability 7~(1)
  (1997) 110--120.

\bibitem{da14}
G.~Da~Prato, J.~Zabczyk, Stochastic equations in infinite dimensions, Vol. 152,
  Cambridge University Press, 2014.

\bibitem{duane87}
S.~Duane, A.~D. Kennedy, B.~J. Pendleton, D.~Roweth, {Hybrid Monte Carlo},
  Physics Letters B 195~(2) (1987) 216 -- 222.

\bibitem{verlet67}
L.~Verlet, Computer ``{E}xperiments" on {C}lassical {F}luids. {I}.
  {T}hermodynamical {P}roperties of {L}ennard-{J}ones {M}olecules, Phys. Rev.
  159~(1) (1967) 98--103.

\bibitem{adler10}
R.~J. Adler, The geometry of random fields, Vol.~62 of Classics in Applied
  Mathematics, Siam, 2010.

\bibitem{bogachev98}
V.~I. Bogachev, Gaussian Measures, Vol.~62 of Mathematical Surveys and
  Monographs, American Mathematical Soc., 1998.

\bibitem{dashti15}
M.~{Dashti}, A.~M. {Stuart}, The {B}ayesian approach to inverse problems, arXiv
  preprint arXiv:1302.6989 to appear in Handbook of Uncertainty Quantification,
  Editors R. Ghanem, D. Higdon and H. Owhadi, Springer, 2016.

\bibitem{lan14}
S.~Lan, V.~Stathopoulos, B.~Shahbaba, M.~Girolami, {Markov Chain Monte Carlo
  from Lagrangian Dynamics}, Journal of Computational and Graphical Statistics
  24~(2) (2015) 357--378.

\bibitem{xifara:14}
T.~Xifara, C.~Sherlock, S.~Livingstone, S.~Byrne, M.~Girolami, Langevin
  diffusions and the {M}etropolis-adjusted {L}angevin algorithm, Statistics \&
  Probability Letters 91 (2014) 14--19.

\bibitem{alnaes15}
M.~Aln{\ae}s, J.~Blechta, J.~Hake, A.~Johansson, B.~Kehlet, A.~Logg,
  C.~Richardson, J.~Ring, M.~E. Rognes, G.~N. Wells, The {FEniCS} project
  version 1.5, Archive of Numerical Software 3~(100).

\bibitem{logg12}
A.~Logg, K.-A. Mardal, G.~Wells, Automated Solution of Differential Equations
  by the Finite Element Method: The FEniCS book, Vol.~84, Springer Science \&
  Business Media, 2012.

\bibitem{farrell13}
P.~E. Farrell, D.~A. Ham, S.~W. Funke, M.~E. Rognes, Automated derivation of
  the adjoint of high-level transient finite element programs, SIAM Journal on
  Scientific Computing 35~(4) (2013) C369--C393.

\bibitem{bui14}
T.~Bui-Thanh, M.~Girolami, {Solving large-scale PDE-constrained Bayesian
  inverse problems with Riemann manifold Hamiltonian Monte Carlo}, Inverse
  Problems 30~(11) (2014) 114014.

\bibitem{dashti11}
M.~Dashti, A.~M. Stuart, Uncertainty quantification and weak approximation of
  an elliptic inverse problem, SIAM Journal on Numerical Analysis 49~(6) (2011)
  2524--2542.

\bibitem{conrad14}
P.~R. Conrad, Y.~M. Marzouk, N.~S. Pillai, A.~Smith, Asymptotically exact
  {MCMC} algorithms via local approximations of computationally intensive
  models, arXiv preprint arXiv:1402.1694.

\bibitem{hairer14}
M.~Hairer, A.~M. Stuart, S.~J. Vollmer, Spectral gaps for a
  {M}etropolis--{H}astings algorithm in infinite dimensions, The Annals of
  Applied Probability 24~(6) (2014) 2455--2490.

\bibitem{bui15}
T.~Bui-Thanh, FEM-Based Discretization-Invariant MCMC Methods for
  PDE-constrained Bayesian Inverse Problems, Department of Aerospace
  Engineering and Engineering Mechanics, Institute for Computational
  Engineering and Sciences, The University of Texas at Austin, online manual
  Edition (July 2015).

\bibitem{moghadam11}
M.~Esmaily~Moghadam, Y.~Bazilevs, T.-Y. Hsia, I.~Vignon-Clementel, A.~Marsden,
  A comparison of outlet boundary treatments for prevention of backflow
  divergence with relevance to blood flow simulations, Computational Mechanics
  48~(3) (2011) 277--291.

\bibitem{klein03}
M.~Klein, A.~Sadiki, J.~Janicka, Investigation of the influence of the
  {Reynolds} number on a plane jet using direct numerical simulation,
  International Journal of Heat and Fluid Flow 24~(6) (2003) 785--794.

\bibitem{lan16}
S.~Lan, T.~Bui-Thanh, M.~Christie, M.~Girolami, {Emulation of higher-order
  tensors in manifold Monte Carlo methods for Bayesian inverse problems},
  Journal of Computational Physics 308 (2016) 81--101.

\bibitem{chung13}
K.~L. Chung, Lectures from Markov processes to Brownian motion, Vol. 249 of A
  Series of Comprehensive Studies in Mathematics, Springer Science \& Business
  Media, 2013.

\bibitem{constantine15b}
P.~G. Constantine, C.~Kent, T.~Bui-Thanh, Accelerating {MCMC} with active
  subspaces, arXiv preprint arXiv:1510.00024.

\bibitem{beskos2013}
A.~Beskos, K.~Kalogeropoulos, E.~Pazos, Advanced {MCMC} methods for sampling on
  diffusion pathspace, Stochastic Processes and their Applications 123~(4)
  (2013) 1415--1453.

\end{thebibliography}

%%%%%%%%%%%%%%%%%%%%%%%%%%%%%%%%%%%%%%%%%%
\newpage
\begin{center}
{\Large \bf Appendix: Proofs}
\end{center}
\appendix

\section{Proof of Theorem \ref{thm:infmHMC}}\label{appdx:thm-infmHMC}

\begin{proof}
\begin{itemize}
\item[(i)]
Note that $S^{(i)} = S^{(i-1)}\circ \Psi^{-1}_{\epsilon}$, and that $\Psi_{\epsilon}=\Xi\circ R \circ \Xi$,
where $\Xi$ denotes the first or third map in (\ref{mHDdiscret}) and $R$ the second map (rotation).
Thus, we have the equality
$S^{(i)} = ((S^{(i-1)}\circ \Xi^{-1})\circ R^{-1})
\circ \Xi^{-1}$.
Notice that with this notation $\tilde{S}_0\equiv 
S_0\circ\Xi^{-1}$, so we have 
$G(u,v)= (d\tilde{S}_0/dS_0)(u,v)\equiv
(d(S_0\circ\Xi^{-1})/dS_0)(u,v)$.
 We proceed as follows:
\begin{align*}
&\frac{d((S^{(i-1)}\circ \Xi^{-1})\circ R^{-1})
\circ \Xi^{-1}}{dS_0}(u_i,v_i) = \\
&=\frac{d(((S^{(i-1)}\circ \Xi^{-1})\circ R^{-1})
\circ \Xi^{-1})}{d(S_0\circ\Xi^{-1})}(u_i,v_i)\cdot
\frac{d(S_0\circ\Xi^{-1})}{dS_0}(u_i,v_i)\\
&= \frac{d((S^{(i-1)}\circ \Xi^{-1})\circ R^{-1})}{dS_0}(\Xi^{-1}(u_i,v_i))\cdot G(u_i,v_i)\\
&=\frac{d(S^{(i-1)}\circ \Xi^{-1})}{dS_0}(R^{-1}(\Xi^{-1}(u_i,v_i)))\cdot G(u_i,v_i)\\ &= 
\frac{d(S^{(i-1)}\circ \Xi^{-1})}{d(S_0\circ \Xi^{-1})}(R^{-1}(\Xi^{-1}(u_i,v_i)))\\ &\qquad \qquad \qquad \times 
\frac{d(S_0\circ \Xi^{-1})}{dS_0}(R^{-1}(\Xi^{-1}(u_i,v_i)))
 \cdot G(u_i,v_i)\\ 
& = \frac{dS^{(i-1)}}{dS_0}(u_{i-1},v_{i-1})\cdot 
 G(u_{i-1},v_{i-1}+\tfrac{\epsilon}{2}g(u_{i-1}))
 \cdot G(u_i,v_i)\ . 
\end{align*}
\item[(ii)] It is obtained 
from somewhat cumbersome, but 
straightforward algebraic calculations.
\item[(iii)] Same as (ii).
\item[(iv)]
The proof is similar to the  one of Theorem 3.1 in \cite{beskos2013},
but we include it here for completeness.
The next position, $u^n$, of the Markov chain is 
(for a uniform $U\sim U[0,1]$): 
\begin{align*}
u^n = \delta\,[\,U\le a(u_0,v_0)\,]\,u_I + 
\delta\,[\,U>a(u_0,v_0)\,]\,u_0\ .
\end{align*}
For continuous, bounded $f:\Hi\mapsto \mathbb{R}$,
we take expectations on both sides to obtain:
\begin{align*}
\mathbb{E}\,[\,f(u^n)\,] = 
\mathbb{E}\,[\,a(\Psi_\epsilon^{-I}(u_I,v_I))\,f(u_I)\,] 
- \mathbb{E}\,[\,a(u_0,v_0)\,f(u_0)\,] + \mathbb{E}\,[\,f(u_0)\,]
\ . 
\end{align*}
Thus, it suffices to prove  $\mathbb{E}\,[\,a(\Psi_\epsilon^{-I}(u_I,v_I))\,f(u_I)\,] 
= \mathbb{E}\,[\,a(u_0,v_0)\,f(u_0)\,]$. 
Note now that (we sometimes stress the particular integrators in expectations/integrals  
by showing them explicitly as a subscript of $\mathbb{E}$ when needed):
\begin{align}
\mathbb{E}[\,f(u_I)\,a(\Psi_\epsilon^{-I}(u_I,v_I))\,]& \equiv \mathbb{E}_{\,S^{(I)}}[\,f(u_I)\,a(\Psi_{\epsilon}^{-I}(u_I,v_I))\,]
\nonumber\\
 &= 
  \mathbb{E}_{\,S}[
  \,f(u_I)\,a(\Psi_{\epsilon}^{-I}(u_I,v_I))\,
e^{\Delta H(\Psi_{\epsilon}^{-I}(u_I,v_I))}\,] \nonumber \\
&=  \mathbb{E}_{\,S}[
\,f(u_I)\,(\,1\wedge e^{\Delta H(\Psi_{\epsilon}^{-I}(u_I,v_I))})\,
]\nonumber 
\\
&= \mathbb{E}_{\,S}[\,f(u_I)
(1\wedge e^{\Delta H(\Psi_{\epsilon}^{-I}(u_I,-v_I))} ) \,] \ .\label{eq:aa}
\end{align}

(For the 2nd equation we used the density $dS^{(I)}/dS$ 
we found in (ii) together with the identity in (iii); for the last equation, notice that $(u_I,v_I)$ and $(u_I,-v_I)$ have the same law $S$.)
Now,   due to the symmetry property
$\Psi_{\epsilon}^{I} \circ M \circ \Psi_{\epsilon}^{I} = M $ of the leapfrog operator (we have denoted by $M$ the operator that `flips' the sign of the velocity), 
we have that $ \Psi_{\epsilon}^{-I}\circ M = M\circ \Psi_{\epsilon}^{I}$. Thus, we have:
\begin{align*}
\Delta &H(\Psi_{\epsilon}^{-I}(u_I,-v_I))) = \Delta H(M\circ \Psi_{\epsilon}^{I}(u_I,v_I))) \\
&=  H(M(u_I,v_I)) - H(M\circ \Psi_{\epsilon}^{I}(u_I,v_I))
 \equiv -\Delta H(u_I,v_I)\ ,
\end{align*}
where in the last equation we used the fact that $H\circ M=H$ due to the energy $H$ being quadratic in
the velocity $v$.
Using this in (\ref{eq:aa}), we have  
obtained indeed that $\mathbb{E}\,[\,a(\Psi_\epsilon^{-I}(u_I,v_I))\,f(u_I)\,] 
= \mathbb{E}\,[\,a(u_0,v_0)\,f(u_0)\,]$ 
as required.
%%
%\begin{equation}
%\label{eq:fff}
%\Exp[\,f(x_I)\,a(x_0,v_0)\,] = \Exp_{\,Q}[\,f(x_I) a(x_I,v_I)\,] \equiv \Exp[\,f(x_0)\,a(x_0,v_0)\,] \ .
%\end{equation}
%
%So, from (\ref{eq:final}), the proof is now complete.
%
\end{itemize}
\end{proof}

\section{Proof of Corollary \ref{cor:multi-step}}\label{appdx:cor-multi-step}
\begin{proof}
For the given setting of the step-sizes \eqref{step-size-setting},
we first prove the coincidence of the proposals by $\infty$-mHMC and $\infty$-mMALA,
that is,   \eqref{mHDdiscret} reduces to   \eqref{eq:infmMALA}.
Noting that $u_0=u$ and $v_0=\xi\sim \mathcal N(0,K(u))$, with the first equation of \eqref{mHDdiscret} we have:
\begin{equation*}\label{rel1}
v^- = v_0 + \tfrac{\epsilon_1}{2}\,g(u_0) \equiv v\ , 
\end{equation*}
with $v$ as defined in the $\infty$-mMALA proposal in  \eqref{eq:infmMALA}. 
Then, the definition of $\rho$ in \eqref{eq:infMALA} and the setting \eqref{step-size-setting} imply:
\begin{equation*}\label{rel2}
\rho = \frac{1-h/4}{1+h/4} = \cos\eps_2\,;\quad \sqrt{1-\rho^2} = \frac{\sqrt{h}}{1+h/4} = \sin\eps_2\ . 
\end{equation*}
Therefore, it follows from the second equation of \eqref{mHDdiscret},
that the proposal, say $u'$, of $\infty$-mHMC for one leapfrog step is equal 
to: 
\begin{equation*}
u' = u_{\eps_2} = u_0\cos\eps_2+v^-\sin\eps_2 \equiv  \rho\,u + \sqrt{1-\rho^2} \,v \ , 
\end{equation*}
with the term on the right hand side being the proposal from $\infty$-mMALA.
Since the proposals coincide, the acceptance probabilities will also be the same, as they both apply the Metropolis-Hastings ratio.

\end{proof}

\end{document}